\newcommand{\commentout}[1]{}
\newcommand{\revrev}[2]{#2}
\newcommand{\revtwo}[2]{#2}
\newcommand{\revthree}[2]{#2}
\newcommand{\revfour}[2]{#2}
\renewcommand\footnotetextcopyrightpermission[1]{}
\begin{document}

\title{Worst-case Bounds and Optimized Cache on $M^{th}$ Request Cache Insertion Policies under Elastic Conditions}

\author{Niklas Carlsson}
\affiliation{
  \institution{Link\"oping University, Sweden}
}
\email{niklas.carlsson@liu.se}

\author{Derek Eager}
\affiliation{
  \institution{University of Saskatchewan, Canada}
}
\email{eager@cs.usask.ca}

\begin{abstract}

Cloud services and other shared third-party infrastructures allow individual content providers to easily
scale their services based on current resource demands.  In this paper, we consider an individual content
provider that wants to minimize its delivery costs under the
\revrev{assumption}{assumptions}
that the
\revrev{}{storage and bandwidth}
resources it requires are
\revrev{elastic and}{elastic,}
the content provider only pays for the resources that it
\revrev{consumes.}{consumes, and costs are proportional to the resource usage.}
Within this context,
we (i) derive worst-case bounds for the optimal cost and competitive cost ratios of different classes of
{\em cache on $M^{th}$ request} cache insertion policies, (ii) derive explicit average cost expressions
and bounds under arbitrary inter-request distributions, (iii) derive explicit average cost expressions
and bounds for short-tailed
\revrev{(Erlang and deterministic), exponential,}{(deterministic, Erlang, and exponential)} and heavy-tailed (Pareto) inter-request distributions,
and (iv) present
\revrev{numerical evaluations and cost comparisons}{numeric and trace-based evaluations}
that
\revrev{reveals}{reveal}
insights into the
\revrev{policies relative costs performance.}{relative cost performance of the policies.}
Our results show that a window-based {\em cache on $2^{nd}$ request}
\revrev{}{policy}
using a single threshold optimized
to minimize worst-case costs provides good average performance across the different distributions and
the full parameter ranges of each considered distribution, making it an attractive choice for a wide
range of practical conditions where request rates of individual
\revfour{objects}{file objects}
typically are not known and
can change quickly.

\end{abstract}

\keywords{Caching, Worst-case bounds, Optimized insertion policies}

\maketitle

\fancypagestyle{firststyle}
               {
                 \fancyhf{}
                 \fancyfoot[CF]{\tiny
                   $^*$This is the authors' version of a work that was accepted for publication in {\em IFIP Performance}, Toulouse, France, Dec. 2018.
                   The final version will appear in
                   {\em Performance Evaluation}, volumes 127-128, Nov. 2018, pp. 70-92.\\
                   (\url{https://doi.org/10.1016/j.peva.2018.09.006})
                   Changes resulting from the publishing process, such as editing, corrections, structural formatting, and other
                   quality control mechanisms may not be reflected in this document.\\
                   Changes may have been made to this work
                   since it was submitted for publication.
                 }
               }
               \thispagestyle{firststyle}

\section{Introduction}

Cloud services and other shared infrastructures are becoming increasingly common.
These infrastructures are typically third-party operated and
allow individual service providers using them to easily scale their
services based on current resource demands.
In the context of content delivery,
rather than buying and operating their own dedicated servers,
many content providers are already using third-party operated
Content Distribution Networks (CDNs) and cloud-based content delivery platforms.
This trend towards using third-party providers on an on-demand basis is expected to
increase as new content providers enter the market.

Motivated by current on-demand cloud-pricing models, in this paper,
we consider an individual content provider that wants to minimize
its delivery costs under the
\revrev{assumption}{assumptions}
that
the resources it requires to deliver its service are
\revrev{{\em elastic} and}{{\em elastic},}
the content provider
\revrev{{\em only pays for the resources it consumes}.}{{\em only pays for the resources it consumes}, and {\em costs are proportional} to the resource usage.}
For the purpose of our analysis, we consider a simple cost model in which the content
provider pays the third-party service for (i) the amount of storage it consumes due to
caching close to the end-users and (ii) the amount of
\revrev{}{(backhaul)}
bandwidth that it and its end-users consume.
Under this model, we then analyze the optimized delivery costs of different
{\em cache on $M^{th}$ request} cache insertion policies when using a
\revfour{TTL-based}{Time-to-Live (TTL) based}
\revrev{caching system}{eviction policy}
in which
\revrev{objects remain in the cache after insertion
for as long as they are requested within a time interval $T$
of the most recent prior request to that same object.}{\revfour{an object}{a file object}
  remains in the cache after insertion
  until a time interval $T$ has elapsed without any requests for the object.}

\revfour{}{It is important to note that although use of a TTL eviction policy has been shown useful
in approximating the performance of a fixed-size Least-Recently-Used (LRU) cache
when the number of file objects is sufficiently
large~\cite{ChTW02, FrRR12, BDC+13, BGSC14, GaLM16,CaEa18b},
and our results may therefore provide some insight  for this case, it is not the focus of this paper.
Here we assume elastic resources, where cache eviction is not needed to make space for a new
insertion, but rather to reduce cost by removing objects that are not expected to be requested
again soon.  A TTL-based eviction policy is a good heuristic for such purposes.
Cloud service providers already provide elastic provisioning at varying granularities
for computation and storage, and in the context of trends such as serverless computing,
in-memory caching, and multi-access edge computing, we believe that support for fine-grained
elasticity may increase in the future.} 

In the past,
\revfour{\revfour{these discriminatory}{discriminatory}}{selective}
cache insertion policies have been shown
valuable in reducing cache pollution due to ephemeral content popularity and
the long tail of one-timers observed in edge networks~\cite{GALM07,ZSGK09, MaSi15,CaEa17}.
However, prior work has not bounded or optimized the worst-case delivery costs of such policies. 

In this paper, we first present novel worst-case bounds for the optimal cost and
competitive cost-ratios of different variations of these policies.
Second, we derive explicit average cost
\revfour{bounds and cost expressions}{expressions and cost ratio bounds}
for these policies
under arbitrary inter-request time distributions,
assuming independent and identically distributed request times,
as well as for
\revrev{(i) short-tailed inter-request time distributions (Erlang and deterministic),
  (ii) exponential inter-requests, and (iii) heavy-tailed inter-request time distributions (Pareto).}{specific short-tailed
  (deterministic, Erlang, and exponential) and heavy-tailed (Pareto) inter-request time distributions.}
\revtwo{}{Our analysis includes comparisons against both
  \revfour{optimal offline}{{\em optimal offline}}
  policy bounds
  and, for the case when hazard rates are increasing or constant,
  \revfour{optimal online}{{\em optimal online}}
  policy bounds; all derived here.}
Finally, we present
\revrev{numerical}{numeric and trace-based}
evaluations and provide insights
into the
\revrev{policies relative cost performance.}{relative cost performance of the policies.}

Our analysis reveals that window-based {\em cache on $M^{th}$ request}
cache insertion policies can substantially outperform policies that
\revrev{simply count the number of requests before the time of insertion.}{do not take into account the recency of
prior object requests when making cache insertion decisions.}
With window-based {\em cache on $M^{th}$ request} policies
\revrev{(i) a counter is maintained for how many times each object have been requested
within a time window $W$ of the most recent prior request for that object,
(ii) the counter is reset to one each time there was no request within the last time window $W$, and
(iii) the object is cached whenever the counter reaches $M$.}{a counter is
  \revfour{maintained for how many times each uncached object has been
  requested within a time window $W$ of the most recent request for that object (including the most recent request itself).
  The object is inserted into the cache whenever the counter reaches $M$.}{maintained for each uncached object
    that has been requested at least once within the last $W$ time units.
    A newly allocated counter is initialized to one, and the counter is
    incremented by one whenever the object is referenced within $W$ time units of its most recent previous request.
    The object is inserted into the cache whenever the counter reaches $M$.}}
Our results
\revfour{shows}{show}
that a single parameter version of this policy can be used beneficially,
in which $W=T$, and that the best worst-case bounds are achieved by selecting
the window size
\revfour{$W$}{$W=T$}
equal to the time that it takes to accumulate a cache storage cost (for that object)
equal to the remote bandwidth cost $R$ associated with a cache miss (for that object).
With these protocol settings,
the worst-case bounds of the window-based {\em cache on $M^{th}$ request} policies
have a competitive ratio of $M+1$
(compared to the {\em optimal offline} policy).
While these ratios at first may appear discouraging for larger $M$,
our average case analysis for different inter-request time distributions clearly shows substantial cost
benefits of using intermediate $M$ such as 2-4,
with the best choice depending on where in the parameter region the system operates.
For less popular objects a slightly larger $M$ (e.g., $M=4$) may be beneficial;
however, in general, window-based {\em cache on $M^{th}$ request} with $M=2$
typically provides the most consistently good average performance across the
full parameter ranges of each considered distribution.
Overall, the results show that using this policy with optimal worst-case parameter setting (i.e., $W=T=R$)
may be
attractive for
practical conditions,
where request rates of individual objects typically are not known and can change quickly.

The remainder of the paper is organized as follows.
Sections~\ref{sec:model} and~\ref{sec:policies}
present our system model and the practical
insertion policies considered, respectively.
Section~\ref{sec:worse-case} presents the
\revfour{optimal offline}{{\em optimal offline}}
policy and derives worst-case bounds for the different insertion policies.
Section~\ref{sec:general} presents cost expressions
for the
\revfour{optimal offline}{{\em optimal offline}}
bound under both arbitrary and specific distributions.
Section~\ref{sec:baselin} presents the corresponding expressions for
\revtwo{optimized baseline policies that assume}{an optimized baseline policy that assumes}
knowledge of the
\revfour{exact inter-request distributions and the current request intensity,}{precise inter-request time distribution for each object,}
and shows that this policy has the same performance
  as the
  \revfour{optimal online}{{\em optimal online}}
  policy when hazard rates are increasing or constant.
Section~\ref{sec:general-policies}
then derives general cost expressions for the practical
insertion policies, before Section~\ref{sec:results}
presents the distribution-specific expressions,
\revrev{analyzes the policies relative performance,}{analyzes the relative performance of the policies,}
and compares
\revrev{how their costs compare}{their costs}
against the
\revfour{offline optimal}{{\em offline optimal}}
and
\revrev{}{optimized}
baselines.
\revrev{}{Section~\ref{sec:multi} complements the single-file analysis results
  with both analytic and trace-based multi-file evaluations.}
Finally, Section~\ref{sec:related} discusses related work and
Section~\ref{sec:conclusion} presents our conclusions.

\section{System Model}\label{sec:model}

\revrev{Let}{Initially, let}
us consider the costs associated with
a single file object as seen at a single cache
location.
\revrev{}{(The multi-file case is considered in Section~\ref{sec:multi}.)}
Furthermore, without loss of generality,
for this object and location,
let us assume that the provider
\revrev{must pay}{pays}
(i) a (normalized) {\em storage cost} of $1$ per time unit that the file object
is stored in the cache and (ii) a {\em remote bandwidth cost} $R$ each time a request is made
to an object currently not in cache.
At these times, the
\revfour{file}{file object}
needs to be retrieved from the origin servers
\revrev{(or a different cache) and therefore result in}{(or a different cache), which results in}
additional bandwidth costs (and delivery delays).
Note that $R$ is defined as the incremental delivery cost, beyond that of delivering
the content from the cache to the client.  This latter
\revrev{}{(typically much smaller)}
baseline delivery cost is therefore
\revrev{}{policy independent and}
always incurred.  We obtain worst-case bounds on cost ratios
by assuming it to be zero.
\revrev{}{Setting it to zero also allows us to entirely focus on the policy dependent costs.} 
\revfour{}{Finally, note that a third party service's accounting for storage and
remote bandwidth costs would, in practice, be based on particular time,
size, and bandwidth granularities.  The finer-grained the accounting, the
more closely our model would correspond to the real system.}

At the time a request is made for a file object not currently in the cache,
the system must,
in an online fashion, decide
\revrev{(i) whether the file should be cached or not and (ii) for how long the object should be cached.}{whether the
  \revfour{file}{object}
  should be cached or not.}
Naturally, the total delivery cost of different caching policies
will depend substantially on the choices
\revrev{they make here}{made}
and the request patterns of consideration.

To illustrate the impact of these choices,
consider the most basic TTL-based cache policy that
inserts
\revrev{an}{a file}
object into the cache whenever a request is made for the object (and the object is not currently in the cache)
and
\revrev{stores a
  file for $T$ time units after it is requested.}{retains the
  \revfour{file}{object}
  until $T$ time units elapse with no requests.}
This policy would
\revfour{endure}{incur}
a total cost of $R+T$
if a single request is made for the
\revfour{file.}{object.}
However,
\revrev{assuming that it somehow was}{if it was}
known that the
\revfour{file}{object}
would only receive a single request,
it would be optimal to not cache the
\revfour{file}{object}
at all.
In this case, it is easy to see that the minimal delivery cost is $R$.
For this particular example,
the cost ratio between the basic TTL-based policy and the (offline) optimal
is therefore $\frac{R+T}{R}$.  In general, we want these cost ratios to be as small as possible
both for (i) worst-case
\revtwo{arrival}{request}
patterns where an adversary selects the
\revtwo{arrival}{request}
pattern
and (ii) average case scenarios with more realistic
\revtwo{arrival}{request}
patterns.
Section~\ref{sec:worse-case} and Sections~\ref{sec:general}-\ref{sec:general-policies}
provide worst-case and average-case analysis, respectively,
for different TTL-based {\em cache on $M^{th}$ request}
insertion policies (Section~\ref{sec:policies}).

\section{Insertion policies}\label{sec:policies}

In this paper,
we compare the delivery costs of different
{\em cache on $M^{th}$ request} insertion policies
\revrev{in a TTL-based system that uses the same single-threshold rule
to decide when an object should be evicted from the cache.
With this eviction rule,
an object currently in the cache remains in the cache for as long as there are new
requests to the object within $T$ time units of the most recent prior request to that object.
Whenever there has not been a request to the object within $T$ time units the object
is removed from the cache (to reduce storage costs for that object).}{when using a TTL-based eviction policy
  in which an object remains in the cache after insertion until a time interval $T$ has elapsed without any
  requests for the object.  Note that with elastic resources, eviction is not needed for making room for
  new objects, but instead is needed for reduction of storage costs.  As we show, a simple TTL rule is very effective for this purpose.}
We next describe the insertion policies considered in this paper.

\begin{itemize}

\item {\bf Always on $1^{st}$ ($T$):}
  Always cache
  \revtwo{the object if the requested object is}{a requested object if}
  not in the cache already
  and keep it in the cache until $T$ time units have passed since the most recent request.

\item {\bf Always on $M^{th}$ ($M,T$):}
  The system maintains a counter for how many times
  \revrev{each object have been requested
  since the object was most recently removed from the cache or since the start of the system. 
  Cache the request if the cache is empty for $M$ consecutive requests
  and keep it in the cache until $T$ time units have passed since the most recent request.}{each uncached object has been requested.
    When the counter reaches $M$ the object is cached, and is kept in the cache until $T$ time units have passed since the most recent
    request, at which point the object is evicted and the counter is reset to 0.}
  For $M = 1$, this corresponds to {\em always on $1^{st}$}.

\item {\bf Single-window on $M^{th}$ ($M,T$):}
  \revfour{\revrev{The system maintains a counter for how many times
  each object have been requested
  within a time window $T$ of the most recent prior request for that object.
  The counter is initialized to one the first time that a request is made to the object
  or when a request is made to an object that has not been requested within the last $T$ time units.
  The counter is incremented by one whenever the object is referenced within $T$ time units.
  Finally, when the counter reaches $M$, the object is cached.}{A counter is maintained for how many times each uncached object has been
  requested within a time window $T$ of the most recent request for that object (including the most recent request itself).  The object
  is inserted into the cache whenever the counter reaches $M$.}}{The system maintains a counter for each uncached object
    that has been requested at least once within the last $T$ time units.
    The respective
    counter is initialized to one the first time that a request is made to an object
  or when a request is made to an object that has not been requested within the last $T$ time units.
  The counter is incremented by one whenever the object is referenced within $T$ time units of its most recent previous request.
  Finally, when the counter reaches $M$, the object is cached.}
Again,
\revtwo{the object remains in the cache as long as the object is requested
  again within $T$ time units,
  and, if no new request is made within $T$ time units,
  the object is removed from the cache.}{the object remains in the cache until a time
  interval $T$ has elapsed without any requests for the object.}
  For $M$=$1$, this policy corresponds to {\em always on $1^{st}$}.

\item {\bf Dual-window on $2^{nd}$ ($W,T$):}
  \revfour{Rather than requiring $M > 2$ requests being accumulated consecutively within $T$ time units of each other,
  a tighter time threshold $W \le T$ can be
  \revtwo{used as an indicator for when the request rate may
    be sufficient to motivate cache insertion.
  To capture such
  \revrev{more aggressive}{a}
  policy, we consider a two-parameter
  {\em dual-window on $2^{nd}$} version of the {\em single-window on $2^{nd}$} policy.
  With this policy,}{used.  With the {\em dual-window on $2^{nd}$} policy,}}{This policy is similar to {\em single-window on 2$^{nd}$},
      but uses a potentially tighter time threshold $W \le T$ for determining when
      to add an object to the cache. With the {\em dual-window on $2^{nd}$} policy,}
  \revrev{the system caches the object if the
  requested object has been requested at least once within the
  last $W$ time units and keeps the object}{when an uncached object is requested it is added to the cache if there has been a
    previous request for the object within the last $W$ time units, and is kept}
  in the cache until
  $T$ time units have passed since the most recent request.  
  \revtwo{Note that this}{This}
  policy reduces to the
  basic {\em single-window on $2^{nd}$}
  \revtwo{version when}{when}
  $W = T$. 
  
\end{itemize}

\section{Worst case bounds}\label{sec:worse-case}

For this analysis we consider an arbitrary request sequence $\mathcal{A} = \{a_i\}$ for a single object with $N$ requests,
where $a_i$ is the inter-request time between requests
\revrev{$i$ and $i-1$, or the time since the start of the trace (in the case it is the first request).
  We also assume that the system starts with an empty cache.}{$i$ and $i-1$
  \revfour{($2 \le i \le N$), with $a_1$ defined as 0.}{($2 \le i \le N$).}
  We assume that the object is initially uncached.}

\subsection{Offline optimal lower bound}

\revrev{}{We first derive the cost expression for the optimal (offline) caching policy
  (across all possible policy classes; not restricted to TTL-based policies)
  for the case
  when the cache has perfect prior knowledge of the request sequence $\mathcal{A}$.}
\revtwo{Assuming that
\revrev{we start with an empty cache,}{the object is initially uncached,}
the}{The}
first request will always
\revtwo{need to endure}{incur}
a remote bandwidth cost $R$.
For each of the later requests $i$ ($2 \le i \le N$),
in the (offline) optimal case,
the object should have been cached (if not already in the cache) at the time of the $(i-1)^{st}$ request and
\revtwo{should remain in the cache at least until}{remain retained until at least}
the $i^{th}$ request, whenever $a_i < R$.
On the other hand,
\revtwo{if the inter-request time is greater that $R$ (i.e., $a_i > R$),}{if $a_i > R$,}
the object should not have been cached at the time of the $i-1^{st}$ request,
or should have been dropped from the cache (if it was already in the cache) just after serving request $i-1$.
In this case, the $i^{th}$ request should
\revtwo{instead endure}{incur}
the remote bandwidth
\revtwo{cost $R$ at the time of request $i$.}{cost $R$.}
\revtwo{Given these observations, we can formulate the following lemma regarding the (offline) optimal cost.}{The
  following lemma regarding the (offline) optimal cost follows directly from these observations.}
  
\begin{lemma}\label{lem:opt}
  \revrev{Given the above assumptions,}{Given an arbitrary request sequence $\mathcal{A}$,}
  the minimum total delivery cost of
  \revrev{any}{the}
  \revrev{(offline) optimal}{optimal offline}
  policy is:
  \begin{align}\label{eqn:general_opt}
    C_{opt}^{offline} = R + \sum_{i=2}^N \min[a_i,R].
  \end{align}
  \end{lemma}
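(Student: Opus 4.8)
The plan is to establish the lower bound by a per-request accounting argument, then exhibit a concrete (offline) policy that achieves this cost exactly, so that the minimum is pinned down from both sides.

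**First I would** set up the cost decomposition. Any caching policy, when run on $\mathcal{A}$, incurs (i) a remote bandwidth cost $R$ for each request that finds the object uncached, and (ii) a storage cost equal to the total time the object sits in the cache. Partition the timeline into the $N-1$ inter-request intervals $(t_{i-1}, t_i]$ of length $a_i$ for $2 \le i \le N$, plus the trivial interval before the first request. The first request is unavoidably a miss (the object starts uncached), contributing $R$; this gives the leading $R$ term. For each subsequent interval $i$, I would argue a local lower bound of $\min[a_i, R]$ on the cost ``attributable'' to that interval: if the object is in cache throughout $(t_{i-1}, t_i)$, the storage cost over that interval is at least $a_i$ (well, it is exactly $a_i$ if cached for the whole interval, and strictly less only if there is a miss at request $i$ or the object is evicted mid-interval); if instead the object is not continuously cached across the interval, then either request $i$ is a miss (cost $R$) or the object was evicted and re-inserted, which for an optimal policy is never beneficial. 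The clean way to phrase this is: charge interval $i$ the quantity $\min[a_i,R]$ and show every policy's total cost is at least $R + \sum_{i=2}^N \min[a_i,R]$ by checking that in either case — object retained until request $i$, or not — the combined storage-plus-miss cost localized to interval $i$ is at least $\min[a_i,R]$.

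**Then I would** show this bound is tight by describing the offline policy that, knowing $\mathcal{A}$, keeps the object cached over interval $(t_{i-1},t_i)$ precisely when $a_i < R$ and evicts it (or never inserts it) otherwise, as already sketched in the paragraph preceding the lemma. On interval $i$ this policy pays exactly $a_i$ when $a_i < R$ (storage, no miss at $i$) and exactly $R$ when $a_i \ge R$ (miss at request $i$, no storage), i.e. exactly $\min[a_i, R]$; summing and adding the initial $R$ yields \eqref{eqn:general_opt}. Combining the two directions gives $C_{opt}^{offline} = R + \sum_{i=2}^N \min[a_i,R]$.

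**The main obstacle** I anticipate is making the per-interval lower bound fully rigorous when a policy evicts and re-inserts the object within a single inter-request interval, or straddles the eviction across an interval boundary in a way that reassigns storage cost between intervals — one has to be careful that the ``localized cost'' accounting does not double-count or leak cost between adjacent intervals. The cleanest fix is to not localize storage cost artificially but instead reason globally: observe that any policy's behavior on $\mathcal{A}$ is determined by, for each $i$, whether the object is present just before request $i$ (a miss costs $R$) together with the total cached duration; then note that replacing the policy by the ``cache over interval $i$ iff $a_i < R$'' rule never increases cost on any interval, which is a straightforward exchange argument since the two options cost $a_i$ and $R$ respectively and the rule picks the smaller. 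This reduces the lemma to the observation that $\min[a_i,R]$ is by definition the better of the two local choices, with the initial $R$ forced.
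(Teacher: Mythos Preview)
Your proposal is correct and follows essentially the same approach as the paper: the paper's proof is simply the one-line remark that the result ``follows directly from the above observations,'' where those observations are exactly the per-interval local choice (cache over $(t_{i-1},t_i]$ iff $a_i<R$) that you describe as the achieving policy. Your treatment is more careful---you explicitly separate the lower-bound and achievability directions and flag the mid-interval eviction subtlety---but the underlying argument is identical.
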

\vspace{-6pt}
\revtwo{\begin{proof}
  \revrev{This proof}{The result}
  follows directly from the above observations
  regarding the
\revfour{optimal offline}{{\em optimal offline}}
  policy.
\end{proof}}{}

\revrev{}{Lemma~\ref{lem:opt} provides a fundamental {\em offline bound} for all caching policies.}
We next
\revrev{use equation (\ref{eqn:general_opt}) of Lemma~\ref{lem:opt} to derive}{derive}
worst-case bounds for the various online policies outlined
in Section~\ref{sec:policies}.

\subsection{Always on $1^{st}$ ($T$)}

For an arbitrary request sequence $\mathcal{A}$,
this (online) policy
\revfour{endures}{incurs}
a total delivery cost equal to:
\vspace{-4pt}
\begin{align} \label{eqn:general_k1}
  C^{always}_{M=1,T} = R + T + \sum_{i=2}^N x_i,
\end{align}
\vspace{-8pt}
where
\begin{align} \label{eqn:general_k1_xi}
  x_i = \left\{ \begin{array}{ll}
    T+R, & \textrm{if}~a_i > T\\
    a_i, & \textrm{otherwise}.\\
  \end{array}\right.
\end{align}
Here, and throughout the paper,
we use the superscript
\revrev{}{on the cost $C$}
to indicate the class of insertion policy,
the subscript to indicate the parameters being used by the policy,
and potential parameter assignment to indicate potential special cases considered.
\revrev{}{In equation (\ref{eqn:general_k1}), the $R$ term corresponds to the cost of retrieving
  a copy of the
  \revfour{file}{object}
  to serve the first request in the sequence and the $T$ term corresponds to the cache storage cost
  \revfour{endured}{incurred}
  after the last request.
  For requests $2 \le i \le N$, equation (\ref{eqn:general_k1_xi})
  then takes into account whether request $i$ occurs within $T$ of the prior request
  \revtwo{(at which point an extra storage cost of $a_i$ is added)}{(implying an additional storage cost of $a_i$)}
  or the object has been removed from the cache prior to the
  \revtwo{request.  For this second case,
  the cache endured a storage cost $T$ before the object was released from the
  cache and a bandwidth cost $R$ is needed to retrieve a new copy.}{request
    (implying an additional storage cost $T$ before the object was evicted
    and a bandwidth cost $R$ to retrieve a new copy).}}
Given equations
\revrev{(\ref{eqn:general_opt}) and (\ref{eqn:general_k1}),}{(\ref{eqn:general_opt})-(\ref{eqn:general_k1_xi}),}
it is now possible to show the following
theorem.

\begin{theorem}\label{thm:always-1st}
  The best (optimal) competitive ratio using
  {\em always on $1^{st}$} is achieved with $T=R$ and is equal to 2.
  More specifically,
  \begin{align}
    \max_{\mathcal{A}} \frac{C^{always}_{M=1,T=R}}{C_{opt}^{offline}} \le \max_{\mathcal{A}} \frac{C^{always}_{M=1,T}}{C_{opt}^{offline}}
  \end{align}
  for all $T$, and $\frac{C^{always}_{M=1,T=R}}{C_{opt}^{offline}} \le 2$ for
  all possible
  sequences $\mathcal{A} = \{a_i\}$.
\end{theorem}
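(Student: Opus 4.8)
The plan is to bound $C^{always}_{M=1,T}/C_{opt}^{offline}$ by a term-by-term comparison using the mediant inequality: for positive reals, $\frac{\sum_j p_j}{\sum_j q_j}\le\max_j\frac{p_j}{q_j}$. We may assume all inter-request times $a_i$ ($2\le i\le N$) are positive, since indices with $a_i=0$ contribute $0$ to both sums and can be dropped. Writing $C^{always}_{M=1,T}=(R+T)+\sum_{i=2}^N x_i$ and $C_{opt}^{offline}=R+\sum_{i=2}^N\min[a_i,R]$, and pairing the leading terms $R+T$ and $R$ with one another and $x_i$ with $\min[a_i,R]$ for each $i$, the mediant inequality gives
\[
\frac{C^{always}_{M=1,T}}{C_{opt}^{offline}}\ \le\ \max\Bigl(\tfrac{R+T}{R},\ \max_{2\le i\le N}\tfrac{x_i}{\min[a_i,R]}\Bigr),
\]
so it suffices to control the per-request ratios $x_i/\min[a_i,R]$.

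Next I would do a short case analysis on the position of $a_i$ relative to $T$ and $R$, using the definition~(\ref{eqn:general_k1_xi}) of $x_i$: the ratio is $1$ when $a_i\le\min(T,R)$; it is $(R+T)/R=1+T/R$ when $a_i>\max(T,R)$; it is $a_i/R\le T/R$ when $R<a_i\le T$ (possible only if $T>R$); and it is $(R+T)/a_i<(R+T)/T=1+R/T$ when $T<a_i\le R$ (possible only if $T<R$). Hence every per-request ratio, as well as the leading ratio $(R+T)/R$, is at most $1+\max(T/R,R/T)$, and therefore $C^{always}_{M=1,T}/C_{opt}^{offline}\le 1+\max(T/R,R/T)$ for every sequence $\mathcal{A}$. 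Specializing to $T=R$ yields the bound $C^{always}_{M=1,T=R}/C_{opt}^{offline}\le 2$ of the theorem (the degenerate case $N=1$ gives exactly $(R+T)/R=2$).

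It then remains to show this bound is best possible, which pins down the worst-case ratio as a function of $T$. For $T\ge R$ the single-request sequence ($N=1$) attains $(R+T)/R=1+T/R=1+\max(T/R,R/T)$ exactly; for $T=R$ this is $2$ (also realized, e.g., by $N=2$ with $a_2>R$). For $T<R$ I would use the family with $N$ requests and $a_2=\cdots=a_N=a$ for a fixed $a\in(T,\tfrac{T+R}{2})$, for which every $x_i=R+T$ and every $\min[a_i,R]=a$, so the ratio equals $\frac{N(R+T)}{R+(N-1)a}\to\frac{R+T}{a}>2$ as $N\to\infty$. Thus $\max_{\mathcal{A}}C^{always}_{M=1,T}/C_{opt}^{offline}\ge 2$ for all $T$, with equality precisely at $T=R$, which is exactly the first displayed inequality; combined with the previous paragraph, the optimal competitive ratio is $2$, achieved at $T=R$.

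The argument is essentially bookkeeping; the only delicate point to flag is that for $T<R$ the value $1+R/T$ is a strict supremum over finite sequences and is not attained — every per-request ratio and the leading ratio lie strictly below it, so the mediant bound is strict — meaning the ``$\max$'' on the right-hand side of the first displayed inequality must be read as ``$\sup$'' in that regime. This does not affect the statement, since we only need that right-hand side to be at least $2$, which the explicit sequence family witnesses, and for $T\ge R$ it is a genuine maximum.
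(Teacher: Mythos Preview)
Your proof is correct and takes a cleaner route than the paper's. The paper partitions the requests $2\le i\le N$ into sets according to where $a_i$ falls relative to $T$ and $R$ (three sets for each of the cases $T\le R$ and $R\le T$), writes out the exact numerator and denominator as sums over these sets, and then pushes through a chain of algebraic inequalities (dropping nonnegative terms common to numerator and denominator, bounding $\sum_{i\in S'}a_i$ by $|S'|\cdot T$, and a derivative argument) to arrive at the bounds $(R+T)/T$ and $(R+T)/R$ in the two regimes. You short-circuit all of that bookkeeping with the mediant inequality, reducing immediately to the per-request ratio $x_i/\min[a_i,R]$; your four-case analysis on $a_i$ is exactly the same case distinction the paper's set partition encodes, but handled pointwise rather than in aggregate. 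The result is the same bound $1+\max(T/R,R/T)$, obtained more transparently. Your tightness arguments are also fine and slightly more explicit than the paper's (which only exhibits the $T=R$ worst case, namely requests spaced by $T+\epsilon$); your observation that for $T<R$ the competitive ratio is a non-attained supremum is a nice point the paper does not raise.
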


\begin{proof}
  \revtwo{For this proof, we}{We}
  consider an arbitrary
  \revrev{arrival pattern}{request sequence}
  \revfour{$\mathcal{A}$}{$\mathcal{A}$ with $N$ requests}
  and
  then bound the cost ratio based on the worst-case patterns that an adversary could create.
  For this and the
  \revrev{proceeding}{following}
  proofs we note that the first request always must
  \revfour{endure a (minimum)}{incur a remote bandwidth}
  cost $R$ and then focus on the
  worst-case pattern of the remaining
  \revfour{$|\mathcal{A}| - 1$}{$N-1$}
  requests.
  
  Case $T \le R$:
  For the remaining
  \revfour{$|\mathcal{A}| - 1$}{$N-1$}
  requests,
  let us define the following sets:
  $S = \{i | a_i \le T \}$,
  $S' = \{i | T < a_i \le R \}$, and
  $S'' = \{i | R < a_i \}$.
  Note that the set $S$ consists of those requests that
  would result in cache hits, if using {\em always on $1^{st}$},
  while the requests in the other sets would result in cache misses.
  Also, note that the requests in both set $S$ and $S'$ would result in
  the {\em optimal offline} policy retrieving the object from the local cache.
  Now, for any
  \revrev{arrival pattern}{request sequence}
  $\mathcal{A}$,
  we have the following
  \revfour{conditions:}{relations:}
  \begin{align}\label{eqn:always-one-proof-step1}
    \frac{C^{always}_{M=1,T}}{C_{opt}^{offline}}
    & = \frac{R + \sum_{i \in S} a_i + (|S'|+|S''|) (T+R) + T}{ R + \sum_{i \in S} a_i + \sum_{i \in S'} a_i + |S''| R} \nonumber\\
    & \le \frac{(R + T)(1+|S''|) + (R + T)|S'|}{ R(1+|S''|) + \sum_{i \in S'} a_i} \nonumber\\
    & \le \frac{(R + T)(1+|S''|) + (R + T)|S'|}{ R(1+|S''|) + |S'| T }
    \le \frac{R + T}{T}.
  \end{align}
  \revrev{In each of the three steps}{To establish the three inequalities in (\ref{eqn:always-one-proof-step1})}
  we have used that:
  (i) $\frac{X+\sum_{i \in S} a_i}{X(1-\epsilon)+\sum_{i \in S} a_i} \le \frac{X}{X(1-\epsilon)}$
  for $0 \le \epsilon \le 1$ and $\sum_{i \in S} a_i \ge 0$,
  (ii) $T \le a_i$ when $i \in S'$, and
  (iii) $\frac{d}{dx} (\frac{R+T}{R(1-x) + xT}) = - \frac{(R+T)(T-R)}{(R+x(T-R))^2} \ge 0$
  \revrev{when $T \le R$.}{when $T \le R$, respectively.}
  Clearly, since $\frac{R + T}{T}$ is monotonically decreasing for the range $0 \le T \le R$,
  the (above) worst-case bound is tightest when $T \rightarrow R$ (equal to 2).

  Case $R \le T$:
  Let us define the following sets for
  \revfour{$2 \le i \le |\mathcal{A}|$:}{$2 \le i \le N$:}
  $G = \{i | a_i < R \}$,
  $G' = \{i | R \le a_i \le T \}$, and
  $G'' = \{i | T < a_i \}$.
  Here, sets $G$ and $G'$ consist of those requests that
  would result in cache hits with {\em always on $1^{st}$},
  but only the requests in set $G$ would result in cache hits with the {\em optimal offline} policy. 
  Using
  \revfour{similar inequalities as above,}{a similar approach as for the first case,}
  we obtain the following:
  \begin{align}
    \frac{C^{always}_{M=1,T}}{C_{opt}^{offline}}
    & = \frac{R + \sum_{i \in G} a_i + \sum_{i \in G'} a_i + |G''| (T+R) + T}{ R + \sum_{i \in G} a_i + (|G'|+|G''|) R} \nonumber\\
    & \le \frac{(R + T)(1+|G''|) + \sum_{i \in G'} a_i}{ R(1+|G''|) + |G'| R} \nonumber\\
    & \le \frac{(R + T)(1+|G''|) + T|G'|}{ R(1+|G''|) + |G'| R }
    \le \frac{R + T}{R}.
  \end{align}
  Here,
  \revfour{\revrev{step}{inequality}
    (i) is}{the first inequality is}
  derived in the same way as the first inequality in
  \revfour{(\ref{eqn:always-one-proof-step1})
  \revrev{step}{inequality}
  (ii) uses}{(\ref{eqn:always-one-proof-step1}), the second inequality uses the fact}
  that $a_i \le T$ when $i \in G'$,
  and
  \revfour{step (iii) uses that}{the third inequality uses the fact that}
  $\frac{d}{dx} (\frac{(R+T)(1-x)+Tx}{R}) = -1 < 0$.
  Now, since $\frac{R + T}{R}$ has its minimum in the range $R \le T$ when $T=R$,
  we have that $T=R$ provides the tightest bound (equal to 2).

  Finally, inserting $T=R$ into either of the two bounds,
  we obtain the worst-case bound of 2.
  The bound is tight and is achieved,
  \revfour{e.g.,}{for example,}
  when requests are
  evenly spaced by $T+\epsilon$, for some $\epsilon > 0$.
  In this case, $|S| = |G| = |G'| = 0$ and $\frac{C^{awlays}_{M=1,T=R}}{C_{opt}^{offline}} = \frac{T+R}{R} = 2$.
\end{proof}

\subsection{Always on $M^{th}$ ($M,T$)}

By generalizing the techniques used to prove the worst-case properties
of {\em always on $1^{st}$} to consider additional counter
\revrev{states that the cache can be in before the file can enter the cache,}{states,}
it is possible to prove the following theorem.

\begin{theorem}\label{thm:always-mth}
  The best (optimal) competitive ratio using
  the {\em always on $M^{th}$} policy
  is achieved with $T=R$ and is equal to $M+1$.
  More specifically,
  \begin{align}
    \max_{\mathcal{A}} \frac{C^{always}_{M,T=R}}{C_{opt}^{offline}} \le \max_{\mathcal{A}} \frac{C^{always}_{M,T}}{C_{opt}^{offline}}
  \end{align}
  for all $T$, and $\frac{C^{always}_{M,T=R}}{C_{opt}^{offline}} \le M+1$ for
  all possible
  sequences $\mathcal{A} = \{a_i\}$.
\end{theorem}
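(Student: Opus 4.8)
The plan is to mirror the proof of Theorem~\ref{thm:always-1st}, with one new ingredient: a counting bound that controls the ``warm-up'' misses introduced by the counter. First I would write down the analogue of (\ref{eqn:general_k1})--(\ref{eqn:general_k1_xi}) for {\em always on $M^{th}$}. Under this policy each request $i$ with $2\le i\le N$ is of exactly one of three kinds: a \emph{hit}, where the object is in the cache when the request arrives (which forces $a_i\le T$) and the only cost incurred since request $i-1$ is the storage cost $a_i$; a \emph{warm-up miss}, where the object is uncached with counter value below $M$, costing $R$ (no storage, since the object is not cached during warm-up and time gaps neither evict nor reset the counter in that state); and an \emph{eviction miss}, where the object was cached just after request $i-1$ but $a_i>T$, so it was dropped before request $i$, costing $R+T$ ($R$ to re-fetch plus the $T$ of trailing storage paid before the eviction). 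Writing $H,W,V$ for the respective index sets and $\delta\in\{0,1\}$ for whether the object is cached after request $N$,
\[
C^{always}_{M,T}=R+T\delta+\sum_{i\in H}a_i+|W|\,R+|V|(R+T),
\]
while Lemma~\ref{lem:opt} gives $C_{opt}^{offline}=R+\sum_{i\in H}a_i+\sum_{i\in W}\min[a_i,R]+\sum_{i\in V}\min[a_i,R]$ (using $\min[a_i,R]=a_i$ on $H$).

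The step I expect to be the main obstacle is the inequality $|W|\le(M-1)(|V|+1)$. The eviction misses are precisely the requests at which a new counting cycle starts, so the sequence decomposes into $|V|+1$ cycles, and within a cycle only the $M-1$ requests that push the counter from $1$ up to $M$ are warm-up misses (a cycle that ends before the counter reaches $M$ contributes even fewer). This bound is exactly what keeps the competitive ratio finite: the denominator contributions $\sum_{i\in W}\min[a_i,R]$ of the warm-up misses can be driven arbitrarily close to $0$ by clustering those requests, so their cost must be charged against the $|V|$ eviction misses (each contributing at least $T$ to $C_{opt}^{offline}$ when $T\le R$, and exactly $R$ when $T\ge R$) together with the single leading $R$; the factor $M-1$ is the price per cycle and the ``$+1$'' the price of the first cycle.

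Given this, the rest is the two-case computation of Theorem~\ref{thm:always-1st} almost verbatim. For $T\le R$: lower-bound the denominator by $R+|V|\,T+\sum_{i\in H}a_i$ (discarding $\sum_{i\in W}\min[a_i,R]\ge0$ and using $\min[a_i,R]\ge T$ for $i\in V$), use $\delta\le1$ and $|W|\le(M-1)(|V|+1)$ to bound the numerator by $(MR+T)(1+|V|)+\sum_{i\in H}a_i$, cancel the common $\sum_{i\in H}a_i$ using the same monotonicity as step~(i) of (\ref{eqn:always-one-proof-step1}), and observe that $\frac{(MR+T)(1+|V|)}{R+|V|T}$ is nondecreasing in $|V|$ when $T\le R$, hence at most $\frac{MR+T}{T}$. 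For $R\le T$: split $H$ into the requests with $a_i<R$ and those with $R\le a_i\le T$, run the $G,G',G''$-style manipulation, and obtain the bound $\frac{MR+T}{R}$. Both $\frac{MR+T}{T}$ and $\frac{MR+T}{R}$ equal $M+1$ at $T=R$, the former is decreasing on $(0,R]$ and the latter increasing on $[R,\infty)$, so $T=R$ minimizes the worst case, and there every sequence satisfies $C^{always}_{M,T=R}/C_{opt}^{offline}\le M+1$. (For $M=1$ these reduce to $\frac{R+T}{T}$ and $\frac{R+T}{R}$, recovering Theorem~\ref{thm:always-1st}.)

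Finally, to get the ordering statement I would verify that the bounds above are (asymptotically) attained, so $T=R$ is genuinely optimal rather than an artifact of the estimates. Take $\mathcal{A}$ to be $N/M$ bursts of $M$ nearly simultaneous requests each, with consecutive bursts separated by a gap slightly larger than $T$: every request of a burst is a miss during warm-up and each burst leaves a trailing storage $T$ before eviction, so {\em always on $M^{th}$} pays about $MR+T$ per burst, whereas $C_{opt}^{offline}$ pays only about $R$ per burst when the separating gap exceeds $R$ and about the gap itself otherwise. Letting $N\to\infty$ and the gap tend to its threshold gives ratio $\to\frac{MR+T}{R}$ for $T\ge R$ and $\to\frac{MR+T}{T}$ for $T\le R$; in particular $M+1$ is approached at $T=R$ (bursts separated by gaps just above $R$), and at every other $T$ the supremum over $\mathcal{A}$ is at least $M+1$, which yields $\max_{\mathcal{A}}C^{always}_{M,T=R}/C_{opt}^{offline}\le\max_{\mathcal{A}}C^{always}_{M,T}/C_{opt}^{offline}$ for all $T$.
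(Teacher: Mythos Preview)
Your proof is correct and follows essentially the same route as the paper's: a two-case split on $T\lessgtr R$, a counting bound relating the warm-up misses to the number of eviction cycles, and tightness via bursts of $M$ nearly simultaneous requests separated by gaps just over $\max[T,R]$. The paper reaches the same intermediate bounds $\frac{MR+T}{T}$ and $\frac{MR+T}{R}$ through a finer decomposition into $2(2M+1)$ sets indexed by counter position; your $H,W,V$ packaging with the explicit inequality $|W|\le(M-1)(|V|+1)$ is a cleaner way to encode the same structure.
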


A proof for Theorem~\ref{thm:always-mth} is provided in the Appendix.
Similar to the proof for {\em always on $1^{st}$},
the proof identifies sets of inter-request times $a_i$ based on
differences and similarities in how the
{\em always on $M^{th}$} policy and the {\em optimal offline} policy treat these sets of requests.
In particular, sets are defined based on the states of the {\em always on $M^{th}$} policy
\revrev{(depending on the number of requests to the object since the object was removed from the cache
  most recently or the request sequence started, with states associated with counts of $M$ or higher,
  all aggregated into an ``in-cache'' state)}{(depending on the object's caching status and, if uncached, counter value)}
and how $a_i$ relates to $T$ and $R$.
This generalizes the number of (mutually exclusive) sets of requests
from $2 \times 3$ for the {\em always on $1^{st}$} policy,
to $2 \times (2M+1)$ for the general {\em always on $M^{th}$} policy,
where $2M+1$ sets are needed for each of the two cases when $T \le R$ and $R \le T$, respectively.

Using this proof method, we also identify a request pattern that
shows that the bound is tight.  In particular, the worst-case bound is
achievable by a request pattern in which requests occurs in batches of
$M$ requests,\footnote{Here, we consider a ``batch'' to consist of sufficiently closely spaced requests
  that
  \revrev{there is negligible inter-request times between them,}{the inter-request times are negligible,}
  but where 
  the requests still are treated as individual requests, and the cache still needs to make individual decisions whether to cache
  or not to cache the object at the time of each of these requests.}
\revrev{where each such batch is spaced}{and the batches are separated}
by more than
\revrev{$R$ time units (when $T \le R$)
or more than $T$ (when  $R \le T$); whichever is greater.}{$\max[R, T]$ time units.}
To see this, let us consider the $T \le R$ case.
In this case,
with the above
\revrev{arrival pattern,}{request sequence,}
in each batch cycle,
the {\em always on $M^{th}$} policy downloads the object $M$ times,
finally stores a copy at the time of the $M^{th}$ request,
and then keeps it in the cache for $R$ time units.
This pattern results in a total cost of $(M+1)R$ per batch.
In contrast, the {\em optimal offline} policy downloads a single copy (at cost $R$),
serves all $M$ requests using this copy, and then
\revrev{instantaneously deletes the copy (to avoid storage costs).}{immediately deletes the copy, incurring negligible storage costs.}
The
\revfour{arguments}{argument}
for the $R \le T$ case is analogous.

\subsection{Single-window on $M^{th}$ ($M,T$)}

While the number of counter states to consider is the same
for {\em single-window on $M^{th}$} as for {\em always on $M^{th}$},
the possible state transitions when the counter is below $M$
\revrev{differ, as the counter is reset each time there is no request within a window $T$.}{differ
  (e.g., counter is reset each time there is no request within a window $T$).}
To account for this,
\revtwo{the worst-case proof of the {\em single-window on $M^{th}$} policy}{our proof of the following
  theorem for the {\em single-window on $M^{th}$} policy}
requires $2 \times (M-1)$ additional sets to be defined
($M-1$ for when $T \le R$ and $M-1$ for when $R \le T$).
\revtwo{While this somewhat complicates the proof,
it is possible to prove the following theorem.}{}

\begin{theorem}\label{thm:cache-2nd-two}
  The best (optimal) competitive ratio using
  the {\em single-window on $M^{th}$} policy
  is achieved with $T=R$ and is equal to $M+1$.
  More specifically,
  \begin{align}
    \max_{\mathcal{A}} \frac{C^{window}_{M,T=R}}{C_{opt}^{offline}} \le \max_{\mathcal{A}} \frac{C^{window}_{M,T}}{C_{opt}^{offline}}
  \end{align}
  for all $T$, and $\frac{C^{window}_{M,T=R}}{C_{opt}^{offline}} \le M+1$ for
  all possible
  sequences $\mathcal{A} = \{a_i\}$.
\end{theorem}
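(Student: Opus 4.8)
The plan is to prove Theorem~\ref{thm:cache-2nd-two} by the two-part argument already used for Theorems~\ref{thm:always-1st} and~\ref{thm:always-mth}. In part~(i) I would show that for \emph{every} request sequence $\mathcal{A}$ the cost ratio $C^{window}_{M,T}/C_{opt}^{offline}$ is at most $\frac{MR+T}{T}$ when $T\le R$ and at most $\frac{MR+T}{R}$ when $R\le T$; both bounds equal $M+1$ at $T=R$ and reduce to the $\frac{R+T}{T}$ and $\frac{R+T}{R}$ bounds of Theorem~\ref{thm:always-1st} when $M=1$. In part~(ii) I would exhibit, for each value of $T$, a family of request sequences whose cost ratio tends to (or exceeds) $M+1$. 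Together these give $\max_{\mathcal{A}}C^{window}_{M,T=R}/C_{opt}^{offline}=M+1$ (upper bound from (i) at $T=R$, matched by (ii)) and, since the worst-case ratio is at least $M+1$ for every $T$ by (ii), also $\max_{\mathcal{A}}C^{window}_{M,T=R}/C_{opt}^{offline}\le\max_{\mathcal{A}}C^{window}_{M,T}/C_{opt}^{offline}$ for all $T$ --- i.e.\ the two assertions of the theorem.

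For part~(i) I would first record a closed-form cost expression for \emph{single-window on $M^{th}$} analogous to (\ref{eqn:general_k1})--(\ref{eqn:general_k1_xi}). Let the state just before request $i$ be either ``counter $=c$'' for some $c\in\{1,\dots,M-1\}$ (object uncached, with $c$ requests accumulated so far within $T$ of each other) or ``cached''. Then $x_i=R$ on every request made in an uncached state (it is a miss), $x_i=a_i$ on a request made in the cached state with $a_i\le T$ (object retained), and $x_i=T+R$ on a request made in the cached state with $a_i>T$ (object stored for $T$, evicted, re-fetched, counter reset to $1$), and there is a trailing storage term of at most $T$. I would then, for $T\le R$, partition $\{2,\dots,N\}$ into $3M$ classes indexed by the state before request $i$ and by which of $[0,T]$, $(T,R]$, $(R,\infty)$ contains $a_i$ --- the $M-1$ extra classes per case relative to \emph{always on $M^{th}$} are exactly those needed because, when the counter is below $M$, a gap larger than $T$ resets it to $1$ instead of advancing it. Summing the cost expression and the formula of Lemma~\ref{lem:opt} over this partition, I would bound the ratio by the three elementary steps used in the proof of Theorem~\ref{thm:always-1st}: the inequality $\frac{X+Y}{X(1-\epsilon)+Y}\le\frac{X}{X(1-\epsilon)}$ for $0\le\epsilon\le1$ and $Y\ge0$ (to discard the $\min[a_i,R]$ terms that the optimal policy contributes favorably in numerator and denominator alike), the threshold comparisons $a_i\ge T$ on the $(T,R]$ classes and $a_i\le T$ on the $[0,T]$ classes, and monotonicity of the resulting one-variable expression in $T$, collapsing everything to $\frac{MR+T}{T}$. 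The case $R\le T$ is symmetric, with the partition keyed to $[0,R]$, $(R,T]$, $(T,\infty)$ and the same steps yielding $\frac{MR+T}{R}$.

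For part~(ii) I would use batched sequences: groups of $M$ requests each spaced within a negligible $\delta<T$ of one another, consecutive groups separated by a gap $g>T$. Because $g>T$, the counter is reset between groups, so within each group \emph{single-window on $M^{th}$} behaves exactly like \emph{always on $M^{th}$}: it misses on all $M$ requests of the group, caches the object on the $M^{th}$, and then stores it for $T$, a per-group cost of $MR+T$. If $g>\max[R,T]$ the optimal offline policy (Lemma~\ref{lem:opt}) also drops its copy between groups and pays $\approx R$ per group, so the ratio tends to $\frac{MR+T}{R}=M+\frac{T}{R}$; this is $M+1$ at $T=R$ and exceeds $M+1$ for $T>R$. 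For $T<R$ I would instead take $T<g<R$, so the optimal policy retains a single copy across groups and pays only $\approx g$ per group, giving a ratio tending to $\frac{MR+T}{g}$, which can be pushed above $M+1$ by letting $g\to T^{+}$ (valid precisely because $\frac{MR+T}{M+1}>T\iff T<R$). Hence for every $T$ the supremum of the cost ratio over $\mathcal{A}$ is at least $M+1$, with equality forced only at $T=R$.

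I expect the main obstacle to be the bookkeeping in part~(i): correctly enumerating the reachable (state, $a_i$-interval) classes for \emph{single-window on $M^{th}$}, tracking how the counter-reset transitions redistribute requests among the classes for counter values below $M$ in a way that did not arise for \emph{always on $M^{th}$}, and then checking that after the discard, threshold, and monotonicity steps the bound closes to exactly $\frac{MR+T}{T}$ (resp.\ $\frac{MR+T}{R}$) with no leftover slack, uniformly in $M$. The part~(ii) constructions, by contrast, should be routine once one notes that negligible within-batch spacing makes the single-window counter advance just as an always-on counter would.
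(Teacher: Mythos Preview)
Your proposal is correct and follows essentially the same approach as the paper's proof: the same $3M$-set partition of requests by (counter/cached) state and by which of the three intervals determined by $T$ and $R$ contains $a_i$, the same discard-then-threshold-then-monotonicity chain yielding the bounds $\frac{MR+T}{T}$ and $\frac{MR+T}{R}$, and the same batched-request construction for tightness. Your part~(ii) is in fact slightly more careful than the paper's, since by choosing the inter-batch gap $g\in(T,R)$ when $T<R$ you explicitly exhibit sequences with ratio exceeding $M+1$ for every $T\ne R$, which is what is needed for the first displayed inequality of the theorem; the paper's appendix only exhibits the tight sequence at $T=R$ and relies on the upper-bound expression being minimized there.
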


A proof for Theorem~\ref{thm:cache-2nd-two} is provided in the Appendix.
Interestingly, the same request pattern, with batches of size $M$ separated by at least $\max[T,R]$,
as used to show that Theorem~\ref{thm:always-mth} is tight, provides proof that Theorem~\ref{thm:cache-2nd-two} is tight.

\subsection{Dual-window on $2^{nd}$ ($W,T$)}

Using similar methods as used in prior subsections (this time based on $3 \times 8$ sets,
accounting for the relationship of $a_i$ to $W$, $T$ and $R$),
\revtwo{we show}{it is possible to prove the following theorem establishing}
\revtwo{that the worst-case bound of the
two-parameter dual-window version of the {\em cache on $2^{nd}$ request} policy
is no better than the one-parameter single-window version.
More specifically,}{that}
{\em dual-window on $2^{nd}$}
has the same worst-case properties as {\em single-window on $2^{nd}$}. 
\revtwo{This result is summarized in the following theorem and proven}{A proof is provided}
in the Appendix.

\begin{theorem}\label{thm:cache-Mth}
  The best (optimal) competitive ratio using
  the {\em dual-window on $2^{nd}$} policy
  is achieved with $T=W=R$ and is equal to 3.
  More specifically,
  \begin{align}
    \max_{\mathcal{A}} \frac{C^{window}_{M=2,W=R,T=R}}{C_{opt}^{offline}} \le \max_{\mathcal{A}} \frac{C^{window}_{M=2,W,T}}{C_{opt}^{offline}}
  \end{align}
  for all $W$ and $T$, and $\frac{C^{window}_{M=2,W=R,T=R}}{C_{opt}^{offline}} \le 3$ for
  all possible
  \revtwo{arrival}{request}
  sequences $\mathcal{A} = \{a_i\}$.
\end{theorem}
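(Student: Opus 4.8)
The plan is to mirror the structure of the proofs of Theorems~\ref{thm:always-1st}--\ref{thm:cache-2nd-two}, now tracking the (two-parameter) state of the \emph{dual-window on $2^{nd}$} policy. First I would write the total policy cost for an arbitrary request sequence $\mathcal{A}$ in the form $C^{window}_{M=2,W,T} = R + (\text{tail}) + \sum_{i=2}^N x_i$, where the tail is $T$ if the object is cached after the last request and $0$ otherwise, and $x_i$ is determined by the policy state just before request $i$ together with how $a_i$ compares to $W$, $T$, and $R$. Because a prior request lies within $W$ of request $i$ exactly when $a_i \le W$, the only relevant states are ``object cached'' and ``object uncached''; from the cached state $x_i = a_i$ when $a_i \le T$ (a hit, and the object stays cached) and $x_i = T + R$ when $a_i > T$ (storage $T$, then a miss, with no re-insertion since $a_i > T \ge W$), while from the uncached state $x_i = R$ and the object becomes cached iff $a_i \le W$.

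Next, I would split the analysis into the three parameter orderings consistent with $W \le T$, namely $W \le T \le R$, $W \le R \le T$, and $R \le W \le T$. In each case the three thresholds partition the positive reals into four intervals for $a_i$; combined with the two policy states this yields the $3 \times 8$ collection of mutually exclusive request sets. I would then express both $C^{window}_{M=2,W,T}$ and $C_{opt}^{offline} = R + \sum_{i=2}^N \min[a_i,R]$ as sums over these sets and, using the same elementary tools as in the earlier proofs --- the inequality $\frac{X + \sum_{i \in S} a_i}{X(1-\epsilon) + \sum_{i \in S} a_i} \le \frac{X}{X(1-\epsilon)}$ for $0 \le \epsilon \le 1$, replacing each $a_i$ by the appropriate interval endpoint in the numerator or denominator, and the monotonicity of the resulting ratios in $W$ and $T$ --- reduce the cost ratio to an explicit function $f(W,T)$ of $W$, $T$, $R$ and the set cardinalities. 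Optimizing this bound shows it is minimized, over all admissible $(W,T)$, at $W = T = R$, where it equals $3$.

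Finally, I would establish tightness, which also yields the optimality (the ``$\le$'' part) of $W = T = R$: since the derived bounds are attained we get $\max_{\mathcal{A}} \frac{C^{window}_{M=2,W,T}}{C_{opt}^{offline}} = f(W,T)$, which is minimized at $(R,R)$. A witnessing sequence, analogous to the batch pattern used for Theorems~\ref{thm:always-mth} and~\ref{thm:cache-2nd-two}, consists of batches of two closely spaced requests (inter-request time within a batch at most $W$) with consecutive batches separated by more than $\max[W,T,R]$: per batch cycle the policy pays $R$ on the first (priming) request, $R$ on the second request (at which point it inserts the object), and a storage cost $T$ before eviction, for a total of $2R + T$ per batch, whereas the \emph{optimal offline} policy pays only $R$ per batch; with $W = T = R$ this gives the ratio $3$.

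I would expect the main obstacle to be the bookkeeping in the second step: handling the $3 \times 8 = 24$ request sets, keeping the numerator and denominator contributions straight across all three parameter orderings, and checking that each relaxation is valid and monotone in the intended direction in every case, so that the three sub-bounds indeed share the common minimizer $W = T = R$ with value exactly $3$. A secondary point requiring care is the uniform treatment of the first request and of the storage tail, and confirming that no extra insertion transitions arise when $W < T$.
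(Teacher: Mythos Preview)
Your proposal is correct and follows essentially the same approach as the paper's own proof: the paper also splits into the three parameter orderings $W \le T \le R$, $W \le R \le T$, and $R \le W \le T$, defines in each case eight request sets according to the interval in which $a_i$ falls and whether request $i$ is a ``$2^{nd}$ candidate'' (equivalently, whether your pre-request state is uncached), bounds the cost ratio using the same interval-endpoint and monotonicity arguments, and shows that each sub-bound is minimized at $W=T=R$ with value $3$. The tightness witness you give (batches of two closely spaced requests separated by more than $\max[W,T,R]$, yielding per-batch costs $2R+T$ versus $R$) is exactly the one the paper uses.
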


\section{Steady-state: Offline Bound}\label{sec:general}

\revrev{}{Thus far our results have not made any restrictions to the request sequences.}
For the remaining analysis in this paper,
we assume that inter-request times are independent and identically distributed.
Under this assumption,
we derive expressions for a general inter-request time distribution $f(t)$
with cumulative distribution function $F(t)$,
as well as for specific example distributions.
In the following,
we let $E[a_i]$ denote the average inter-request time, we let
\begin{align}\label{eqn:EltX}
  E[a_i | a_i \le X ] & = \frac{\int_0^{X}tf(t)\textrm{dt}}{\int_0^{X}f(t)\textrm{dt}}
  = X - \frac{1}{F(X)}\int_0^{X}F(t)\textrm{dt}
\end{align}
denote the average inter-request time
given that the inter-request time
is no more than $X$ time units,  and we let
\begin{align}\label{eqn:P_ltX_gtY}
  P(a \le X | a > Y) & = \frac{\int_Y^{X}f(t)\textrm{dt}}{\int_Y^{\infty}f(t)\textrm{dt}}
  = \frac{F(X)-F(Y)}{1-F(Y)}
\end{align}
denote the (conditional) probability that an inter-request time is
\revfour{less than}{no more than}
$X$ given that the inter-request time is
\revfour{at least}{greater than}
$Y$ time units.
\revrev{}{In this section we derive results for the
\revfour{optimal offline}{{\em optimal offline}}
  policy,
  while in Sections 6-8 we consider online policies.}

\subsection{\revrev{Offline lower bound}{General inter-request time distribution}}

Throughout this analysis we will derive expressions
for the average cost per time unit.
For the (optimal) offline policy,
this cost can be calculated as
the expected cost associated with an arbitrary request
divided by the average
\revtwo{inter-arrival}{inter-request}
time $E[a_i]$:
\begin{align}\label{eqn:offline-opt-general}
  C_{opt}^{offline} & = \frac{1}{E[a_i]} \left[ \int_{0}^{R} t f(t) \textrm{dt} + R \int_{R}^{\infty} f(t)\textrm{dt}  \right] \nonumber\\
  & = \frac{1}{E[a_i]} \left[ [t F(t)]_{0}^R - \int_{0}^{R} F(t) \textrm{dt} + R(1-F(R)) \right] \nonumber\\
  & = \frac{1}{E[a_i]} \left[ R - \int_{0}^{R} F(t) \textrm{dt} \right].
\end{align}
Here, we associate all requests with inter-request times $t$ less than $R$
with the cost $t$ to keep the object in the cache for an additional $t$ time units (first integral in the first line),
while all other requests (with $R < t$)
\revfour{endure}{incur}
a cost $R$ (second integral in the first line).
We then use integration by parts (step 2) and algebraic simplifications
(step 3) to derive the final expression.

\subsection{Example distributions}

We next consider four example distributions.

{\bf Exponential:}
Assuming a Poisson process, with exponential inter-request times, we have
\begin{align}
  \label{eqn:exponential-A}
  f(t) & = \lambda e^{-\lambda t}, \quad 
  F(t) = 1 -  e^{-\lambda t}, \quad 
  E[a_i] = \frac{1}{\lambda}\\
  \label{eqn:exponential-B}
  \int_{0}^{t} F(t) \textrm{dt} & = t - \frac{1-e^{-\lambda t}}{\lambda}.
\end{align}
\revfour{Though}{Through}
insertion of these equations into
equation (\ref{eqn:offline-opt-general})
we obtain the following cost function:
\begin{align}
  \label{eqn:offline-opt-general-EXP}
  C_{opt}^{offline}
  & = \lambda \left[ R - \frac{1-e^{-\lambda R}}{\lambda} \right]
  = 1-e^{-\lambda R}.
  \end{align}  

{\bf Erlang:}
  We next consider Erlang distributed inter-request times
  with shape parameter $k$ (integer) and rate parameter  $\lambda > 0$:
  \begin{align}
    \label{eqn:erlang-A}
    & f(t) =  \frac{\lambda^k t^{k-1}e^{-\lambda t}}{(k-1)!}, ~
    F(t) = 1 - \sum_{n=0}^{k-1}\frac{1}{n!}e^{-\lambda t}(\lambda t)^n, ~
    E[a_i] = \frac{k}{\lambda},\\
    \label{eqn:erlang-B}
    & \int_{0}^{t} F(t) \textrm{dt}
    = t - \frac{k}{\lambda} + \frac{e^{-\lambda t}}{\lambda} \sum_{m=1}^{k} \sum_{n=0}^{m-1} \frac{(\lambda t)^n}{n!}.
  \end{align}
  \revrev{Though insertion of these equations into
  equation (\ref{eqn:offline-opt-general})
  we obtain the following cost function:}{Substitution into equation (\ref{eqn:offline-opt-general}) yields:}
  \begin{align}
    \label{eqn:offline-opt-general-ERLANG}
    C_{opt}^{offline}
    & = 1 - \frac{e^{-\lambda R}}{k} \sum_{m=1}^{k} \sum_{n=0}^{m-1} \frac{(\lambda R)^n}{n!}.
  \end{align}
  
{\bf Deterministic:}
\revrev{Taking this to the extreme, we next consider the case}{In the extreme case for low variability,}
all inter-request times are equal to a constant $a$.
  Let $\delta_a(t)$ and $u_a(t)$ represent the Dirac delta function and the unit step function,
  both with (unit) singularities at $t=a$.  Then, we have:
  \begin{align}
    \label{eqn:deterministic-A}
    f(t) & =  \delta_a(t), \quad 
    F(t) = u_a(t), \quad 
    E[a_i] = a,\\
    \label{eqn:deterministic-B}
    \int_{0}^{t} F(t) \textrm{dt} & = \max[0,t-a].
  \end{align}
  \revrev{Though insertion of these equations into
  equation (\ref{eqn:offline-opt-general})
  we obtain the following cost function:}{Substitution into equation (\ref{eqn:offline-opt-general}) yields:}
  \begin{align}
    \label{eqn:offline-opt-general-DETERMINISTIC}
    C_{opt}^{offline}
    & = \min[1,\frac{R}{a}].
  \end{align}

  {\bf Pareto:}
  Finally, we consider Pareto distributed inter-request times (as an example of heavy-tailed distributions)
  with shape parameter $\alpha > 1$ (when $0 < \alpha \le 1$ the expected inter-request time is infinite)
  and scale parameter $t_m > 0$.  In this case, we have:
  \revfour{\begin{align}
    \label{eqn:pareto-A}
    f(t) & =  \frac{\alpha t_m^{\alpha}}{t^{\alpha+1}}, \quad 
    F(t) = 1 - \left(\frac{t_m}{t}\right)^{\alpha}, \quad 
    E[a_i] = \frac{\alpha t_m}{\alpha-1}\\
    \label{eqn:pareto-B}
    \int_{0}^{t} F(t) \textrm{dt}
    & = \left\{ \begin{array}{ll}
      t + \frac{t\left(\frac{t_m}{t}\right)^{\alpha} - t_m\alpha}{\alpha-1}, & t_m \le t\\
      0, & t < t_m.\\
      \end{array}\right.
  \end{align}}{\begin{align}
      \label{eqn:pareto-A1}
      f(t) & =  \frac{\alpha t_m^{\alpha}}{t^{\alpha+1}}, \quad 
      F(t) = 1 - \left(\frac{t_m}{t}\right)^{\alpha}, \quad 
      t_m \le t,\\
      \label{eqn:pareto-A2}
      \quad\quad E[a_i] = \frac{\alpha t_m}{\alpha-1},\\
      \label{eqn:pareto-B}
      \int_{0}^{t} F(t) \textrm{dt}
      & = \left\{ \begin{array}{ll}
        t + \frac{t\left(\frac{t_m}{t}\right)^{\alpha} - t_m\alpha}{\alpha-1}, & t_m \le t\\
        0, & t < t_m.\\
      \end{array}\right.
    \end{align}}
  \revrev{Though insertion of these equations into
  equation (\ref{eqn:offline-opt-general})
  we obtain the following cost
  \revrev{function:}{function:}}{Substitution into equation (\ref{eqn:offline-opt-general}) yields:}
  \begin{align}
    \label{eqn:offline-opt-general-PARETO}
    C_{opt}^{offline}
    & = \left\{ \begin{array}{ll}
      1 - \frac{1}{\alpha} \left( \frac{t_m}{R} \right)^{\alpha - 1}, & t_m \le R\\
      \frac{R(\alpha-1)}{\alpha t_m}, & R < t_m.\\
      \end{array}\right.
  \end{align}
  \revtwo{\revrev{}{For the first of these expressions we have used that}
    \begin{align}
    \frac{\alpha-1}{\alpha t_m} \left(R - \frac{R\left(\left(\frac{t_m}{R}\right)^{\alpha} + \alpha - 1\right) - t_m \alpha}{\alpha-1} \right) = 1 - \frac{1}{\alpha} \left( \frac{t_m}{R} \right)^{\alpha - 1}.
\end{align}}{}
    
\section{Steady-state: Static Baseline Policy with Known \revfour{Request Rate}{Inter-request Distribution}}\label{sec:baselin}

To provide some estimates for the best possible {\em online}
\revrev{}{cache}
\revrev{performance to compare different TTL-based insertion
\revrev{policies,}{policies against,}}{performance,}
in this
\revfour{section, we}{section we}
consider the
case when
an ``oracle'' provider knows
\revfour{both the request distribution and
the exact parameters specifying the request intensity
for that particular object.}{the precise inter-request time distribution for each object.}
For this case, we consider
\revrev{the delivery cost of an ``oracle'' provider that tries to minimize}{a {\em static baseline} policy that tries to minimize}
the delivery cost by selecting between
the extremes of (i) always keeping
the object in the cache, or
(ii) never caching the object.

\subsection{Optimal online policy when non-decreasing hazard rate}

\revfour{Interestingly,
\revrev{it appears that the}{the}
optimized versions of the different
{\em cache on $M^{th}$ request} policies considered in this paper reduce to this
{\em static baseline}
policy when
\revrev{}{(i) inter-request distributions are short-tailed (deterministic,
  \revfour{Erland,}{Erlang,}
  and exponential), (ii)}
request rates are
\revrev{known and}{known, and (iii)}
we are allowed to optimize over the
policy parameters
\revrev{(i.e., selecting optimized $W$ and $T$) for the case
when the inter-requests are exponential and deterministic.}{$W$ and $T$.}
\revrev{A discussion of this is presented in Section~\ref{sec:results}.
Based on these findings we conjecture that {\em static baseline}
provides an online bound when the distribution parameters are known
and inter-request distributions has
a non-decreasing hazard rate
(in our case exponential, Erlang, and deterministic).
This conjecture has not yet been proven and remains interesting future work.}{Even more interesting,
  {\em static baseline}
  provides an {\em online bound} when the inter-request distribution parameters are known
  and the distribution has
an increasing or constant
  hazard rate.
  This is formalized and proven in the following theorem.}}{Interestingly,
  the {\em static baseline}
  provides an {\em online bound} when the inter-request distribution parameters are known
  and the distribution has an increasing or constant
  hazard rate.}

\begin{theorem}\label{thm:non-decreasing-hazard-rate}
  {\em Static baseline}
  \revrev{is the {\em optimal online} policy}{achieves the minimum cost of any {\em online policy}}
  when the inter-request distribution has
  \revrev{a non-decreasing}{an increasing or constant}
  hazard rate. 
\end{theorem}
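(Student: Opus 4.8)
The plan is to argue that with an increasing-or-constant hazard rate, at each decision point (a request for an object not currently in the cache) the optimal choice between retaining the object and not caching it is the \emph{same} choice no matter how many requests have occurred so far, so that no online policy can do better than the best static choice. The key observation is that the only thing an online policy can condition on is the elapsed time since the most recent request, call it $\tau$; by the i.i.d. assumption, the future request process is a renewal process and is independent of the past. Thus at any decision point the relevant quantity is the residual-time distribution of the next inter-request time given that at least $\tau$ has already elapsed, which for a renewal process at a request epoch is simply the full distribution $f(t)$ (since a request just occurred, $\tau=0$ at every decision point that is itself triggered by a request). The decision ``cache now and retain under the TTL rule'' versus ``do not cache'' therefore faces an identical stochastic future every time it is made.

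First I would formalize the cost of the two extreme actions starting from a request epoch: \emph{never cache} costs $R$ at the next request, i.e. contributes an expected cost per request of $R$; \emph{always cache} contributes, per request, the expected storage cost until the next request or eviction, which by equation~(\ref{eqn:offline-opt-general}) style reasoning is $\int_0^{\infty}\min[t,T]f(t)\,\mathrm{dt}$ using the TTL parameter $T$, plus the amortized re-fetch cost on the TTL expiry events. Second I would set up a dynamic-programming / exchange argument: consider any online policy $\pi$; at the first decision point where $\pi$ deviates from the better of the two static actions, swap $\pi$'s action for the static one. Because the future is identical in distribution and independent of the past, and because the per-request cost of each static action does not depend on history, this swap does not increase expected cost. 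Iterating (or invoking a standard optimality-of-stationary-policies result for the induced Markov decision process on the trivial state space) shows the best static action is optimal among all online policies.

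The role of the non-decreasing hazard rate is the following, and I expect it to be the main subtlety rather than the exchange argument itself. The general policy class is not limited to ``cache the whole sustaining interval or nothing'': an online policy could cache for some intermediate duration $s<T$, or cache, drop, and re-insert. I need the hazard-rate condition to rule these out. With a non-decreasing hazard rate, the conditional probability that the next request arrives within the next $\mathrm{d}t$, given survival to time $t$, is nondecreasing in $t$; hence the ``marginal value of keeping the object one more instant'' is monotone, so the optimal retention decision is bang-bang --- either keep until the next request is hopeless (never, i.e. drop immediately) or the benefit only grows (keep as long as the renewal TTL rule allows). Concretely I would show: if at elapsed time $t$ it is worth paying one more unit of storage (because the probability of a hit before accumulating cost $R$ justifies it), then by monotonicity of the hazard rate it is still worth it at every $t'<t$, and the TTL rule with parameter $T$ chosen appropriately exactly implements this threshold. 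This collapses the action space to the two extremes, and then the exchange argument finishes the proof. The hard part will be making the bang-bang reduction rigorous across \emph{all} online policies (including history-dependent and randomized ones), for which I would phrase everything in terms of the expected cost accrued between consecutive requests as a functional of the retention rule and show it is minimized at an extreme point of the feasible set under the hazard-rate assumption.
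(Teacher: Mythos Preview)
Your high-level plan---reduce by the i.i.d.\ assumption to a single inter-request interval, then show that under an increasing or constant hazard rate the optimal retention duration is either $0$ or $\infty$---matches the paper's. But your execution is considerably more elaborate than needed, and the crucial step is stated imprecisely.

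The paper dispenses with any exchange argument. Since inter-request times are i.i.d.\ and the only information available between requests $i-1$ and $i$ is the elapsed time since request $i-1$, \emph{every} online policy is already characterized on that interval by a single threshold $t^*$ (retain while elapsed time is below $t^*$, then evict; re-inserting before the next request would cost $R$ for no benefit, so ``cache, drop, re-insert'' never arises). The paper then writes the expected per-interval cost explicitly as
\[
C(t^*) = R\bigl(1-F(t^*)\bigr) + t^* - \int_0^{t^*} F(t)\,\mathrm{d}t,
\]
differentiates to get $C'(t^*) = 1 - F(t^*) - R f(t^*) = (1-F(t^*))\bigl(1 - R\,h(t^*)\bigr)$ with $h$ the hazard rate, and checks the second derivative: at any interior critical point, $C''<0$ exactly when $h'>0$, so under an increasing hazard rate every interior stationary point is a local \emph{maximum} and the minimum of $C$ lies at $t^*=0$ or $t^*\to\infty$. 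The constant-hazard (exponential) case is immediate since $C'$ then has one sign throughout.

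Two concrete comments on your route. First, the exchange argument across request epochs does no real work: i.i.d.\ already collapses the problem to one representative interval, and stationarity of the optimal rule is automatic. Second, your bang-bang justification has the monotonicity direction muddled. With increasing hazard rate the myopic benefit $R\,h(t)$ of holding one more instant is \emph{larger} at later $t$, so ``worth it at $t$ implies worth it at every $t'<t$'' does not follow from a marginal comparison. What actually forces the endpoint conclusion is precisely the sign pattern of $C'(t^*)$ above: with $h$ increasing, $C'$ can switch sign at most once and only from positive to negative, so $C$ is unimodal with its interior extremum a maximum. Your ``monotone marginal value $\Rightarrow$ bang-bang'' heuristic gestures at the right object but does not by itself deliver the conclusion; the two-line calculus argument does, and I would recommend replacing your bang-bang paragraph with it.
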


\begin{proof}
  \revfour{With IID inter-request times, need only consider a single inter-request time;
    between requests $i-1$ and $i$, say.}{Since inter-request times are IID,
    we need consider only a single representative inter-request time between requests $i-1$ and $i$, for some $i \geq 2$.}
  After servicing request $i-1$,
  any online policy will, at each subsequent instant of time up to
  \revfour{}{the}
  time
  of request $i$ or until the object is discarded, need to decide whether to
  retain the object in the cache,
  or evict it.  The only information the online policy can use to make
  this decision is the
\revfour{}{elapsed}
time since request $i-1$.
\revfour{So,}{Therefore,}
any online policy
  will have a
  \revrev{}{threshold}
  parameter $t^*$, such that as long as the time since request $i-1$
  \revfour{is $< t^*$,}{is less than $t^*$,}
  the object
  \revfour{will keep being retained.}{is retained.}
  If time $t^*$
  \revfour{is reached without yet}{elapses before}
  getting request $i$,
  the object is evicted.  Letting $C(t^*)$
  denote the expected cost incurred from after servicing request $i-1$,
  up to and including the servicing of request $i$, we have:
  \begin{align}
  C(t^*) & = F(t^*) E[a_i | a_i \le t^*] + (1 - F(t^*)) (R + t^*).
  \end{align}
  Using expression (\ref{eqn:EltX}) and simplifying gives:
  \begin{align}\label{eqn:Ct-star}
  C(t^*) & = R (1 - F(t^*)) + t^* - \int_0^{t^*} F(t)\textrm{dt}.
  \end{align}
  Taking the derivative with respect to $t^*$ gives:
  \begin{align}
  \frac{dC(t^*)}{dt^*} & = 1 - F(t^*) - R f(t^*).
  \end{align}
  \revrev{The}{A constant hazard rate corresponds to an exponential distribution,
    and for this case it is straightforward to show that the derivative is negative
    for all $t^*$, positive for all $t^*$, or is constant at 0 for all $t^*$ (when $R \lambda = 1$),
    implying that the {\em static baseline} policy achieves minimum cost.  Consider now the case
    of increasing hazard rate, and note that the}
  derivative is zero
  \revrev{at the point}{when}
  $R = \frac{1 - F(t^*)}{f(t^*)}$.

  \revrev{Now, need to test whether this is a local minimum or maximum.
    Taking the second derivative yields:}{Whether such a point is a minimum or maximum depends
    on the second
    \revtwo{derivative at that
    \revtwo{point.  The second derivative is}{point;} given by}{derivative, given by}}
  \begin{align}
  \frac{d^2C(t^*)}{d^2t^*} = -f(t^*) - R \frac{df(t^*)}{dt^*}.
  \end{align}
  \revfour{This is less than zero at
  \revrev{the point}{a point where}}{At a point where}
  $R = \frac{1 - F(t^*)}{f(t^*)}$,
  \revfour{exactly}{the second derivative is less than zero exactly}
  when the derivative of the hazard rate at
  \revrev{$t^*$}{this point}
  (the derivative of $f(t^*)/(1 - F(t^*))$) is positive.
  And so, when there is an increasing hazard rate,
  \revrev{the point}{any point \revtwo{where $t^*$ is such that}{where}}
  $R = \frac{1 - F(t^*)}{f(t^*)}$ is a local cost maximum, and
  the minimum cost must occur for $t^*=0$ or $t^* \rightarrow \infty$.
\end{proof}

\begin{corollary}\label{cor:corollary}
\revrev{For the case of non-decreasing hazard rate,
the {\em minimum} cost for any
{\em online policy},
would be to set $t^*$ such that $R = \frac{1 - F(t^*)}{f(t^*)}$.}{For inter-request time distributions
  such that (i) there is a unique value of $t^*$ where $R = (1-F(t^*))/f(t^*)$, and
  (ii) the derivative of the hazard rate at this value is negative,
  the minimum cost over all online policies is achieved with $t^*$ set to this value.}
\end{corollary}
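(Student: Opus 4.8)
The plan is to mirror the argument in the proof of Theorem~\ref{thm:non-decreasing-hazard-rate}, reusing all of its setup. As established there, because inter-request times are IID it suffices to analyze a single representative inter-request interval; every online policy is equivalent to a single retention threshold $t^*$; and the expected cost incurred between consecutive requests is $C(t^*) = R(1-F(t^*)) + t^* - \int_0^{t^*} F(t)\,\textrm{dt}$, with $dC(t^*)/dt^* = 1 - F(t^*) - Rf(t^*)$ and $d^2C(t^*)/d^2t^* = -f(t^*) - R\,df(t^*)/dt^*$. Let $t_0$ denote the value furnished by hypothesis~(i), i.e.\ the unique $t^*$ with $R = (1-F(t^*))/f(t^*)$; from the expression for $dC(t^*)/dt^*$, this $t_0$ is exactly the unique stationary point of $C$.

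First I would evaluate the second derivative at $t_0$. Substituting $R = (1-F(t_0))/f(t_0)$ yields
\begin{align}
\left.\frac{d^2C(t^*)}{d^2t^*}\right|_{t^*=t_0} = -\frac{f(t_0)^2 + (1-F(t_0))f'(t_0)}{f(t_0)}.\nonumber
\end{align}
Since the hazard rate $h(t)=f(t)/(1-F(t))$ satisfies $h'(t_0) = \bigl(f'(t_0)(1-F(t_0)) + f(t_0)^2\bigr)/(1-F(t_0))^2$, the numerator above has the same sign as $h'(t_0)$. Hypothesis~(ii) states $h'(t_0) < 0$, so the numerator is negative, and since $f(t_0) > 0$ the second derivative of $C$ at $t_0$ is strictly positive. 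Hence $t_0$ is a strict local minimum of $C$, and $dC(t^*)/dt^*$ crosses zero from below to above at $t_0$.

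Finally I would promote this to a global minimum. Because $dC(t^*)/dt^*$ is continuous on $[0,\infty)$, vanishes only at $t_0$, and is strictly negative immediately to the left of $t_0$ and strictly positive immediately to the right, the intermediate value theorem forbids any further sign change, so $dC(t^*)/dt^* < 0$ on $[0,t_0)$ and $dC(t^*)/dt^* > 0$ on $(t_0,\infty)$. Thus $C$ strictly decreases on $[0,t_0]$ and strictly increases on $[t_0,\infty)$, so $t^* = t_0$ minimizes $C(t^*)$ over all $t^* \ge 0$, hence over all online policies. The only delicate point is this last step --- ruling out that the infimum is attained at the boundary $t^*=0$ or in the limit $t^*\to\infty$ --- but uniqueness of the stationary point together with continuity of $dC(t^*)/dt^*$ makes it immediate, and everything else parallels the theorem's proof with the sign of the hazard-rate derivative reversed.
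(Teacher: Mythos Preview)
Your proposal is correct and follows exactly the approach the paper intends: the paper does not give a separate proof for this corollary, treating it as immediate from the computations in the proof of Theorem~\ref{thm:non-decreasing-hazard-rate} (the second-derivative/hazard-rate sign relation there directly yields a local minimum when $h'(t_0)<0$, and uniqueness from hypothesis~(i) makes it global). Your extra paragraph promoting the local minimum to a global one via the intermediate value theorem is a welcome detail that the paper leaves implicit.
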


\revfour{}{Note that the {\em cache on $M^{th}$} policies
  are identical to the {\em static baseline} if $T$ (and $W$ in the case of {\em dual-window})
  are chosen to be either 0 or $\infty$, whichever gives the best performance.
  Therefore, since {\em static baseline} provides an {\em online bound} when the inter-request distribution
  parameters are known and the distribution has an increasing or constant hazard rate,
  also the {\em cache on $M^{th}$} policies with optimized parameters achieve this bound in this case.}

\revrev{With}{In contrast to
  \revfour{with}{the case of}
  the short-tailed distributions (deterministic, Erlang, and exponential),
  for which {\em static baseline} is the {\em optimal online} policy, with}
Pareto (and other heavy-tailed distributions) the
\revrev{worst-case performance of}{competitive ratio of}
\revrev{this policy is unbounded.}{{\em static baseline} is unbounded
  \revfour{(e.g., Theorem~\ref{thm:static-pareto})}{(see Theorem~\ref{thm:static-pareto} for the case of the Pareto distribution)}
  even when request rates are known.
  For a Pareto distribution,
  \revfour{using (\ref{eqn:pareto-A})}{using (\ref{eqn:pareto-A1})}
  to substitute for $F(t^*)$ and $f(t^*)$ in
  $R = (1 - F(t^*))/f(t^*)$ yields $t^* = R \alpha$, under the condition that
  $t^* = R \alpha \ge t_m$.  Since Pareto has decreasing hazard rate for
  $t \ge t_m$, applying Corollary~\ref{cor:corollary} the
  \revfour{optimal online}{{\em optimal online}}
  policy
  for a Pareto inter-request time distribution sets $t^* = R \alpha$ when $R \alpha \ge t_m$.
  \revfour{And so, for $\alpha \rightarrow 1$, {\em always on $1^{st}$} with $T = R$ is the
    \revfour{optimal online}{{\em optimal online}}
    policy (when $t_m \le R$).}{And so, {\em always on $1^{st}$} with $T = R \alpha$ is the
    \revfour{optimal online}{{\em optimal online}}
    policy when $t_m \le R \alpha$.}
  Also,
  applying (\ref{eqn:Ct-star}) with the optimal $t^*$, for
  general $\alpha$ (and $t_m \le t^*$),
   \revthree{it appears that the}{it can be shown that the}
   competitive ratio of the
\revfour{optimal online}{{\em optimal online}}
   policy
  is at most 2 (attained when $\alpha \rightarrow 1$).}

Of course, in practice, the request rates of individual objects are never known exactly.
Therefore, the
\revtwo{baseline performance presented in this section can best be seen as}{{\em static baseline} policy is best seen as providing}
bounds
\revrev{of the possible online policies}{on the performance possible with an online policy}
\revfour{(when the distribution is exponential, Erlang, and deterministic)}{(when the inter-request distribution has an increasing or constant hazard rate)}
or as a general measurement stick.
Naturally, if the ``wrong'' choice is selected of these two
\revrev{extremes,}{extremes (always keep in cache or never cache),}
the worst-case performance ratio (regardless of distribution!) is unbounded.
\revfour{In later sections}{In Sections 7 and 8}
\revrev{we use these baseline online policies to}{we}
evaluate
different {\em online} insertion policies,
and their robustness over the full parameter
\revfour{region (regardless of how bad/good the system
can estimate the
\revtwo{arrival}{request}
rates of individual objects).}{space when the object inter-request distribution is unknown.}

\subsection{Exponential with known $\lambda$}

\revrev{For the special case when the Poisson arrival rates are known,
it is optimal to operate in one of two extremes.
In particular,
for arrival rates less than $\frac{1}{R}$ the
\revfour{file}{file object}
should never be cached, resulting in
an average cost per time unit of $\lambda R$,
and for higher request rates
than $\frac{1}{R}$ the
\revfour{file}{object}
should never be removed from the cache, resulting in a delivery cost of $1$.
The optimal {\em static baseline} policy under known Poisson arrivals is therefore:}{For the special case of a
  Poisson request process with known rate $\lambda$, the delivery cost with a static policy is minimized
  by never caching the
  \revfour{file}{object}
  if $\lambda < 1/R$,
  and always keeping the
  \revfour{file}{object}
  cached if $1/R \le \lambda$.
  The average cost per time unit in these two cases is given by $\lambda R$ and 1, respectively.
  The average cost per time unit of the {\em static baseline} policy for a Poisson
  request process with known rate is therefore:}
\begin{align}\label{eqn:online-opt-poisson}
  C_{opt}^{static} & = \min[\lambda R, 1].
\end{align}

This policy has the same cost as the
\revfour{optimal offline}{{\em optimal offline}}
policy in both asymptotes;
i.e., they both approach $\lambda R$ when $\lambda \rightarrow 0$ and
approach 1 when $\lambda \rightarrow \infty$.  However, given the ``wrong'' choice
of which of the two extremes should be used,
this otherwise ``optimal'' policy has an unbounded worst-case cost.
For example, consider the case that we have selected to never cache the object.
In this case, it is easy to see that the cost ratio compared to both the
{\em optimal offline} policy (equation (\ref{eqn:offline-opt-general-EXP})) and
the optimal {\em static baseline} policy (equation (\ref{eqn:online-opt-poisson})) is
\revfour{ubounded.}{unbounded.}
In particular, note that both $\frac{\lambda R}{1-e^{-\lambda R}}$ (comparing with {\em optimal offline})
and $\frac{\lambda R}{\min[\lambda R, 1]}$
(comparing with optimal {\em static baseline}) go to infinity as $\lambda \rightarrow \infty$.
Similarly, it is easy to see that for the case that we always cache a copy,
the ratio can be unbounded when request rates are low.
To see this, note that both $\frac{1}{1-e^{-\lambda R}}$ (comparing with {\em optimal offline})
and $\frac{1}{\min[\lambda R, 1]}$ (comparing with optimal {\em static baseline})
go to infinity as $\lambda \rightarrow 0$.

\revfour{Before analyzing more practical online policies that do not require perfect prediction of the
\revtwo{arrival}{request} rate,
we note
\revrev{that}{that in the case of a Poisson request process}
the worst-case
\revrev{bound}{competitive ratio}}{Assuming known inter-request time distribution,
for Poisson requests, the worst-case competitive ratio}
of the optimal {\em static baseline} policy is
\revrev{$\frac{e}{e-1}$,}{$\frac{1}{1-1/e}$,}
providing us with a guideline of the smallest possible gap that we possibly could expect with online policies.

\begin{theorem}\label{thm:online-opt-EXP}
  Under Poisson
  \revrev{arrivals}{requests}
  we have
  \begin{align}
    \revrev{}{\frac{C_{opt}^{online}}{C_{opt}^{offline}} =}
    \frac{C_{opt}^{static}}{C_{opt}^{offline}} \le
    \revrev{\frac{e}{e-1} = \frac{1}{1-1/e}.}{\frac{1}{1-1/e}.}
  \end{align}
\end{theorem}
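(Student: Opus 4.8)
The plan is to reduce the claim to an elementary one-variable inequality using the closed forms already derived. Since the exponential distribution has constant hazard rate, Theorem~\ref{thm:non-decreasing-hazard-rate} gives $C_{opt}^{online} = C_{opt}^{static}$, so it suffices to bound $C_{opt}^{static}/C_{opt}^{offline}$. Substituting $C_{opt}^{static} = \min[\lambda R, 1]$ from~(\ref{eqn:online-opt-poisson}) and $C_{opt}^{offline} = 1 - e^{-\lambda R}$ from~(\ref{eqn:offline-opt-general-EXP}), and writing $x = \lambda R > 0$, the theorem becomes
\begin{align}
  \frac{\min[x,1]}{1 - e^{-x}} \le \frac{1}{1 - 1/e}.
\end{align}
I would then split on the location of the kink of $\min[x,1]$ at $x = 1$.

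For $x \ge 1$, the left-hand side equals $1/(1-e^{-x})$, which is strictly decreasing in $x$, so it is maximized over this range at $x = 1$, where it equals $1/(1-1/e)$ exactly, giving the bound immediately. For $0 < x \le 1$, the left-hand side equals $g(x) := x/(1-e^{-x})$, and the one step that takes a little work is to check that $g$ is nondecreasing on this interval. Differentiating, the sign of $g'(x)$ is the sign of $1 - (1+x)e^{-x}$, which is nonnegative by the standard inequality $e^{x} \ge 1 + x$ for $x \ge 0$. Hence $g(x) \le g(1) = 1/(1-1/e)$ on $(0,1]$ as well. Combining the two cases proves the bound, with equality precisely at $x = \lambda R = 1$, so the constant $1/(1-1/e) = e/(e-1)$ is tight.

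I do not expect a serious obstacle: the argument is a two-case evaluation anchored at the boundary $\lambda R = 1$, and the only nontrivial ingredient is the monotonicity of $g$, which rests entirely on $e^x \ge 1+x$. The one thing to be careful about is that the two one-sided expressions $x/(1-e^{-x})$ and $1/(1-e^{-x})$ coincide at $x = 1$, so the case split handles the kink consistently and no separate continuity or subdifferential argument is needed there.
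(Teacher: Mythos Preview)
Your proof is correct and is essentially the same argument as the paper's: both invoke Theorem~\ref{thm:non-decreasing-hazard-rate} for the equality, split on the two branches of $\min[\lambda R,1]$, show one branch is nondecreasing and the other nonincreasing in $\lambda R$, and conclude the maximum occurs at $\lambda R=1$. Your substitution $x=\lambda R$ and appeal to $e^x\ge 1+x$ are exactly the content of the paper's derivative computation, just written more compactly.
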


\begin{proof}
  \revrev{First,}{The first equality comes directly from Theorem~\ref{thm:non-decreasing-hazard-rate}.  Now,}
  let us identify the
  \revtwo{arrival}{request}
  rate where the ratio between
  \revrev{$C_{opt}^{online}$}{$C_{opt}^{static}$}
  and $C_{opt}^{offline}$ is the greatest.
  This can be shown by first noting that
  $\frac{d}{d\lambda}(\frac{\lambda R}{1-e^{-\lambda R}}) = \frac{Re^{\lambda R}(e^{\lambda R}-\lambda R - 1)}{(e^{\lambda R}-1)^2} \ge 0$
  and that $\frac{d}{d\lambda}(\frac{1}{1-e^{-\lambda R}}) = - \frac{Re^{\lambda R}}{(e^{\lambda R}-1)^2} \le 0$.
  Therefore,
  the maximum ratio $\frac{C_{opt}^{online}}{C_{opt}^{offline}}$ is obtained when $\lambda = \frac{1}{R}$.
  Insertion into the expressions (\ref{eqn:offline-opt-general-EXP})
  and (\ref{eqn:online-opt-poisson}) and taking the ratio completes the proof.
  \end{proof}

\subsection{Erlang with known $k$ and $\lambda$}

\begin{theorem}
  Under Erlang inter-request times, we have
  \begin{align}
    \revrev{}{\frac{C_{opt}^{online}}{C_{opt}^{offline}} =}
    \frac{C_{opt}^{static}}{C_{opt}^{offline}} \le \frac{1}{1-e^{-k}\frac{k^k}{k!}}.
  \end{align}
\end{theorem}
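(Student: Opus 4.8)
The plan is to follow the structure of the proof of Theorem~\ref{thm:online-opt-EXP}. The Erlang distribution has a non-decreasing hazard rate (increasing for $k \ge 2$, constant for $k = 1$), so Theorem~\ref{thm:non-decreasing-hazard-rate} immediately yields $C_{opt}^{online} = C_{opt}^{static}$, and it remains to bound $C_{opt}^{static}/C_{opt}^{offline}$. The static baseline either always caches the object, incurring cost $1$ per time unit, or never caches it, incurring cost $R$ per request and hence $R\lambda/k$ per time unit since $E[a_i] = k/\lambda$; thus $C_{opt}^{static} = \min[R\lambda/k,\, 1]$, generalizing equation (\ref{eqn:online-opt-poisson}). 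Setting $x = \lambda R$ and using equation (\ref{eqn:offline-opt-general-ERLANG}), I would write $C_{opt}^{offline} = 1 - \frac{e^{-x}}{k}\, g(x)$, where $g(x) = \sum_{m=1}^{k}\sum_{n=0}^{m-1} x^n/n! = \sum_{n=0}^{k-1}(k-n)\,x^n/n!$.

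I would then show that $C_{opt}^{static}/C_{opt}^{offline}$, viewed as a function of $x$, is maximized at the crossover point $x = k$ (where $R\lambda/k = 1$). On $x \ge k$ the static cost equals the constant $1$, so it suffices that $C_{opt}^{offline}$ be non-decreasing there: a short computation gives $g(x) - g'(x) = \sum_{n=0}^{k-1} x^n/n! > 0$, hence $\frac{d}{dx}\big(e^{-x}g(x)\big) = e^{-x}\big(g'(x) - g(x)\big) < 0$. On $0 < x \le k$ the static cost is $x/k$, and differentiating $\frac{x/k}{1 - e^{-x}g(x)/k} = \frac{x}{k - e^{-x}g(x)}$ reduces monotonicity to the inequality $k \ge e^{-x}\big(g(x) + x\sum_{n=0}^{k-1} x^n/n!\big)$; collecting terms gives $g(x) + x\sum_{n=0}^{k-1} x^n/n! = k\sum_{n=0}^{k} x^n/n!$, so this becomes $e^x \ge \sum_{n=0}^{k} x^n/n!$, which holds for all $x \ge 0$. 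Since both pieces are monotone and the ratio is continuous at $x = k$, the maximum is attained there.

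Evaluating at $x = k$ gives $C_{opt}^{static} = 1$ and $C_{opt}^{offline} = 1 - \frac{e^{-k}}{k}\, g(k)$. I would then use the identity $g(k) = k\sum_{n=0}^{k-1} k^n/n! - k\sum_{n=0}^{k-2} k^n/n! = k \cdot k^{k-1}/(k-1)! = k^{k+1}/k!$, which yields $C_{opt}^{offline} = 1 - e^{-k} k^k/k!$, so the maximal ratio equals $1/(1 - e^{-k} k^k/k!)$, as claimed.

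The routine algebra aside, the step requiring care is the monotonicity argument on $0 < x \le k$: the term-by-term bookkeeping must be carried out cleanly so that the messy-looking bound collapses to the elementary truncated-exponential inequality $e^x \ge \sum_{n=0}^k x^n/n!$. That simplification, together with the closed form for $g(k)$, is the only non-mechanical part of the proof.
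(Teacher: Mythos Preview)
Your proposal is correct and follows essentially the same approach as the paper: both invoke Theorem~\ref{thm:non-decreasing-hazard-rate} for the first equality, establish that the ratio is monotone on each side of the crossover $\lambda R = k$ by reducing the derivative sign to the truncated-exponential inequality $\sum_{n=0}^{k} x^n/n! \le e^{x}$, and then evaluate at the crossover using the telescoping identity that collapses the double sum to $k^{k+1}/k!$. Your use of the substitution $x=\lambda R$ and the shorthand $g(x)$ makes the bookkeeping a bit tidier, but the argument is the same.
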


\begin{proof}
  \revrev{}{Similarly as for a Poisson request process, the optimal {\em static baseline} policy has cost equal to
    the minimum of $R$ divided by the average inter-request time (with Erlang inter-request times,
    equal to $k/\lambda$), and 1.}
  Consider first the low-rate ratio,
between never caching (at cost $\min[\frac{\lambda}{k} R,1]$=$\frac{\lambda}{k}R$) and {\em optimal offline} (equation (\ref{eqn:offline-opt-general-ERLANG})):
\begin{align}
  \label{eqn:FG}
  \frac{F}{G} = \frac{ \frac{\lambda}{k} R }{1 - \frac{e^{-\lambda R}}{k} \sum_{m=1}^{k} \sum_{n=0}^{m-1} \frac{(\lambda R)^n}{n!} },
\end{align}
where we have used $F$ and $G$ to denote the nominator and denominator.
\revfour{Now, taking}{Taking}
the derivative with respect to $\lambda$ we obtain:
\begin{align}
  \frac{d}{d\lambda} (\frac{F}{G}) & = \frac{1}{G^2}(\frac{dF}{d\lambda}G - F \frac{dG}{d\lambda}) \nonumber \\
  & = \frac{1}{G^2} \left( \frac{R}{k} \left(1 - \frac{e^{-\lambda R}}{k} \sum_{m=1}^{k}\sum_{n=0}^{m-1} \frac{(\lambda R)^n}{n!} \right) - \frac{\lambda R}{k}\frac{R}{k}e^{-\lambda R} \sum_{n=0}^{k-1} \frac{(\lambda R)^n}{n!}  \right) \nonumber \\
  & = \frac{1}{G^2} \left( \frac{R}{k} - \frac{R}{k} e^{-\lambda R} \sum_{n=0}^{k-1} \frac{(\lambda R)^n}{n!} + \frac{\lambda R^2}{k^2} e^{-\lambda R} \left( \frac{(\lambda R)^{k-1}}{(k-1)!} \right) \right) \nonumber \\ 
  & =  \frac{1}{G^2} \left( \frac{R}{k} - \frac{R}{k}e^{-\lambda R} \sum_{n=0}^k \frac{(\lambda R)^n}{n!}  \right).
\end{align}
\revrev{where we have used that
$\frac{dF}{d\lambda} = R/k$,
$\frac{dG}{d\lambda} = \frac{R}{k} e^{-\lambda R} \sum_{m=1}^{k} \sum_{n=0}^{m-1} \frac{(\lambda R)^n}{n!} - \frac{1}{k}  e^{-\lambda R}  \sum_{m=1}^{k} \sum_{n=1}^{m-1} \frac{(\lambda R)^{n-1}}{(n-1)!}
= \frac{R}{k} e^{-\lambda R} \sum_{m=1}^{k} \left( \sum_{n=0}^{m-1} \frac{(\lambda R)^n}{n!} - \sum_{n=0}^{m-2} \frac{(\lambda R)^n}{n!} \right)
= \frac{R}{k} e^{-\lambda R} \sum_{m=1}^{k} \frac{(\lambda R)^{m-1}}{(m-1)!}
= \frac{R}{k} e^{-\lambda R} \sum_{n=0}^{k-1} \frac{(\lambda R)^n}{n!}$,
and (for the third step) the following rewrite:
$\sum_{m=1}^{k}\sum_{n=0}^{m-1} \frac{(\lambda R)^n}{n!} = \sum_{n=0}^{k-1} (k-n) \frac{(\lambda R)^n}{n!}
= k \sum_{n=0}^{k-1} \frac{(\lambda R)^n}{n!} - \lambda R \sum_{n=0}^{k-2} \frac{(\lambda R)^n}{n!}$.
\revfour{Now, since}{Since}
the last sum is no greater than $e^{\lambda R}$ (i.e., $\sum_{n=0}^k \frac{(\lambda R)^n}{n!} \le e^{\lambda R}$),}{Now,
  since $\sum_{n=0}^k \frac{(\lambda R)^n}{n!} \le e^{\lambda R}$,}
we have that $\frac{d}{d\lambda} (\frac{F}{G}) \ge 0$. 
This shows that the worst case
\revfour{ratio}{ratio when $\frac{\lambda}{k} R \le 1$}
is observed when $\lambda = \frac{k}{R}$.
Insertion into
\revfour{the expression}{expression (\ref{eqn:FG})}
gives
\revfour{us the}{the}
bound:
\begin{align}
  \frac{F}{G} & = \frac{1}{1 - \frac{e^{-k}}{k} \sum_{m=1}^{k}\sum_{n=0}^{m-1}\frac{k^n}{n!}}
  = \frac{1}{1 - \frac{e^{-k} k^k}{k!}}.
\end{align}
Similarly, when $\frac{\lambda}{k} R > 1$ (and $\min[\frac{\lambda}{k} R,1] = 1$),
\revfour{we note}{it is straightforward to show}
that $\frac{d}{d\lambda} (\frac{F}{G}) \le 0$,
and the worst case therefore again occurs when $\lambda = \frac{k}{R}$. 
\end{proof}

Note that the Erlang
\revrev{bound}{competitive ratio} approaches 1 as $k \rightarrow \infty$.

\subsection{Deterministic with known $a$}

\begin{theorem}\label{thm:online-deterministic}
  Under deterministic inter-request times, we have
  \begin{align}
    \revrev{}{\frac{C_{opt}^{online}}{C_{opt}^{offline}} =}
    \frac{C_{opt}^{static}}{C_{opt}^{offline}} = 1.
  \end{align}
\end{theorem}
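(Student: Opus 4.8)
The plan is to show that both sides of the claimed identity collapse to the same explicit quantity, namely $\min[1, R/a]$, so their ratio is trivially $1$; the real content is the short bookkeeping for $C_{opt}^{static}$ and the justification of the first equality $C_{opt}^{online} = C_{opt}^{static}$.

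First I would pin down $C_{opt}^{static}$ by direct inspection of the two static extremes. Never caching pays the remote bandwidth cost $R$ once per inter-request interval of length $a$, for a per-time-unit cost of $R/a$; always keeping the object cached pays the storage cost $1$ per time unit (the one-time fetch cost being amortized away in steady state), for a per-time-unit cost of $1$. Hence $C_{opt}^{static} = \min[R/a, 1]$. Comparing with equation~(\ref{eqn:offline-opt-general-DETERMINISTIC}), which gives $C_{opt}^{offline} = \min[1, R/a]$, the two coincide, so $C_{opt}^{static}/C_{opt}^{offline} = 1$.

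Next I would handle the first equality $C_{opt}^{online} = C_{opt}^{static}$ for the deterministic case. Rather than appealing to Theorem~\ref{thm:non-decreasing-hazard-rate} directly --- the deterministic law has a degenerate hazard rate (identically zero below $a$, with an atom at $a$), so invoking that theorem cleanly requires a small limiting remark --- I would argue from scratch using the structure established in its proof: since inter-request times are IID and deterministic, any online policy is described by a single retention threshold $t^*$, and by equation~(\ref{eqn:Ct-star}) its expected cost over one representative interval is $C(t^*) = R(1 - F(t^*)) + t^* - \int_0^{t^*} F(t)\,\textrm{dt}$. Substituting the deterministic $F(t) = u_a(t)$ together with equation~(\ref{eqn:deterministic-B}) gives $C(t^*) = R + t^*$ for $t^* < a$ and $C(t^*) = a$ for $t^* \ge a$, whose minimum over all $t^* \ge 0$ is $\min[R, a]$ (attained at $t^* = 0$ when $R \le a$ and at any $t^* \ge a$ when $a \le R$). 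Dividing by the mean inter-request time $a$ recovers $\min[R/a, 1] = C_{opt}^{static}$, so no online policy beats the static one and the first equality holds.

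There is essentially no hard step here: deterministic inter-request times remove all uncertainty, so the offline policy's foreknowledge buys it nothing beyond what an online policy that already knows the (degenerate) distribution can do, and both collapse to the same binary choice. The only point I would be careful to state is precisely this degeneracy of the hazard rate --- either handling it by viewing the deterministic law as the $k \to \infty$ limit of the Erlang family (for which Theorem~\ref{thm:non-decreasing-hazard-rate} applies directly and the Erlang competitive ratio tends to $1$), or, as above, sidestepping hazard rates altogether through the threshold characterization and the explicit evaluation of $C(t^*)$.
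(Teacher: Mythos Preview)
Your proof is correct. Both you and the paper arrive at the same conclusion, but by somewhat different routes. The paper's argument is a single conceptual observation: with deterministic inter-request times, knowing the constant $a$ is tantamount to knowing the entire future request sequence, so the online optimal policy (and hence the static baseline, which it subsumes) coincides with the offline optimal; both then reduce to the binary choice of caching or not depending on whether $a \le R$. Your argument instead works bottom-up: you compute $C_{opt}^{static}$ explicitly from the two extremes, match it against equation~(\ref{eqn:offline-opt-general-DETERMINISTIC}), and then separately establish $C_{opt}^{online} = C_{opt}^{static}$ via the threshold characterization and an explicit evaluation of $C(t^*)$, carefully noting that the degenerate hazard rate makes a direct appeal to Theorem~\ref{thm:non-decreasing-hazard-rate} awkward. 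Your route is longer but more self-contained and makes the hazard-rate subtlety explicit; the paper's route is terser and leans on the intuitive equivalence between ``known deterministic distribution'' and ``known sequence.''
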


\begin{proof}
  Since knowledge of the (constant) inter-request time
  is equivalent to knowledge of the entire request sequence,
  \revfour{the policies}{the optimal {\em static baseline}
    (same as \revfour{online optimal}{{\em online optimal}}) and \revfour{offline optimal}{{\em offline optimal}} policies}
are identical.  When $a \le \frac{1}{R}$,
  both policies keeps the object cached all the time,
  and when $\frac{1}{R} < a$ neither policy caches the object.
\end{proof}

\subsection{Pareto with known $\alpha$ and $t_m$}

\begin{theorem}\label{thm:static-pareto}
With Pareto inter-request times,
the worst-case cost ratio for the optimal {\em static baseline}
is unbounded.  In particular,
  \begin{align}
    \frac{C_{opt}^{static}}{C_{opt}^{offline}} \rightarrow \infty
  \end{align}
  when
  \revrev{$\alpha^{*} = \frac{1}{1-\frac{t_m}{R}}$}{$\alpha = \frac{1}{1-\frac{t_m}{R}}$}
  and $\frac{t_m}{R} \rightarrow 0+$.
\end{theorem}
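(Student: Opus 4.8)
The plan is to evaluate both $C_{opt}^{static}$ and $C_{opt}^{offline}$ at the stated parameter point and show that the numerator stays pinned at $1$ while the denominator collapses to $0$. Recall that the optimal static baseline either always caches (per‑time‑unit cost $1$) or never caches (per‑time‑unit cost $R/E[a_i]$, since requests arrive at rate $1/E[a_i]$ by renewal theory and each miss costs $R$), so $C_{opt}^{static} = \min\!\bigl[1,\,R/E[a_i]\bigr]$. For the Pareto distribution, $E[a_i] = \alpha t_m/(\alpha-1)$ and $C_{opt}^{offline}$ is given by (\ref{eqn:offline-opt-general-PARETO}); note that $\alpha^{*} := 1/(1-t_m/R)$ is well defined and satisfies $\alpha^{*}>1$ precisely when $0 < t_m < R$, so throughout we work in the $t_m \le R$ branch of (\ref{eqn:offline-opt-general-PARETO}).

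First I would check the numerator. Setting $\alpha = \alpha^{*}$, the ``never cache'' cost is $R/E[a_i] = R(\alpha^{*}-1)/(\alpha^{*} t_m)$, and this equals $1$ if and only if $\alpha^{*}(R-t_m) = R$, which is exactly the defining relation of $\alpha^{*}$. Hence $C_{opt}^{static} = \min[1,1] = 1$ at this point, independently of $t_m/R$.

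Next I would analyze the denominator as $x := t_m/R \to 0+$. From (\ref{eqn:offline-opt-general-PARETO}) with $\alpha=\alpha^{*}$ we have $C_{opt}^{offline} = 1 - \frac{1}{\alpha^{*}}\,x^{\alpha^{*}-1}$, where $\alpha^{*}-1 = x/(1-x)$ and $1/\alpha^{*} = 1-x$. The key limit is $x^{x/(1-x)} \to 1$, which follows from $\frac{x}{1-x}\ln x \to 0$; combined with $1-x \to 1$ this gives $C_{opt}^{offline} \to 1 - 1\cdot 1 = 0$. Therefore $C_{opt}^{static}/C_{opt}^{offline} = 1/\bigl(1 - (1-x)\,x^{x/(1-x)}\bigr) \to \infty$, which is the claim.

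The only step that is not pure substitution is the limit $x^{x/(1-x)} \to 1$ (an $0^0$‑type indeterminate form, handled by taking logarithms); everything else is algebra and plugging into the Section~\ref{sec:general} expressions. It is also worth recording the intuition behind the divergence: when $t_m \ll R$ and $\alpha$ is just above $1$, the Pareto distribution places almost all of its probability on inter‑request times far below $R$, so the offline optimum pays only about $t_m$ per request and its per‑time‑unit cost tends to $0$, whereas the static policy, restricted to the all‑or‑nothing choice, remains stuck at cost $1$.
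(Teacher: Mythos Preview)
Your proof is correct. You evaluate both costs at the stated parameter point $\alpha=\alpha^*=1/(1-x)$ with $x=t_m/R$, verify that $C_{opt}^{static}=1$ there, compute $C_{opt}^{offline}=1-(1-x)x^{x/(1-x)}$, and then take the limit $x\to 0+$ using $x\ln x\to 0$. This establishes exactly what the theorem asserts.

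The paper's proof reaches the same expression $1/\bigl(1-(1-x)x^{x/(1-x)}\bigr)$ but by a less direct route: it first shows (by differentiating in $\alpha$) that, for fixed $x$ in the regime where ``never cache'' is the static choice, the ratio is non-decreasing in $\alpha$, so its maximum over $\alpha$ is attained at the boundary $\alpha=1/(1-x)$; it then differentiates the resulting boundary expression in $x$ to show it is non-increasing, so the supremum is approached as $x\to 0+$. In other words, the paper proves not only that the ratio diverges along the stated family but also that this family is where the worst case actually lives. Your argument is more elementary and entirely sufficient for the theorem as stated; the paper's monotonicity analysis additionally explains \emph{why} $\alpha^*$ is the right choice rather than merely verifying that it works.
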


\begin{proof}

  Assuming Pareto distributed inter-request times,
  \revfour{this}{the optimal {\em static baseline}}
  policy has cost:
  \begin{align}\label{eqn:static-pareto}
    C_{opt}^{static} = \min[\frac{\alpha-1}{\alpha} \frac{R}{t_m},1].
  \end{align}

  \revfour{Consider the ratio of $\frac{\alpha-1}{\alpha} \frac{R}{t_m}$
    \revrev{(when this is less than 1)}{(when this is at most 1)}}{Assume first that $\frac{\alpha-1}{\alpha} \frac{R}{t_m} \le 1$,
    and consider the ratio of this quantity}
  and the
  \revrev{offline bound.}{offline bound for $t_m \le R$.  (In the case of $t_m > R$, the cost ratio is 1.)}
  This
  \revfour{function}{ratio}
  has a non-negative derivative:
  \begin{align}
    \frac{d}{d\alpha} \left( \frac{\frac{\alpha-1}{\alpha} \frac{R}{t_m}}{1-\frac{1}{\alpha} \left( \frac{t_m}{R} \right)^{\alpha-1} } \right) \ge 0.
  \end{align}
  Now, let
  \revrev{$x=\frac{t_m}{R}$ and consider the}{$x=\frac{t_m}{R}$. The}
  maximum
  value of $\alpha$ for which
  \revrev{$\frac{\alpha-1}{\alpha} \frac{R}{t_m} < 1$
    when $\alpha^{*} = \frac{1}{1-\frac{t_m}{R}} = \frac{1}{1-x}$.}{$\frac{\alpha-1}{\alpha} \frac{R}{t_m} \le 1$
    is given by $\frac{1}{1-x}$.}
  For this point, the ratio is:
  \begin{align}
    \frac{C_{opt}^{static}}{C_{opt}^{offline}} \le \frac{1}{1-(1-x)x^{x/(1-x)}}.
  \end{align}
  \revfour{Taking the derivative of this function:}{Taking the derivative of this function with respect to $x$,
    it can be seen that the ratio is non-increasing in $x$:}
  \begin{align}
    \frac{d}{dx} \left( \frac{1}{1-(1-x)x^{x/(1-x)}} \right)  = \frac{x^{x/(1-x)} \ln x}{(1-x) (1-(1-x)x^{x/(1-x)})^2} \le 0,
  \end{align}
  \revfour{we see that}{and so}
  the largest ratio occurs when
  \revrev{$\alpha^{*} = \frac{1}{1-x}$}{$\alpha = \frac{1}{1-x}$}
  and $x \rightarrow 0+$.
  \revfour{In this limit,}{In this case,}
  $\frac{1}{1-(1-x)x^{x(1-x)}} \rightarrow \infty$
  \revfour{as $x \rightarrow 0+$, and}{and}
  the worst-case ratio is therefore unbounded.
\end{proof}

The above result illustrates the importance of using a bounded
TTL value to remove stale objects from the cache.

\section{Steady-state: Insertion Policies}\label{sec:general-policies}

\revfour{}{We next derive expressions for the delivery costs of the {\em cache on $M^{th}$ request}
  policies outlined in Section~\ref{sec:policies}.
  We again assume that inter-request times are independent and identically distributed,
  with a general inter-request time distribution $f(t)$.  In Section~\ref{sec:results},
  we then use these results to derive explicit expression for the four example distributions
  considered in this paper.  Using these general results, it is of course straightforward to derive explicit
  expressions
for other distributions also.}

\subsection{Always on $1^{st}$ ($T$):}

To derive the average cost per time unit,
we consider an arbitrary renewal period that includes
both a ``busy period'' (during which the object is in the cache)
and an ``off period'' (during which the object is not in the cache).
\revtwo{Now, under our distribution assumptions, the}{The}
average cost can be calculated
as the total expected cost accumulated over such a
\revtwo{combined}{renewal}
period (i.e., $R$ plus the time the object stays in the cache)
divided by the expected duration of the renewal period
\revfour{(i.e., $E[a_i]$ + the time the object stays in the cache).}{(i.e., the expected time from when the object
  is added to the cache until it is removed, plus the expected time from when the object is removed from the cache until its next request).}
Therefore,
\begin{align}
  \label{eqn:C_k1_v1}
  C^{always}_{M=1,T} & = \frac{R + E[\Theta]}{E[\Delta_1] + E[\Theta]},
\end{align}
where $E[\Theta]$ is the expected
\revtwo{duration}{time}
that the object is in the cache and
\begin{align}\label{eqn:Delta_1}
  E[\Delta_{1}] & = E[a_i | a_i > T] - T = \frac{1}{1-F(T)}\left(E[a_i] + \int_0^T F(t) \textrm{dt} - T \right),
\end{align}
is the expected time until the next request, given that the object was just removed from the cache.
To derive an expression for $E[\Theta]$,
we identify and solve the following recurrence:
\begin{align}
  E[\Theta] & = (1-F(T)) T + F(T) (E[a_i | a_i < T] +  E[\Theta]),
\end{align}
where $E[a_i | a_i < T]$ is the expected time between two
\revfour{}{consecutive}
requests,
given that the inter-request time between the two requests is less than $T$.
\revfour{Here, we have used that the object is refreshed only
when there is a request within time period $T$
(which happens with a probability $F(T)$)
and that the expected
remaining time that the object will remain in the cache only depends on
the most recent request (not on how many prior requests there have been while in the cache).}{This recurrence follows from
  the fact that the object is removed from the cache after time $T$ if there have been no new requests for it
  (probability $1 - F(T)$), and that otherwise (probability $F(T)$) the object's lifetime in the cache is refreshed
  at the time of the first new request.}
Now, solving for $E[\Theta]$ we obtain:
\begin{align}\label{eqn:theta}
  E[\Theta] & = T + \frac{F(T)}{1 - F(T)} E[a_i | a_i < T]
  = \frac{1}{1-F(T)}\left( T - \int_0^{T}F(t)\textrm{dt} \right) ,
\end{align}
where we have used equation (\ref{eqn:EltX}) in the second step.
Insertion of equations (\ref{eqn:Delta_1}) and (\ref{eqn:theta}) into equation (\ref{eqn:C_k1_v1})
\revfour{now gives us:}{gives:}
\begin{align}
  \label{eqn:C_k1_v2}
  C^{always}_{M=1,T}
  & = \frac{(1-F(T))R + T - \int_0^T F(t) \textrm{dt}}{E[a_i]}. 
\end{align}

\subsection{Always on $M^{th}$ ($M,T$)}

\revfour{Similar to}{As}
for the {\em always on $1^{st}$} policy,
\revfour{with}{for}
the {\em always on $M^{th}$} policy
we can analyze an arbitrary renewal period.
\revfour{Since each such renewal period would see $M-1$ more requests (that are not served from the cache),
  each period would be $(M-1)E[a_i]$ longer and have a cost $(M-1)R$ higher.}{Since $M$ requests are needed for an uncached object
  to be added to the cache, the off period is $(M-1) E[a_i]$ longer than for {\em always on 1$^{st}$},
  and the total expected cost over a renewal period is $(M-1) R$ higher.}
\revtwo{Otherwise, the}{The}
time that the object stays in the cache is the same as for the {\em always on $1^{st}$} policy.
\revtwo{Given these observations, the average cost per time unit for the {\em always on $M^{th}$} policy
can be calculated as:}{These observations yield:} 
\begin{align}\label{eqn:C_k2always_v2}
  C^{always}_{M,T} & = \frac{MR + E[\Theta]}{E[\Delta_1] + (M-1)E[a_i] + E[\Theta]}
  \nonumber\\
  &
  = \frac{(1-F(T))MR + T - \int_0^T F(t) \textrm{dt}}{(M - F(T))E[a_i]}.
\end{align}

\subsection{Single-window on $M^{th}$ ($M,T$)}

The average cost per time unit
can be calculated using the formula:
\begin{align}
  \label{eqn:C_M_v1}
  C^{window}_{M,T} = \frac{E[N_{M}]R+E[\Theta]}{E[\Delta_M]+E[\Theta]},
\end{align}
where $E[N_{M}]$ is the expected number of requests needed before the object re-enters the
\revtwo{cache and}{cache,}
$E[\Delta_M]$ is the expected time duration that the object is not in the cache during a renewal
\revtwo{period. Again, the expected time duration $E[\Theta]$ that
  the object is in the cache during such a renewal period}{period, and $E[\Theta]$}
is the same as for the prior two policies analyzed.

To obtain $E[\Delta_{M}]$, we identify the following recurrence:
\begin{align}
  E[\Delta_{M}] & = E[\Delta_{M-1}] + F(T) E[a_i| a_i \le T] + (1-F(T))(T+E[\Delta_{M}]).
\end{align}
Solving for $E[\Delta_{M}]$ and using equation (\ref{eqn:Delta_1}) for
\revfour{}{the base case of the recurrence}
$E[\Delta_1]$,
\revtwo{we obtain the following:}{we obtain:}
\begin{align}\label{eqn:delta_M}
  E[\Delta_{M}] & = \frac{1}{F(T)}\left( E[\Delta_{M-1}] + T - \int_0^T F(t) \textrm{dt}\right) \nonumber\\
  & = \frac{1}{1-F(T)} \left( \frac{E[a_i]}{F(T)^{M-1}} + \int_0^T F(t) \textrm{dt} - T \right).
\end{align}

Similarly, to obtain $E[N_{M}]$, we identify the following recurrence:
\begin{align}
  E[N_{M}] & = E[N_{M-1}] + F(T) + (1-F(T))E[N_{M}].
\end{align}
Solving for $E[N_{M}]$ and recognizing that $E[N_{1}] = 1$,
\revtwo{we obtain the following:}{we obtain:}
\begin{align}\label{eqn:N_M}
  E[N_{M}] & = 1 + \frac{E[N_{M-1}]}{F(T)} = \sum_{i=0}^{M-1} \frac{1}{F(T)^i}.
\end{align}
Inserting equations (\ref{eqn:theta}), (\ref{eqn:delta_M}) and (\ref{eqn:N_M}) into equation (\ref{eqn:C_M_v1}) we obtain:
\begin{align}
  \label{eqn:C_M_v2}
  C^{window}_{M,T}
  & = \frac{ (1-F(T)) \sum_{i=0}^{M-1} \frac{1}{F(T)^i} R + \left( T - \int_0^{T}F(t)\textrm{dt} \right)}{\frac{E[a_i]}{F(T)^{M-1}}}.
\end{align}

\subsection{Dual-window on $2^{nd}$ ($W,T$)}

\revtwo{For the analysis of this policy,
we assume that the object always must see at least two requests within $W$
of each other {\em after} the object was removed from the cache.
This is always the case when $W \le T$ (considered in this paper).
Now, the average cost per time unit
can be calculated using the formula:}{Note that since we assume $W \le T$,
  the two requests within $W$ of each other that are required for an evicted object
  to be cached again must occur after the object eviction.  The average cost per time unit is given by}
\begin{align}
  \label{eqn:C_k2_v1}
  C^{window}_{M=2,W, T} = \frac{E[N_{2}]R+E[\Theta]}{E[\Delta_2]+E[\Theta]},
\end{align}
where $E[N_{2}]$ is the expected number of requests needed before the object re-enters the
\revtwo{cache and}{cache,}
$E[\Delta_2]$ is the expected time duration that the object is not in the cache during a renewal
\revtwo{period. Again, the expected time duration $E[\Theta]$ that the object is
  in the cache during such a renewal period}{period, and $E[\Theta]$}
is the same as for the prior
\revtwo{two policies analyzed.}{policies.}
\revtwo{Given the above assumptions, the
expected time duration that the object is out of the cache}{Here, $E[\Delta_2]$}
can be expressed as
\begin{align}
  \label{eqn:Delta_G_v1}
  E[\Delta_2] & = E[a_i-T | a_i > T] + E[\delta] = E[a_i | a_i > T] - T + E[\delta]\nonumber\\
  & = \frac{1}{1-F(T)} \left(E[a_i] - T F(T) + \int_0^T F(t) \textrm{dt} \right) - T + E[\delta],
\end{align}
where $E[\delta]$ can be expressed using the following recurrence:
\revfour{\begin{align}
  E[\delta] & = P(a_i \le W) E[a_i | a_i \le W] + (1-P(a_i \le W))\left(E[a_i | a_i > W] + E[\delta] \right) \nonumber\\
  & = F(W) E[a_i | a_i \le W] + (1-F(W)) \left(E[a_i | a_i > W] + E[\delta] \right). 
\end{align}}{\begin{align}
    E[\delta]
    & = F(W) E[a_i | a_i \le W] + (1-F(W)) \left(E[a_i | a_i > W] + E[\delta] \right).
\end{align}}

Solving for $E[\delta]$, we obtain:
\revfour{
$E[\delta] = E[a_i | a_i \le W] + \frac{1-F(W)}{F(W)} E[a_i | a_i > W]
= \frac{1}{F(W)} \left( F(W) E[a_i | a_i \le W] + (1-F(W)) E[a_i | a_i > W] \right)
= \frac{1}{F(W)}E[a_i]$.}{
  \begin{align}\label{eqn:delta}
  E[\delta] & = E[a_i | a_i \le W] + \frac{1-F(W)}{F(W)} E[a_i | a_i > W] \nonumber\\
  & = \frac{1}{F(W)} \left( F(W) E[a_i | a_i \le W] + (1-F(W)) E[a_i | a_i > W] \right) \nonumber\\
  & = \frac{1}{F(W)}E[a_i].
  \end{align}}
Insertion into equation (\ref{eqn:Delta_G_v1}) then gives:
\begin{align}
  \label{eqn:Delta_G_v2}
  E[\Delta_2] & = \frac{1}{1-F(T)} \left(E[a_i] - T F(T) + \int_0^T F(t) \textrm{dt} \right) - T + \frac{E[a_i]}{F(W)}.
\end{align}

Similarly,
\revtwo{the expected number of requests $E[M_{2}]$
needed before the object re-enters the cache
can be expressed as:}{the expected number of requests $E[N_2]$ needed before the object re-enters the cache can be expressed as}
\begin{align}
  \label{eqn:m_G_v1}
  E[N_{2}] =  1 + E[m],
\end{align}
where $E[m]$ can be expressed using the following recurrence:
\revfour{$ E[m] = F(W) \cdot 1 + (1-F(W)) \cdot (1+E[m])$.}{$E[m] = F(W) + (1-F(W))(1+E[m])$.}
Solving for $E[m]$, we obtain:
 $ E[m] = 1 + \frac{1-F(W)}{F(W)}$.
Insertion into equation (\ref{eqn:m_G_v1}) then gives:
\begin{align}
  \label{eqn:m_G_v2}
  E[N_{2}] = 2 + \frac{1-F(W)}{F(W)}.
\end{align}

Finally,
\revtwo{inserting}{substituting}
equations (\ref{eqn:Delta_G_v2}), (\ref{eqn:m_G_v2}) and (\ref{eqn:theta})
into equation (\ref{eqn:C_k2_v1}),
\revtwo{and multiplying both the numerator and denominator with $(1-F(T))$, we obtain:}{and simplifying, yields}
\begin{align}
  \label{eqn:C_k2_v2}
  C^{window}_{M=2,W, T} & = \frac{(1-F(T))\left(2 + \frac{1-F(W)}{F(W)}\right)R+\left( T - \int_0^{T}F(t)\textrm{dt} \right) }{E[a_i] (1 + \frac{1-F(T)}{F(W)})}.
\end{align}

\section{Results for example distributions}\label{sec:results}

We next present explicit expressions for
\revtwo{each {\em cache on $M^{th}$ request} policy}{the policies}
considered in this paper
for four different distributions: exponential, Erlang, deterministic, and Pareto.
Table~\ref{tab:summary-costs} summarizes these results.
For derivations of
\revfour{the {\em optimal offline} policies (top row) and the {\em static baselines} (second row),
  optimized for when request rates are known,}{the {\em optimal offline} results (top row),
  and the {\em static baseline} results that assume a known inter-request time distribution (second row),}
we refer to Sections~\ref{sec:general} and~\ref{sec:baselin}, respectively.
We next present and discuss results for each
\revfour{distribution of consideration.}{considered distribution.}

\begin{table*}[t]
  \caption{Summary of costs for different distributions and insertion policies.
    To make room, for Erlang, we simplified expressions using
    $F(t) = 1 - \sum_{n=0}^{k-1}\frac{1}{n!}e^{-\lambda t}(\lambda t)^n$ and
    $\Phi(T)=\frac{e^{-\lambda T}}{\lambda} \sum_{m=1}^{k} \sum_{n=0}^{m-1} \frac{(\lambda T)^n}{n!}$.}
  \label{tab:summary-costs}
  \vspace{-0pt}
{\tiny
\begin{tabular}{|l|c|c|p{1.5cm}|c|}
\hline
 Policy & Exponential & Erlang & Deterministic & Pareto \\ \hline
Offline         & $1-e^{-\lambda R}$
& $1 - \frac{\lambda}{k} \Phi(R)$
& $\: \min[\frac{R}{a},1]$
& $\begin{array}{ll} 1 - \frac{1}{\alpha} \left( \frac{t_m}{R} \right)^{\alpha - 1}, & \textrm{if}~ t_m \le R\\  \frac{R(\alpha-1)}{\alpha t_m}, & \textrm{if}~ R < t_m\\ \end{array}$
\\\hline

Baseline        & $\min[\lambda R, 1]$
& $\min[\frac{\lambda}{k} R, 1]$
& $\: \min[\frac{R}{a}, 1]$
& $\min[\frac{\alpha-1}{\alpha}\frac{R}{t_m},1]$
\\\hline

Always $1^{st}$
& $1-e^{-\lambda T} + \lambda R e^{-\lambda T}$
& $(1-F(T))\frac{\lambda}{k}R + (1 - \frac{\lambda}{k}\Phi(T))$
& $\begin{array}{l} 1, \qquad \textrm{if}~ a \le T\\ \frac{R+T}{a}, \: \textrm{if}~T < a\\  \end{array}$
& $\begin{array}{ll} \frac{\alpha-1}{\alpha}\left(\frac{t_m}{T}\right)^{\alpha}\frac{R}{t_m} + \left(1 - \frac{1}{\alpha}\left(\frac{t_m}{T}\right)^{\alpha-1}\right), & \textrm{if}~ t_m \le T\\    \frac{(R+T)(\alpha - 1)}{\alpha t_m}, & \textrm{if}~ T < t_m\\ \end{array}$
\\\hline

Always $2^{nd}$
& $\frac{1-e^{-\lambda T} + 2 \lambda Re^{-\lambda T}}{1 + e^{-\lambda T}}$
& $\frac{(1-F(T))\frac{\lambda}{k}2R + (1 - \frac{\lambda}{k}\Phi(T))}{2 - F(T)}$
& $\begin{array}{l} 1, \qquad \textrm{if}~ a \le T\\  \frac{2R+T}{2a}, \: \textrm{if}~ T < a\\ \end{array}$
& $\begin{array}{ll} \frac{\frac{\alpha-1}{\alpha}\left(\frac{t_m}{T}\right)^{\alpha}\frac{2R}{t_m} + \left(1 - \frac{1}{\alpha}\left(\frac{t_m}{T}\right)^{\alpha-1}\right)}{1+\left(\frac{t_m}{T}\right)^{\alpha}}, & \textrm{if}~ t_m \le T\\  \frac{2R+T}{2}\frac{\alpha - 1}{\alpha t_m}, & \textrm{if}~ T < t_m\\ \end{array}$
\\\hline

Single $M^{th}$ &
$\lambda e^{-\lambda T} \sum_{i=0}^{M-1}(1 -  e^{-\lambda T})^i R + \left(1-e^{-\lambda T} \right)^M$
& $\begin{array}{l}(1-F(T)) \frac{\lambda}{k} \sum_{i=0}^{M-1} F(T)^i R \\ \qquad + \left( 1 - \frac{\lambda}{k}\Phi(T)\right)F(T)^{M-1} \end{array}$
& $\begin{array}{l} 1, \qquad \textrm{if} a \le T\\ \frac{R}{a}, \qquad  \textrm{if}~T < a\\  \end{array}$
& $\begin{array}{ll}  \frac{\alpha-1}{\alpha}\left(\frac{t_m}{T}\right)^{\alpha} \sum_{i=0}^{M-1} (1 - \left(\frac{t_m}{T}\right)^{\alpha})^i \frac{R}{T} \\ \qquad \qquad + \left(1 - \frac{1}{\alpha}\left(\frac{t_m}{T}\right)^{\alpha-1} \right)(1 - \left(\frac{t_m}{T}\right)^{\alpha})^{M-1}, & \textrm{if}~ t_m \le T\\  \frac{R(\alpha - 1)}{\alpha t_m}, & \textrm{if}~ T < t_m\\ \end{array}$
\\\hline

Dual $2^{nd}$ & $\frac{\lambda R e^{-\lambda T} \left(2 - e^{-\lambda W} \right) + \left( 1-e^{-\lambda T} \right)\left(1-e^{-\lambda W}\right) }{ 1 - e^{-\lambda W} + e^{-\lambda T}}$
& $\frac{(1-F(T))\left(2 + \frac{1-F(W)}{F(W)}\right)R +\left( \frac{k}{\lambda} - \Phi(T) \right) }{\frac{k}{\lambda} (1 + \frac{1-F(T)}{F(W)})}$
& $ \begin{array}{l} 1, \:  \textrm{if}~a < W \le T \\ \frac{R}{a}, \:  \textrm{if} W < a\\  \end{array}$
& $\begin{array}{ll}  \frac{(\alpha-1)\left(\frac{t_m}{T}\right)^{\alpha}(2-\left(\frac{t_m}{W}\right)^{\alpha})R + (1-\left(\frac{t_m}{W}\right)^{\alpha})(t_m \alpha - T\left(\frac{t_m}{T}\right)^{\alpha})}{\alpha t_m (1 - \left(\frac{t_m}{W}\right)^{\alpha} + \left(\frac{t_m}{T}\right)^{\alpha})}, & \textrm{if}~ t_m \le W\\  \frac{R(\alpha - 1)}{\alpha t_m}, & \textrm{if}~ W < t_m\\ \end{array}$
\\\hline
\end{tabular}}
\vspace{-4pt}
\end{table*}

{\bf Exponential:}  The results for the
\revfour{}{four}
insertion policies are obtained
\revfour{by inserting equations (\ref{eqn:exponential-A}) and (\ref{eqn:exponential-B})
  into equations (\ref{eqn:C_k1_v2}), (\ref{eqn:C_k2always_v2}), (\ref{eqn:C_M_v2}), (\ref{eqn:C_k2_v2}),}{by using equations
  (\ref{eqn:exponential-A}) and (\ref{eqn:exponential-B}) to substitute for $E[a_i]$, $F(t)$ and the integral of $F(t)$
  in equations (\ref{eqn:C_k1_v2}), (\ref{eqn:C_k2always_v2}), (\ref{eqn:C_M_v2}), (\ref{eqn:C_k2_v2}),}
and then simplifying the expressions.
For example, for the {\em always on $1^{st}$} policy,
\revfour{we insert (\ref{eqn:exponential-A}) and (\ref{eqn:exponential-B})
into equation (\ref{eqn:C_k1_v2}):}{using equations (\ref{eqn:exponential-A}) and (\ref{eqn:exponential-B}) to substitute into equation (\ref{eqn:C_k1_v2}) yields:}
\begin{align}\label{eqn:C_k1_v2-EXP}
  C^{always}_{M=1,T} = 1-e^{-\lambda T} + \lambda R e^{-\lambda T}.
\end{align}
\revfour{We note that the derivative (with respect to $T$):}{Note that the derivative of the cost with respect to $T$, as given by}
\begin{align}
  \frac{d}{dT}\left(C^{always}_{M=1,T}\right) = (\lambda - R \lambda^2) e^{-\lambda T},
\end{align}
is negative for $\lambda < R$ and positive for $R < \lambda$.
Therefore, for the (unrealistic) case that
\revtwo{arrival}{request}
rates are known,
it would be optimal to never cache (i.e., use $T=0$)
for
\revfour{files}{file objects}
with $\lambda \le R$ and never empty the cache
(i.e., $T \rightarrow \infty$) when $R < \lambda$.
For these two extreme cases,
\revtwo{we have an}{the}
average (expected) cost
\revtwo{of}{is}
$\lambda R$ and 1, respectively.
Taking the better of these corresponds to our (optimal) {\em static baseline} policy.
\revtwo{Of course, in practice the
\revtwo{arrival}{request}
rate $\lambda$ is not known
and an intermediate $T$ may therefore be desirable to avoid worst-case cost ratios,
relative to {\em optimal offline}, to sky-rocket, for example.}{}

\revtwo{As with the optimal {\em static baseline} policy (assuming known
\revtwo{arrival}{request}
rates),
also this policy can have an unbounded worst-case cost ratio compared to both
optimal offline and online policies if setting $T$ to large or to small.}{With unknown request rate,
  an intermediate value of $T$ is needed to avoid unbounded worst-case cost ratios.}
\revfour{For example, with $T$=$0$ we get unbounded performance penalty when $\lambda$$\rightarrow$$\infty$
and with $T$$\rightarrow$$\infty$ we get
\revtwo{bounded}{unbounded}
performance penalty when $\lambda$$\rightarrow$$0$.
Motivated}{Motivated}
by our worst-case analysis for arbitrary request distributions (Section~\ref{sec:worse-case}),
\revfour{we instead}{we}
focus our attention
\revtwo{on the case when $T=R$;
both for this policy and for all other policies of consideration.}{on policies using $T$=$R$.}
\revtwo{Again, note that {\em always on $1^{st}$} has a
worst-case bound of 2 (Theorem~\ref{thm:always-1st}).
Interestingly,}{Interestingly,}
taking the ratio of equations (\ref{eqn:C_k1_v2-EXP}) and
\revtwo{(\ref{eqn:offline-opt-general-EXP}) and taking the limit $\lambda \rightarrow 0$,}{(\ref{eqn:offline-opt-general-EXP}),}
\revfour{it is easy to see that}{it can be seen that}
\revtwo{this worst-case bound}{the worst-case bound of 2 shown in Theorem~\ref{thm:always-1st} for {\em always on $1^{st}$}}
  is achieved
  \revtwo{for the low-request rate region:}{with exponential inter-request times as $\lambda$$\rightarrow$$0$:}
\begin{align}
  \lim_{\lambda \rightarrow 0} \frac{C^{always}_{M=1,T=R}}{C^{offline}_{opt}} & = \lim_{\lambda \rightarrow 0} \frac{1-e^{-\lambda R} + \lambda R e^{-\lambda R}}{1-e^{-\lambda R}}
  = \lim_{\lambda \rightarrow 0} \frac{\lambda R + \lambda R}{\lambda R} = 2.
\end{align}
\revtwo{Here, we have used Taylor expansion of both nominator and denominator in the second step.
Using the same technique,}{Similarly,}
\revfour{it is easy}{it is straightforward}
to show that the
\revtwo{ratio of}{cost ratio, with exponential inter-request times and $\lambda$$\rightarrow$$0$, for}
       {\em always on $M^{th}$} is $\frac{M+1}{M}$ and
       \revtwo{the}{that for}
              {\em single-window on $M^{th}$} is 1 when $M$$\ge$$2$.  This is encouraging,
since it shows that {\em single-window on $M^{th}$} in practice may significantly
outperform {\em always on $1^{st}$}, despite a looser worst-case
\revtwo{bound, as well as the other {\em always on $M^{th}$}.}{bound.}

In fact, using {\em single-window on $2^{nd}$} with the optimal worst-case analysis setting of
\revtwo{$W=T=R$,}{$T=R$,}
the largest cost ratio (across the full range of request rates)
is only slightly higher than for the
\revtwo{(optimal)}{(optimal assuming known request rate)}
       {\em static baseline},
which has a peak ratio of $\frac{1}{1-1/e} \approx 1.582$ (when $\lambda R = 1$),
\revfour{as per}{as shown in}
Theorem~\ref{thm:online-opt-EXP}.
\revtwo{For example, taking}{This can be seen by taking}
the ratio of the cost functions of {\em single-window on $2^{nd}$} and the {\em offline optimal}:
\begin{align}
  \frac{\lambda R e^{-\lambda R} (2-e^{-\lambda R}) + (1-e^{-\lambda R})^2}{1-e^{-\lambda R}},
\end{align}
\revtwo{two extreme points can be identified:}{and identifying the two extreme points:}
$\lambda R = 0$ and $\lambda R  \approx 1.05236$ (numerically).
When $\lambda \rightarrow 0$ the ratio is 1 and when $\lambda R  \approx 1.05236$ the ratio is 1.588.
\revtwo{It is, however, worth remembering that the (optimal) {\em static baseline} policy
can perform {\em much} worse if optimizing with regards to the wrong request rate.
In contrast, {\em single-window on $2^{nd}$} with $W=T=R$
performs consistently good throughout the full parameter range without
having to tune any parameter based on the current request rate.}{}

Figure~\ref{fig:exponential} summarizes the performance of the different {\em cache on $M^{th}$} policies.
Here, we have used
\revfour{\revtwo{$W=T=R$ for all policies.}{$W=T=R$.}}{$W=T=R$,
  and on the x-axis vary the ```normalized average request rate'' as given by the average number
  of requests within a window of $W = T$ time units.
  For example, an x-axis value of 1 corresponds to an average request rate of $\lambda = 1/T = 1/W = 1/R$.}
\revfour{In addition to clearly observing}{Note}
that the window-based policies significantly outperform the
{\em always on $M^{th}$ policies},
\revfour{we note}{and}
that {\em single-window on $2^{nd}$} with $T=R$
achieves good performance throughout,
as it closely tracks
\revtwo{{\em static baseline}. Interesting, for the exponential case (as well as for deterministic),
it is possible to show that the optimal setting for the different
{\em cache on $M^{th}$ request} policies (both of type ``always'' and ``window'')
is either $T = 0$ or $T \rightarrow \infty$ when request rates are known.
For example, consider the derivative with respect of $T$ of the
simplest {\em always on $1^{st}$} policy:
\begin{align}
  \frac{dC^{k=1}_{T}}{dT} = (\lambda - R \lambda^2) e^{-\lambda T}.
\end{align}
This function is negative for $\lambda < R$ and positive for $R < \lambda$.
Therefore, in the (unrealistic) case that
\revtwo{arrival}{request}
rates are known,
it would be optimal to never cache (i.e., use $T=0$)
for files with $\lambda \le R$ and never empty the cache
(i.e., $T \rightarrow \infty$) when $R < \lambda$.
We conjecture that this also is true for the Erlang distribution,
but have yet to prove this for all cases.}{{\em static baseline},
  which bounds the optimal performance of any online policy when inter-request times
  are exponential (Theorem~\ref{thm:non-decreasing-hazard-rate}).}

Finally, comparing {\em single-window on $M^{th}$}
\revfour{with}{for}
$M=2$ and $M=4$,
we note that {\em single-window on $4^{th}$}
\revfour{actually tracks}{tracks}
the {\em static baseline}
even better up to
\revfour{the peak,}{the peak at $\lambda R = 1$,}
but then
\revfour{significantly overshoots for larger}{performs significantly worse for higher}
request rates.
With {\em single-window on $2^{nd}$}, there is a small but noticeable gap both before and after the peak.
However, the maximum difference is substantially smaller.

\begin{figure}[t]
  \centering
  \includegraphics[trim = 0mm 2mm 0mm 0mm, width=0.44\textwidth]{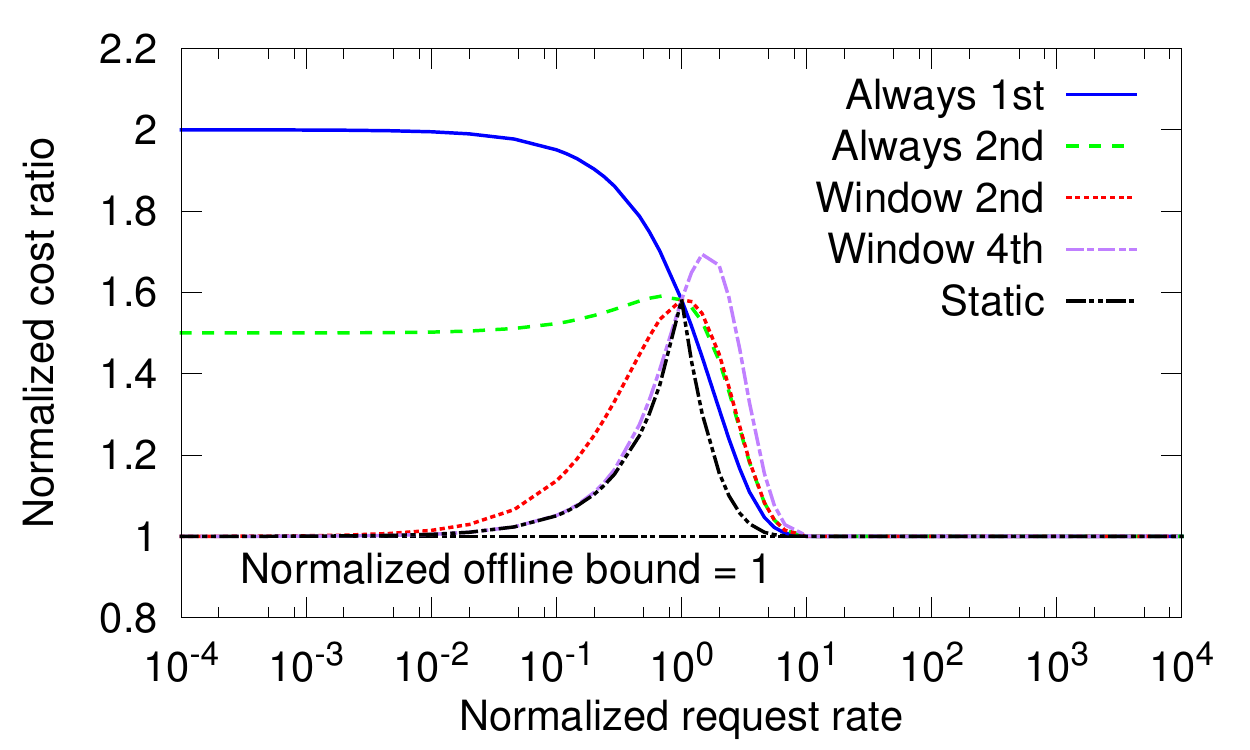}
  \vspace{-8pt}
  \caption{\revfour{Summary results using exponential distribution and $W=T=R$.}{Cost ratios for an exponential inter-request time distribution and $W=T=R$.}}
  \label{fig:exponential}
  \vspace{-6pt}
\end{figure}

{\bf Distributions with lower variability:}
Erlang results are obtained by
\revfour{inserting equations (\ref{eqn:erlang-A}) and (\ref{eqn:erlang-B}) into}{using equations
  (\ref{eqn:erlang-A}) and (\ref{eqn:erlang-B}) to substitute for $E[a_i]$, $F(t)$ and the integral of $F(t)$ in}
equations (\ref{eqn:C_k1_v2}), (\ref{eqn:C_k2always_v2}), (\ref{eqn:C_M_v2}), (\ref{eqn:C_k2_v2}),
and then
\revtwo{simplifying the expressions.}{simplifying.}
\revfour{Using similar techniques as for
\revtwo{the exponential distributions,}{the case of exponential inter-request times,}
it is possible to show that the low-rate performance (i.e., when $\lambda \rightarrow 0$)
is the same for all finite $k$ as for the exponential case (which corresponds to Erlang with $k=1$)
and the high-rate costs (i.e., when $\lambda \rightarrow \infty$) again approach 1 for all policies.
We note that
\revtwo{the Erlang distribution should approach that of the}{Erlang distribution results should approach those for}
deterministic when $k \rightarrow \infty$.}{It is straightforward to show that,
  for any $k \ge 1$, the cost ratios for each of the policies in the limiting cases of
  $\lambda \rightarrow 0$ and $\lambda \rightarrow \infty$ are the same as for exponentially distributed inter-request times.}

Results for
\revtwo{the deterministic distributions}{deterministic inter-request times}
are obtained by
\revfour{inserting equations (\ref{eqn:deterministic-A}) and (\ref{eqn:deterministic-B}) into}{using equations
  (\ref{eqn:deterministic-A}) and (\ref{eqn:deterministic-B}) to substitute into}
equations (\ref{eqn:C_k1_v2}), (\ref{eqn:C_k2always_v2}), (\ref{eqn:C_M_v2}), (\ref{eqn:C_k2_v2}),
taking limits (when needed), and simplifying the expressions
\revfour{to}{on}
a case-by-case basis.
\revfour{\revtwo{Again, the limits hold, with both}{We again observe the same asymptotes.  In the limits, both}
the {\em static baseline} and
\revtwo{different window-based}{window-based}
       {\em cache on $M^{th}$ request} policies, with $M \ge 2$,
       \revtwo{being}{performs}
       as good as {\em optimal offline}.  Only the {\em always on $M^{th}$} policies perform worse.
       Again, low-rate bounds (when $\lambda \rightarrow 0$) are $\frac{M+1}{M}$ for this policy.}{Again, the cost ratios
  for each of the policies in the limiting cases of $\lambda \rightarrow 0$ and $\lambda \rightarrow \infty$
  are the same as for exponentially distributed inter-request times.}
\revfour{The above bounds and observations are also clearly visible in Figure~\ref{fig:erlang},
  which highlights how the maximum cost difference for the}{Figure~\ref{fig:erlang} shows the cost ratio
  results for Erlang and deterministic inter-request times.  Note in particular how the peak cost ratio for}
        {\em single-window on $M^{th}$}, with $M \ge 2$,
        reduces as $k$ increases and inter-request times become increasingly deterministic (far-right sub figure).

\begin{figure*}[t]
  \centering
  \subfigure[Erlang, $k=2$]{
    \includegraphics[trim = 0mm 2mm 0mm 0mm, width=0.32\textwidth]{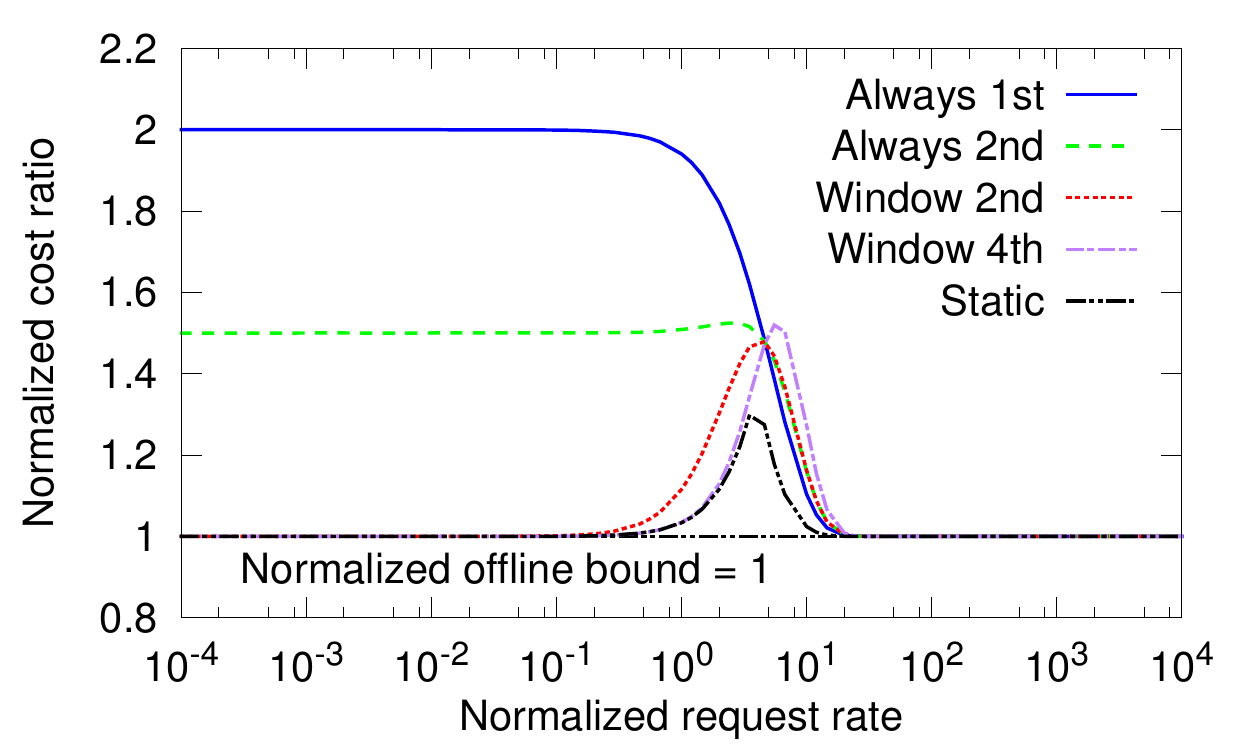}}
  \subfigure[Erlang, $k=4$]{
    \includegraphics[trim = 0mm 2mm 0mm 0mm, width=0.32\textwidth]{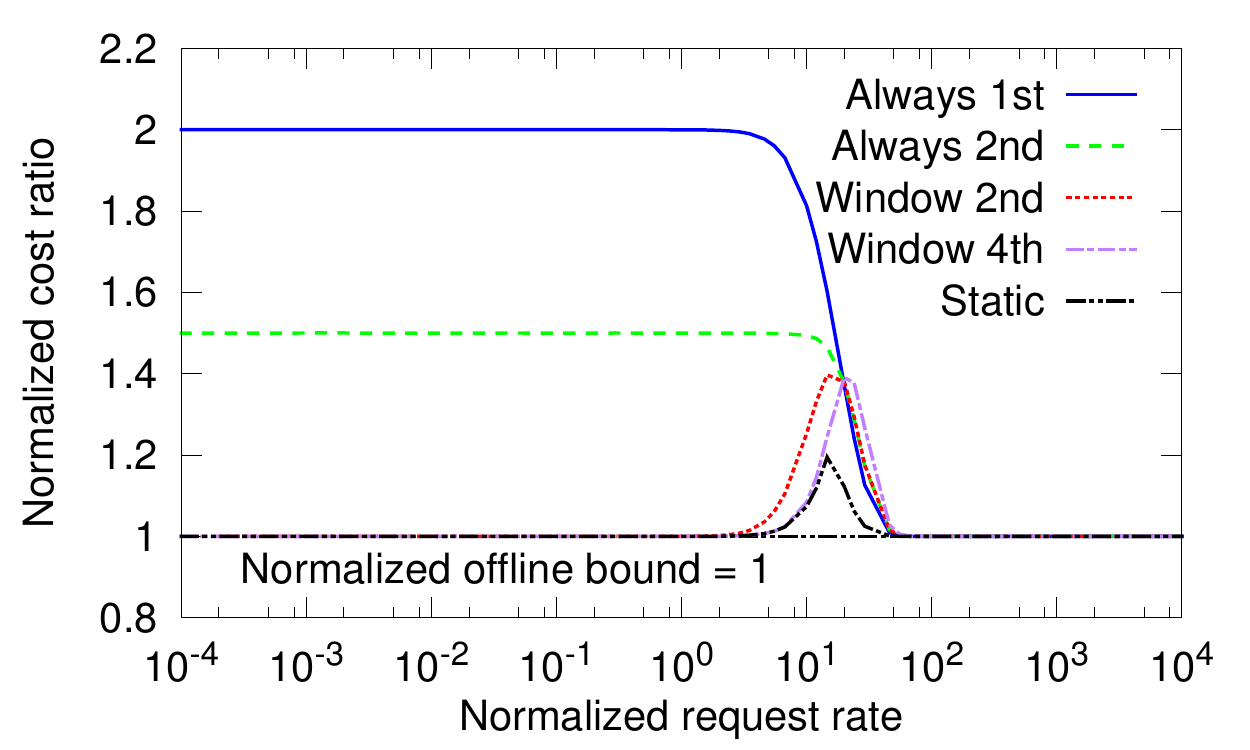}}
  \subfigure[Deterministic]{
    \includegraphics[trim = 0mm 2mm 0mm 0mm, width=0.32\textwidth]{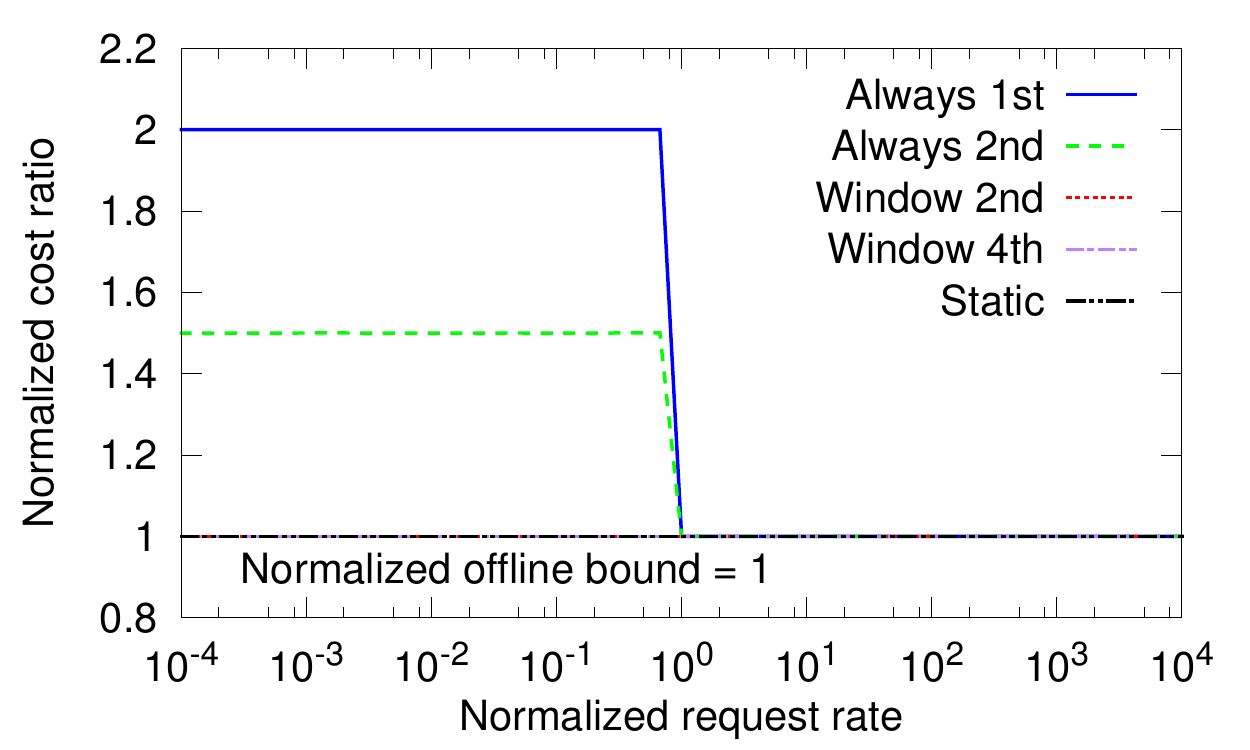}}
  \vspace{-8pt}
  \caption{\revfour{Summary results using lower-variability distributions and policies with $W=T=R$.}{Cost ratios for low variability inter-request time distributions and $W=T=R$.}}
  \label{fig:erlang}
  \vspace{-6pt}
\end{figure*}

{\bf Pareto:}
\revfour{To capture the impact under
heavy-tailed inter-request time distributions,
we used a Pareto distribution.  Here, equations
\revfour{(\ref{eqn:pareto-A}) and (\ref{eqn:pareto-B})}{(\ref{eqn:pareto-A1}), (\ref{eqn:pareto-A2}) and (\ref{eqn:pareto-B})}
are inserted into equations (\ref{eqn:C_k1_v2}), (\ref{eqn:C_k2always_v2}), (\ref{eqn:C_M_v2}), and (\ref{eqn:C_k2_v2}),
and some simplifications are performed to make the expressions more compact.
Figure~\ref{fig:pareto} then plot these functions for the same example policies as used in previous figures
for three different $\alpha$.}{Results for Pareto inter-request time distributions are obtained
  using equations (\ref{eqn:pareto-A1}), (\ref{eqn:pareto-A2}) and (\ref{eqn:pareto-B}) to substitute for $E[a_i]$, $F(t)$
  and the integral of $F(t)$ in
  equations (\ref{eqn:C_k1_v2}), (\ref{eqn:C_k2always_v2}), (\ref{eqn:C_M_v2}), and (\ref{eqn:C_k2_v2}).
  Figure~\ref{fig:pareto} shows cost ratio results for three different values of $\alpha$.}
We note that (as per Theorem~\ref{thm:static-pareto}),
{\em static baseline} performs very poorly when $\alpha \rightarrow 1$ (and $t_m$ is small).
This is illustrated by the large peak
\revfour{}{cost ratio}
in Figure~\ref{fig:pareto}(a), where $\alpha=1.1$.
For larger $\alpha$ (e.g., $\alpha=2$ in Figure~\ref{fig:pareto}(c)),
this peak reduces substantially.  Otherwise, the results are similar as for the other inter-request distributions
in that the maximum observed peaks are for {\em always on $1^{st}$},
and
\revtwo{that the}{in that}
       {\em single-window on $2^{nd}$} has a tighter bound than {\em single-window on $4^{th}$},
suggesting that  {\em single-window on $2^{nd}$} with $T=R$ is a good choice.

\begin{figure*}[t]
  \centering
  \subfigure[ $\alpha=1.1$]{
    \includegraphics[trim = 0mm 2mm 0mm 0mm, width=0.32\textwidth]{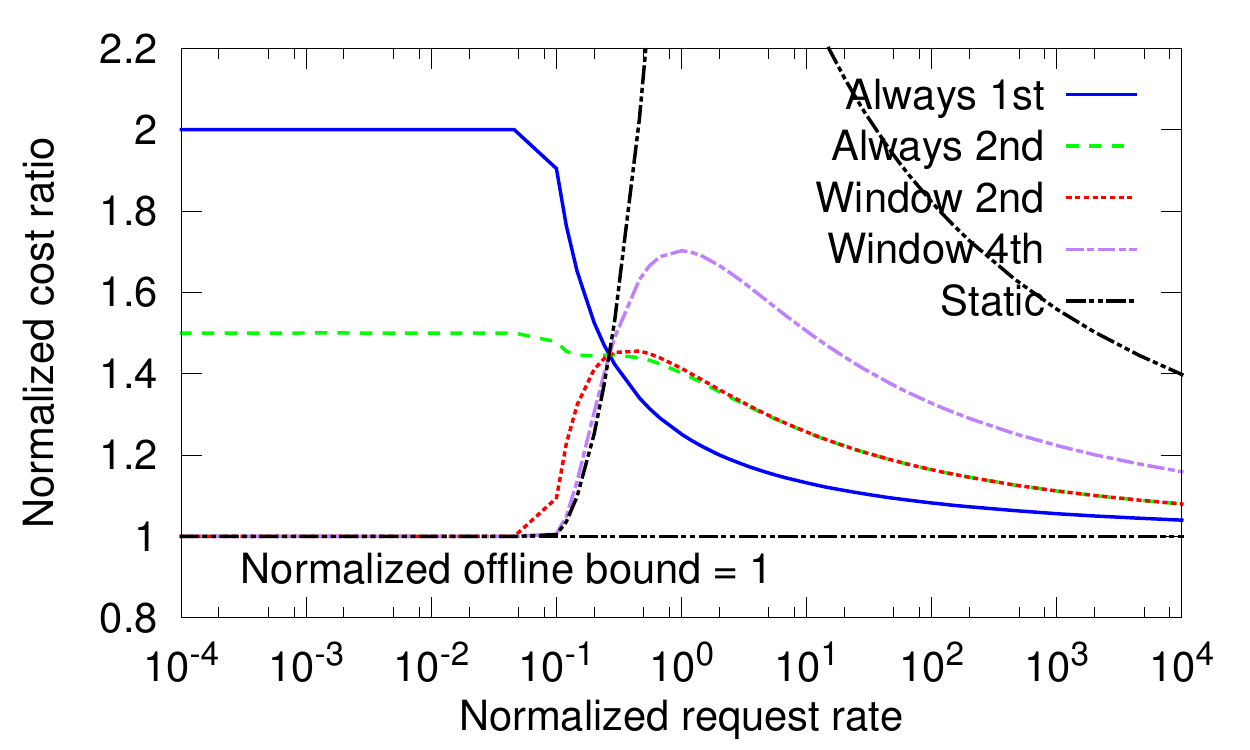}}
  \subfigure[ $\alpha=1.25$]{
    \includegraphics[trim = 0mm 2mm 0mm 0mm, width=0.32\textwidth]{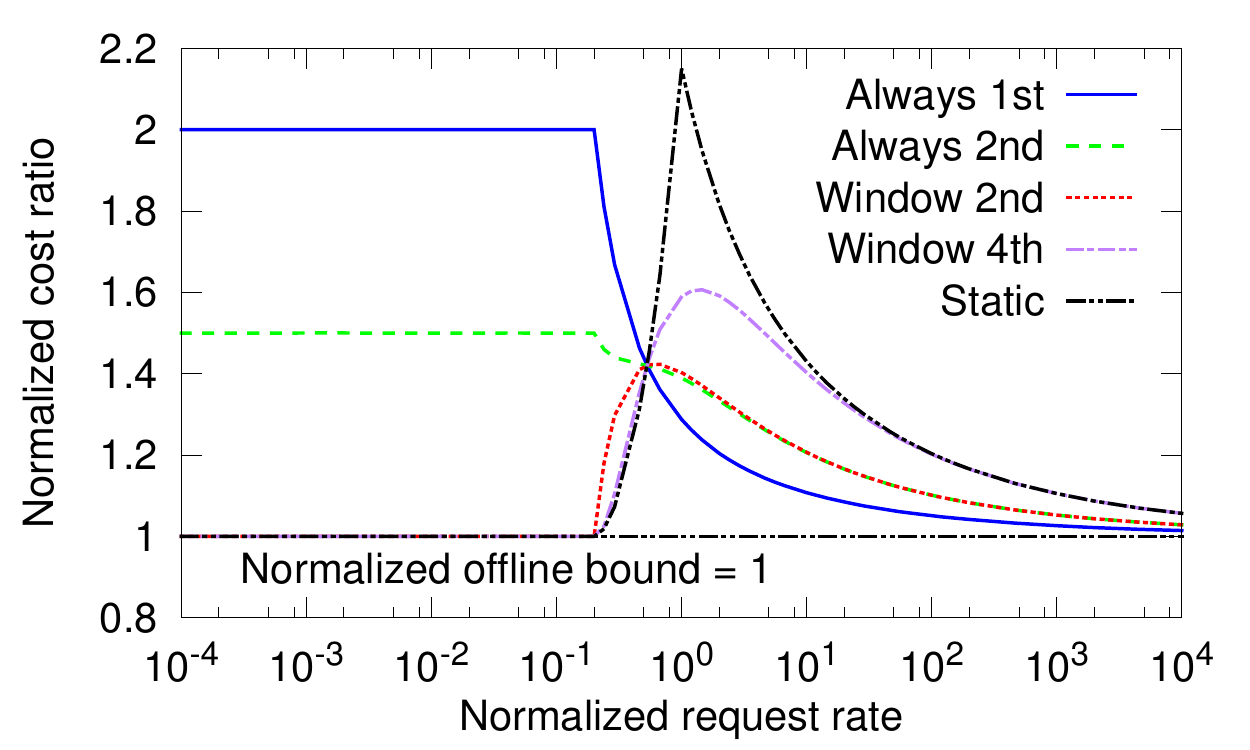}}
  \subfigure[ $\alpha=2$]{
    \includegraphics[trim = 0mm 2mm 0mm 0mm, width=0.32\textwidth]{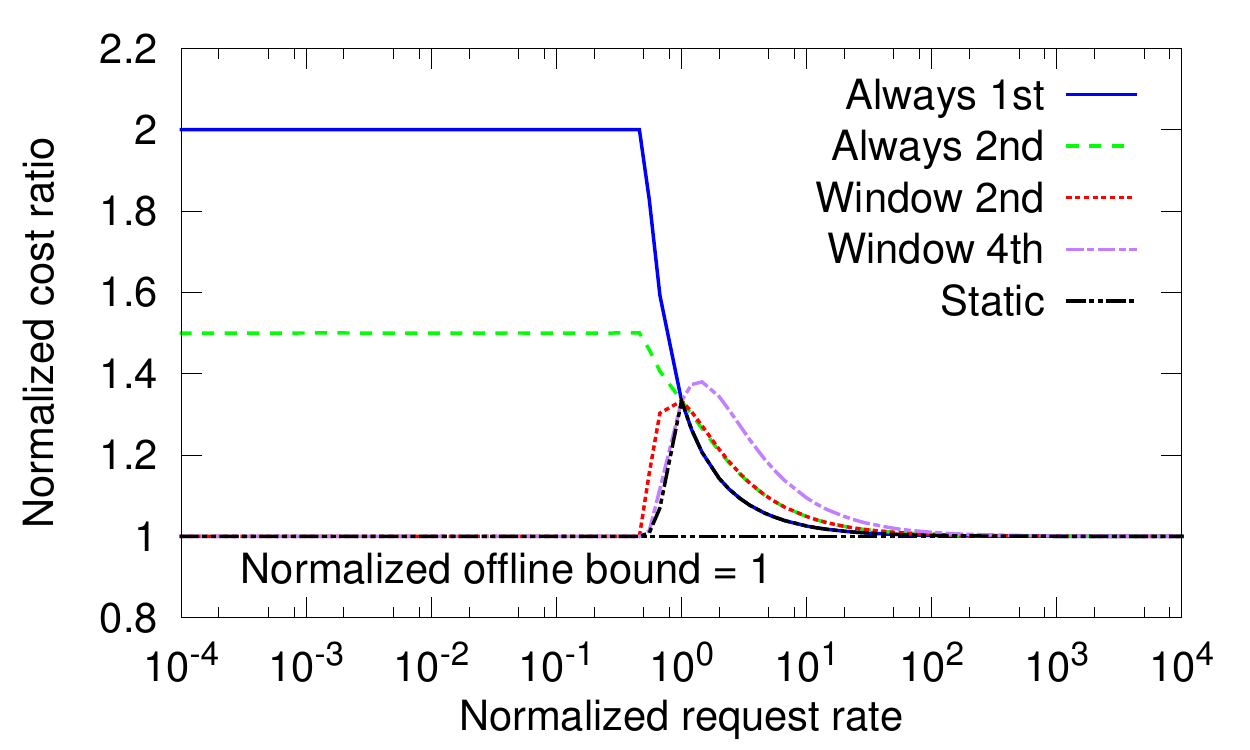}}
  \vspace{-8pt}
  \caption{\revfour{Heavy-tail summary results using Pareto distributions and policies with $W=T=R$.}{Cost ratios for Pareto inter-request time distributions and $W=T=R$.}}
  \label{fig:pareto}
  \vspace{-6pt}
 \end{figure*}

\begin{figure*}[t]
  \centering
  \subfigure[Pareto $\alpha=1.25$]{
    \includegraphics[trim = 0mm 2mm 0mm 0mm, width=0.32\textwidth]{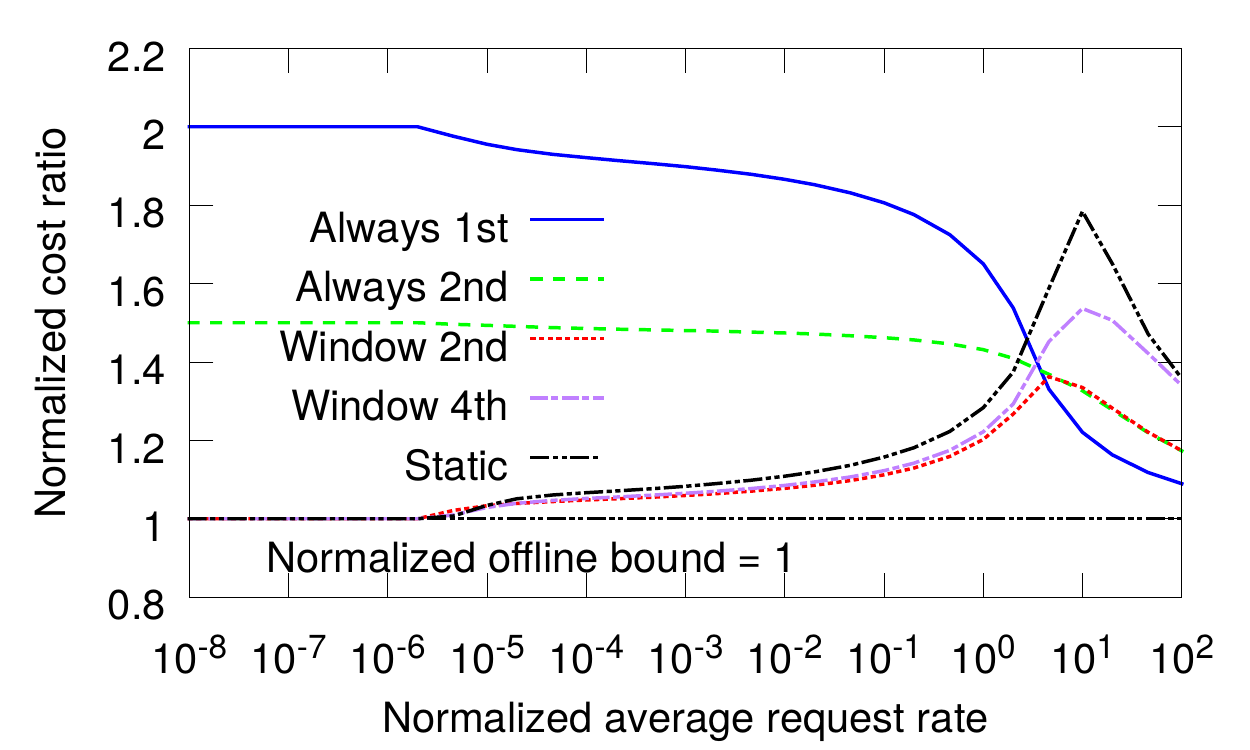}}
  \subfigure[Exponential]{
    \includegraphics[trim = 0mm 2mm 0mm 0mm, width=0.32\textwidth]{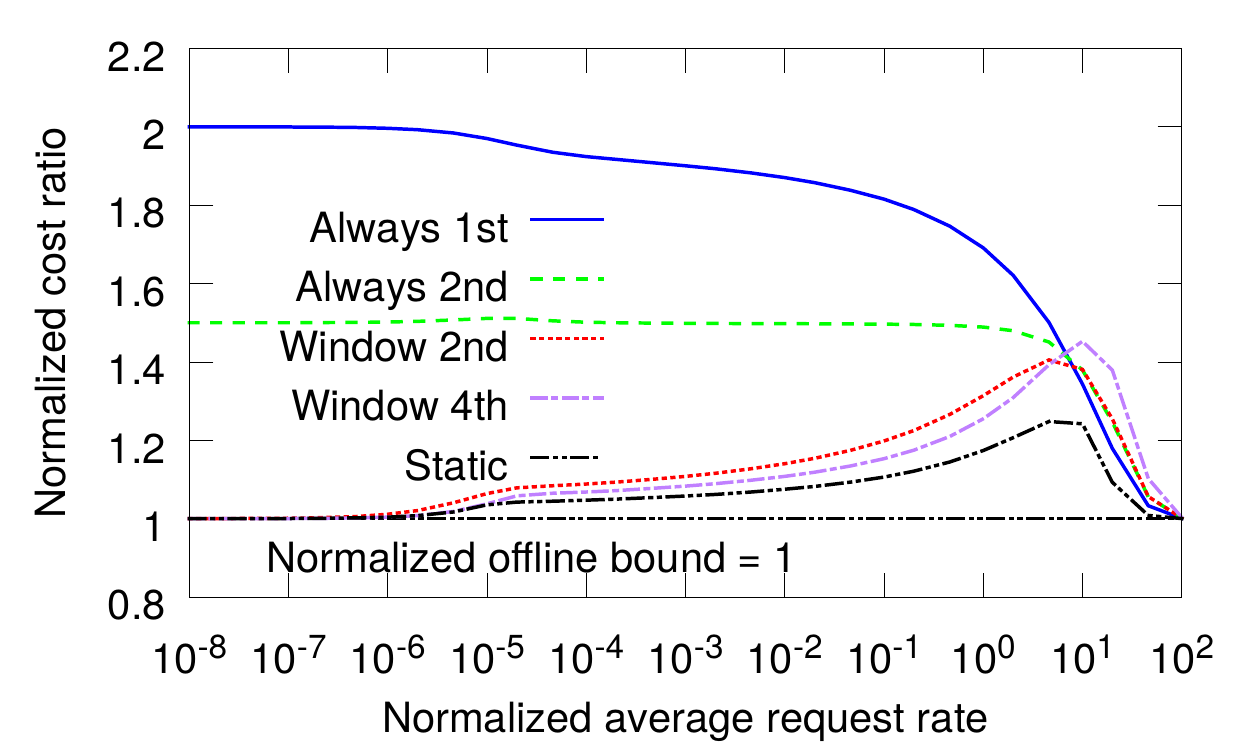}}
  \subfigure[Erlang $k=4$]{
    \includegraphics[trim = 0mm 2mm 0mm 0mm, width=0.32\textwidth]{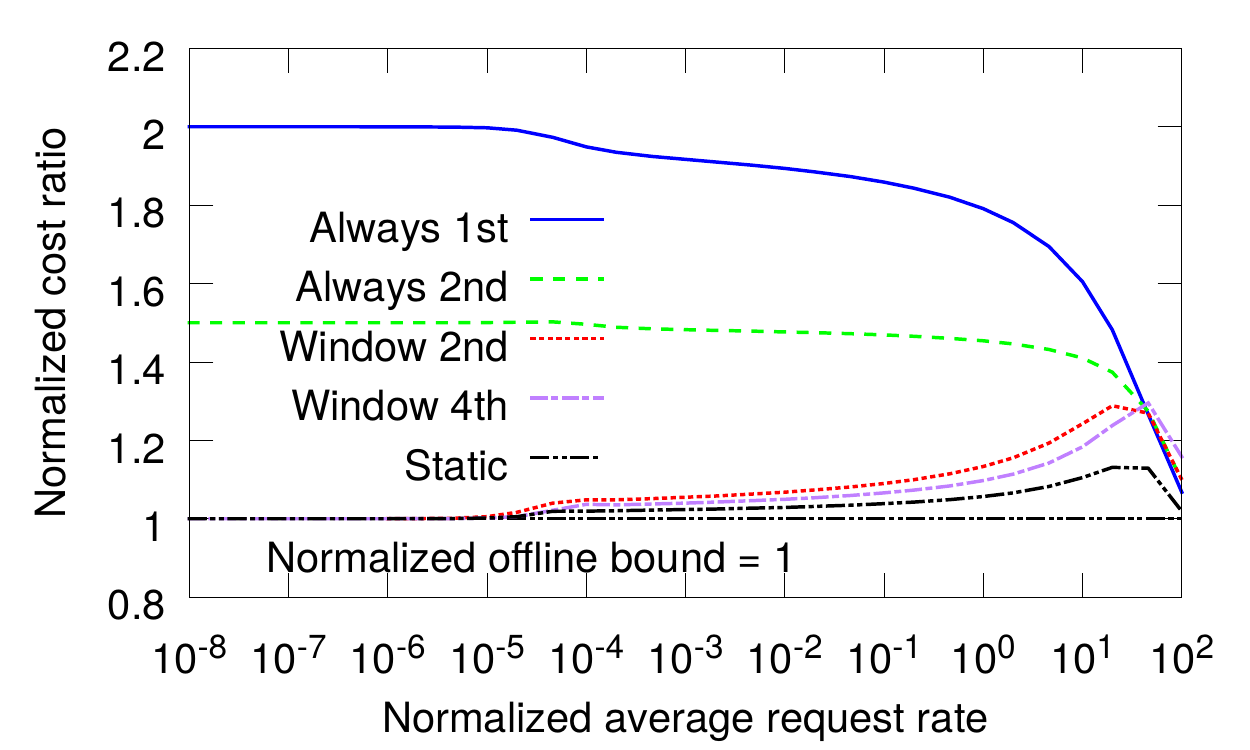}}
  \vspace{-8pt}
  \caption{Multi-file analysis for different inter-request \revfour{times;}{time distributions;} Zipf popularity distribution
    \revfour{($p_i \propto 1/i^{\gamma}$, $\gamma=1$, $N=1,000,000$).}{(frequency of requests to file $i$ proportional
      to $1/i^{\gamma}$, $\gamma=1$, and $1,000,000$ files).}}
  \label{fig:multi-base-case}
  \vspace{-6pt}
\end{figure*}

\section{Multi-file evaluation}\label{sec:multi}

Thus far we have focused primarily on
deriving analytic expressions and insights based on
the single file case.
In this section, we complement this analysis with
both analytic (Section~\ref{sec:multi-analytic}) and
trace-based (Section~\ref{sec:multi-trace})
evaluations for the multi-file case.

Throughout the section the different {\em cache on $M^{th}$ request} policies use the threshold values $W=T=R$.
Being the optimal worst-case choices,
\revfour{$W=T=R$ is the natural choice}{they are natural choices}
for this context,
since predicting individual object popularities is difficult and object
popularities in practice typically change over time.

\subsection{Heavy-tailed popularity analysis}\label{sec:multi-analytic}

File object popularities are typically highly skewed~\cite{GALM07,ZSGK09, MaSi15,CaEa17}.
For this analysis, we consider the delivery cost for a cache when the
file object popularity is Zipf
\revfour{distributed}{distributed with parameter $\gamma$} 
(i.e.,
\revfour{requests}{the frequency of requests}
to the $i^{\textrm{th}}$ most popular
\revfour{object}{file object}
is proportional to $\frac{1}{i^{\gamma}}$)
and all file objects have the same size.
Since both storage and bandwidth cost in our model scale proportional to the file size,
\revfour{extensions to account for variable sized objects are
  \revfour{trivial.}{trivial (e.g., by simply weighting each file or set of files by relative size).}}{results for variable-sized
  files could be easily obtained simply by weighting the costs for each file according to the file size.}
\revfour{Furthermore,
\revtwo{of}{for}
the same reason,
the results presented here are exact as long as the (expected)
average object size of each file popularity rank is the same.
If files of a particular popularity (e.g., the most popular files) are
larger/smaller than the other files of a particular workload,
\revtwo{this bias in object the popularity distribution (relating to file sizes)}{this bias}
can easily be adjusted for by assigning weights proportional to the object sizes.}{}

Figure~\ref{fig:multi-base-case} shows the cost ratio for the different policies
as a function of the normalized average request rate, when $\gamma$=$1$ and there are
\revfour{$N$=$10^6$ objects.}{$1,000,000$ files.}
To allow comparisons with the single-file case, we include results
\revtwo{for the cases that the individual inter-request times follow three
  different example distributions:}{for three
  \revfour{different distributions for the inter-request time of each file:}{forms for the inter-request time distribution of each file:}}
Pareto with $\alpha$=$1.25$ (Figure~\ref{fig:multi-base-case}(a)),
exponential (Figure~\ref{fig:multi-base-case}(b)), and
Erlang with $k$=$4$ (Figure~\ref{fig:multi-base-case}(c)).
\revfour{}{Different files have different distribution parameter values
  (value of $t_m$ for Pareto, $\lambda$ for exponential and Erlang)
  so as to achieve the desired Zipf request frequency distribution.}
\revtwo{The corresponding (omitted) results}{Results}
for Zipf popularity distributions
with $\gamma$=$0.75$ and $\gamma$=$1.25$ are very similar.

We note that {\em window on $M^{th}$} with $M=2$ has a
\revfour{maximum (peak)}{peak}
cost-ratio compared to the {\em offline optimal} of 1.4,
and significantly
\revfour{outperform}{outperforms}
the {\em always on $M^{th}$} policies.
These results again clearly highlight the value of a more selective insertion
\revtwo{policy, such as the window-based {\em cache on $M^{th}$ request} policy.}{policy.}

\revfour{Perhaps even more encouraging is}{Also important to note is}
the small gap between the
\revfour{{\em static}}{{\em static baseline}}
policy and the window-based policies
\revfour{when considering}{for}
exponential (Figure~\ref{fig:multi-base-case}(b)) and Erlang (Figure~\ref{fig:multi-base-case}(c))
distributed inter-request times, and
that the window-based policies outperform the
\revfour{{\em static}}{{\em static baseline}}
policy when inter-request times are Pareto distributed (Figure~\ref{fig:multi-base-case}(a)).
\revfour{Here, for the {\em static} policy, we have optimized whether to cache or not to cache each file individually
assuming perfect knowledge about both the request rate and the exact inter-request
\revtwo{distribution. For the case of exponential inter-request distribution (and as conjectured, also the Erlang distribution)
  this policy represents the
\revfour{optimal online TTL-policy}{{\em optimal online} policy}
  when any $T$ between zero and infinity can be selected
for each individual file separately.
Clearly, in practice, this knowledge (and hence also performance) is not possible to achieve.}{distribution,
  which yields minimum cost among all online policies for the distributions considered
  in Figures~\ref{fig:multi-base-case}(b) and~\ref{fig:multi-base-case}(c).}}{The {\em static baseline} policy optimizes
  its selection between always caching, and never caching, each file according to that file's inter-request time distribution.
  This yields minimum cost among all online policies for the distributions considered
  in Figures~\ref{fig:multi-base-case}(b) and~\ref{fig:multi-base-case}(c).}
Yet, {\em window on $2^{nd}$} and {\em window on $4^{th}$}
achieve close to this online bound, while treating all files the same.
\revfour{This is}{These results are}
highly encouraging and
\revfour{shows}{show}
that the same policy
can be used
\revfour{across}{for}
all files, regardless of popularity and
\revfour{inter-request pattern.}{the form of the inter-request time distribution.}

While the cost gap generally is small,
we note that the region over which the window-based
policies (and other online policies) leave a
\revfour{gap}{significant gap}
compared to the
\revfour{offline optimal}{{\em offline optimal}}
is substantially wider
for the multi-file case
\revfour{(say from $10^{-5}$ to $10^2$ for the exponential distribution; Figure~\ref{fig:multi-base-case}(b))
  than for the single file case (say $10^{-2}$ to 10; Figure~\ref{fig:exponential}).}{than for the single file case.
  For example, for exponential inter-request times, there is a significant gap in the multi-file case (Figure~\ref{fig:multi-base-case}(b))
  for normalized average request rate values from about $10^{-5}$ to $10^2$,
  while a significant gap in the single file case (Figure~\ref{fig:exponential}) appears only for request rate values from about $10^{-2}$ to $10$.}
\revfour{One reason for this is
\revtwo{that there is a high skew in the popularity distribution,
and that there almost always is a set of videos (somewhere in the popularity distribution)
with a non-negelectible request load that operates
in the region in which these policies leave a noticible gap for these videos.}{because of the highly-varying file popularities.
  Over a wide range of average request rates there are
  many
  files whose individual request rates fall in the region in which, in the single file case,
  there is a substantial gap compared to the {\em offline optimal}.}}{This is explained by the fact that in the multi-file case,
  files have widely-varying request rates, and over a wide range of average request rates there are files whose individual
  request rate falls in the region in which, in the single file case,
  there is a substantial gap compared to the {\em offline optimal}.}
Interestingly, the size of the set of files contributing to this gap will differ for different average request
\revfour{loads.}{rates.}
For example, at
\revfour{lower}{low}
average request rates,
there will be a small set
\revfour{(of relatively popular)}{of relatively popular}
files contributing to the gap.
However, due to the skew in popularity, this set
\revtwo{will be responsible for a
relatively larger per-file request load than the majority of the videos
(corresponding to the tail of less popular videos).}{will account for a disproportionate share of the total request volume.}
The small step around $10^{-6}$ to $10^{-5}$ is due to the most popular files entering this region. 
\revfour{At higher loads,}{At high average request rates,}
the number of
\revtwo{videos with a substantial individual gap
  (e.g., in peak-region of Figure~\ref{fig:exponential}) increases,}{files whose individual request rate
  falls in the region with a substantial gap increases,}
\revtwo{simultaneously as the relative fraction of all request that each of the individual videos operating in this region decreases,
  balancing the impact that this set of videos has on the overall gap.}{but these files now account for a
  disproportionately smaller share of the total request
  \revfour{load,}{rate,}
  an effect
  \revfour{which}{that}
  reduces the size of the peak gap for the {\em cache on $M^{th}$ request} policies.}
\revtwo{This effect also reduces the peak gap between the  {\em cache on $M^{th}$ request} policies
and the offline optimal.  Naturally,}{Note that}
for the {\em always on $M^{th}$} policies,
the worst-case gap (at low request rates) is the same as for the single file case.
However,
\revtwo{due to the skewed popularity distribution and many objects,
the top objects receive the bulk of the requests,
and the worst-case asymptotes are therefore not achieved until all videos reach this bound}{the worst-case asymptotes
  are not approached until the request rates for all files are low}
(which happens when the average request
\revfour{load}{rate}
falls somewhere between $10^{-4}$ and $10^{-6}$,
depending on the
\revfour{skew in the distributions considered).}{distribution skew).}

\subsection{Trace-based evaluation}\label{sec:multi-trace}

For our trace-based analysis,
we use a 20 month long trace capturing all YouTube video
requests from a campus network with 35,000 faculty, staff, and students.
The trace spans between July 1, 2008, and February 28, 2010,
and contains roughly 5.5 million requests to 2.4 million unique YouTube videos~\cite{CaEa17}.
This type of traffic is particularly interesting since file popularities
are ephemeral and there typically is a long tail of less popular files that individually
are viewed very few times, but that as an aggregate contribute to a significant part of the total views.
For example, in the university dataset,
90\% of the videos are requested three or fewer times,
and yet these videos make up half of the views observed on campus.

\begin{figure}[t]
  \centering
  \includegraphics[trim = 0mm 2mm 0mm 0mm, width=0.48\textwidth]{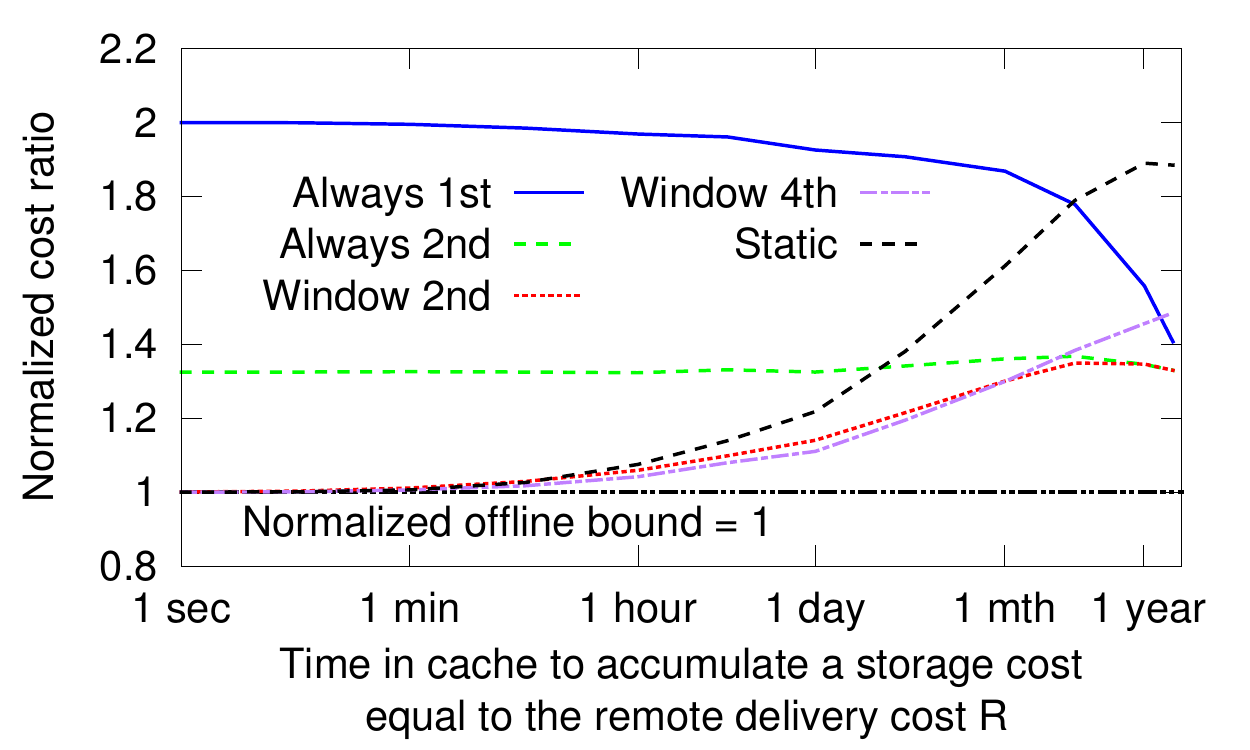}
  \vspace{-8pt}
  \caption{Trace-based simulation results with $W=T=R$.}
  \label{fig:uni-trace}
  \vspace{-6pt}
\end{figure}

Figure~\ref{fig:uni-trace} shows summary
\revfour{result}{results}
for our trace-based simulations.
Here,
\revfour{we plot}{for each policy we plot}
the ratio of the total aggregate delivery cost across all videos
divided by the corresponding delivery cost using the
\revfour{offline optimal}{{\em offline optimal}}
policy,
as a function of the time that a file would need to
be stored in cache to accumulate remote delivery cost $R$.
\revfour{}{With the unit normalization described in Section 2, $W = T = R$
  implies that storing a file in cache for $W = T$ time units would incur a cost equal to $R$,
  and so the x-axis values also correspond to the window sizes $W$ and $T$.}
For the {\em static baseline} policy, we make the optimistic assumptions that
(i) an {\em oracle} can be used to determine which of {\em always local} and {\em always remote}
will perform best for each individual video, and
(ii) in the case of {\em always local} the file object is not retrieved until the time of the first request (at a cost $R$).
In practice, such knowledge would not be available to any online policy.
Yet, the {\em window on $M^{th}$} policies significantly
outperform the {\em static baseline} policy.
This shows the importance of being selective in what
\revfour{is added and what is not}{is} added to the cache.

Due to the dominance of videos that see few
\revtwo{requests over the duration of the trace,}{requests,}
the results resemble
\revfour{the analytic results observed for lower request rates,}{the multi-file analytic results for lower average request rates,}
with the window-based {\em cache on $M^{th}$ request} policies performing the best.
For example, with 5.5 million requests to 2.4 million
\revfour{videos, the 20 month window corresponds to an {\em average} request rate of 2.3
  (in our normalized units used prior in the paper).}{videos over a 20 month period,
  a window size $W = T$ of 20 months would imply a normalized average request rate, as used on the x-axes in Figures 1-4, of 2.3.}
Furthermore, {\em window on $M^{th}$} with $M=4$ is a slightly better choice than $M=2$
for shorter than month-long caching thresholds $W=T=R$,
whereas for
\revfour{larger}{longer}
thresholds, $M=2$ is the better policy. 

\revfour{}{Much of the improvements over the {\em always on 1$^{st}$} policy, come from the
  {\em window on $M^{th}$} policies, with intermediate $M$, requiring smaller storage.
  For example, with a one-week threshold the average cache size at object evictions
  (across all object evictions) reduces from 153,729 objects (with {\em always on 1$^{st}$})
  to 57,652 ($M=2$) and 29,034 ($M=4$).  The corresponding values for a
  30-day  (``one month'' in Figure 5)
  threshold are: 343,139, 150,364 and 58,170.  Here, we also note that the variance
  in cache size needed over these time scales is relatively small,
  despite significant
  seasonal request volume variations in the trace
  (e.g., comparing summer breaks vs. regular term~\cite{CaEa17}).
  For example, in the case of the one-month threshold, the ratios of the maximum observed cache size
  to the minimum observed cache size at any two cache evictions instances (across the full 20-month trace)
  for these three policies are: 2.67, 2.21, and 2.02, respectively.}

To better understand
(i) which files contribute most of the absolute cost and
(ii) which files contribute most of the cost inflation (as seen in Figure~\ref{fig:uni-trace})
compared to the {\em offline optimal} bound, Figure~\ref{fig:uni-trace-breakdown} breaks down the cost
\revtwo{contributions made up for}{due to}
videos of different popularities.
Figure~\ref{fig:uni-trace-breakdown}(a) shows the costs of the different policies
associated with the videos with more than 20 views,
expressed relative to the total {\em offline optimal} bound cost.
This set contains 0.95\% of the unique videos and is responsible for 22.8\% of the views.
Figures~\ref{fig:uni-trace-breakdown}(b) and~\ref{fig:uni-trace-breakdown}(c) show the
corresponding results for the videos that have 4-20 views and 1-3 views over the duration of the 20-month long trace, respectively.
These two sets
\revfour{contains}{contain}
9.0\% and 90\% of the unique videos,
and are responsible for 27.6\% and 49.6\% of the views, respectively.

\begin{figure*}[t]
  \centering
  \subfigure[Top (more than 20 views)]{
    \includegraphics[trim = 0mm 2mm 0mm 0mm, width=0.32\textwidth]{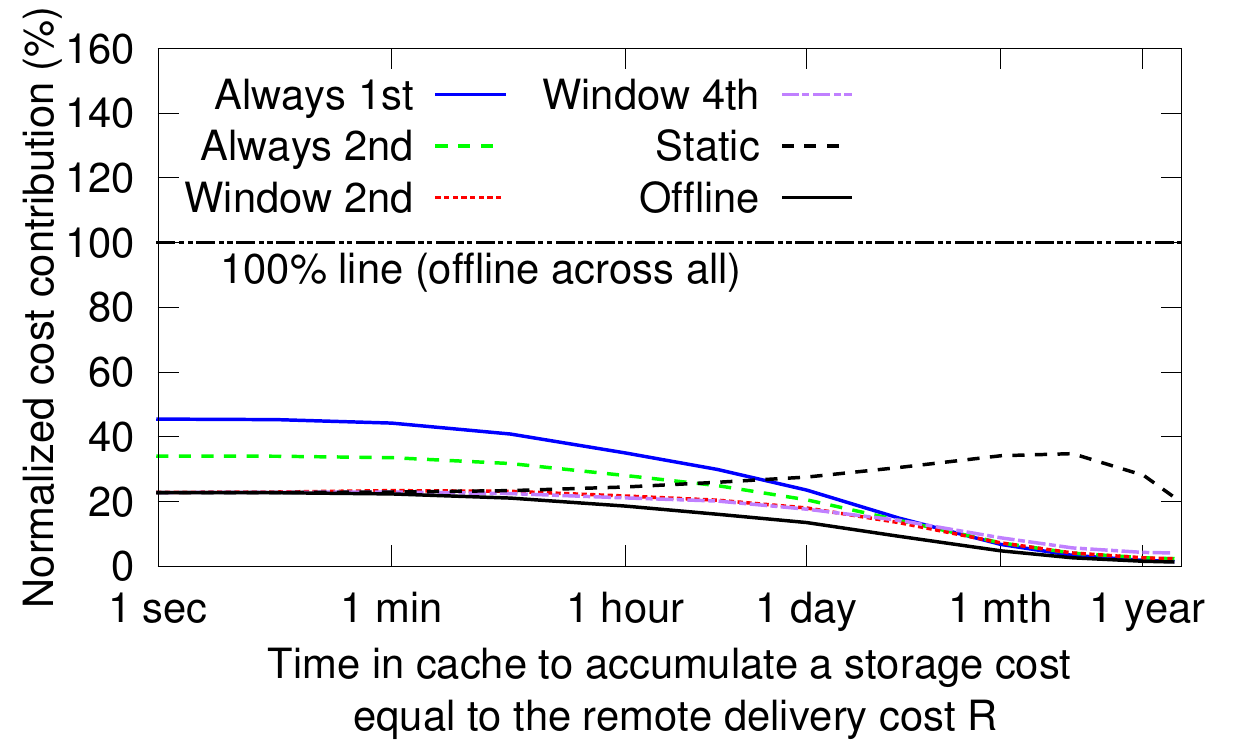}}
  \subfigure[Middle (4-20 views)]{
    \includegraphics[trim = 0mm 2mm 0mm 0mm, width=0.32\textwidth]{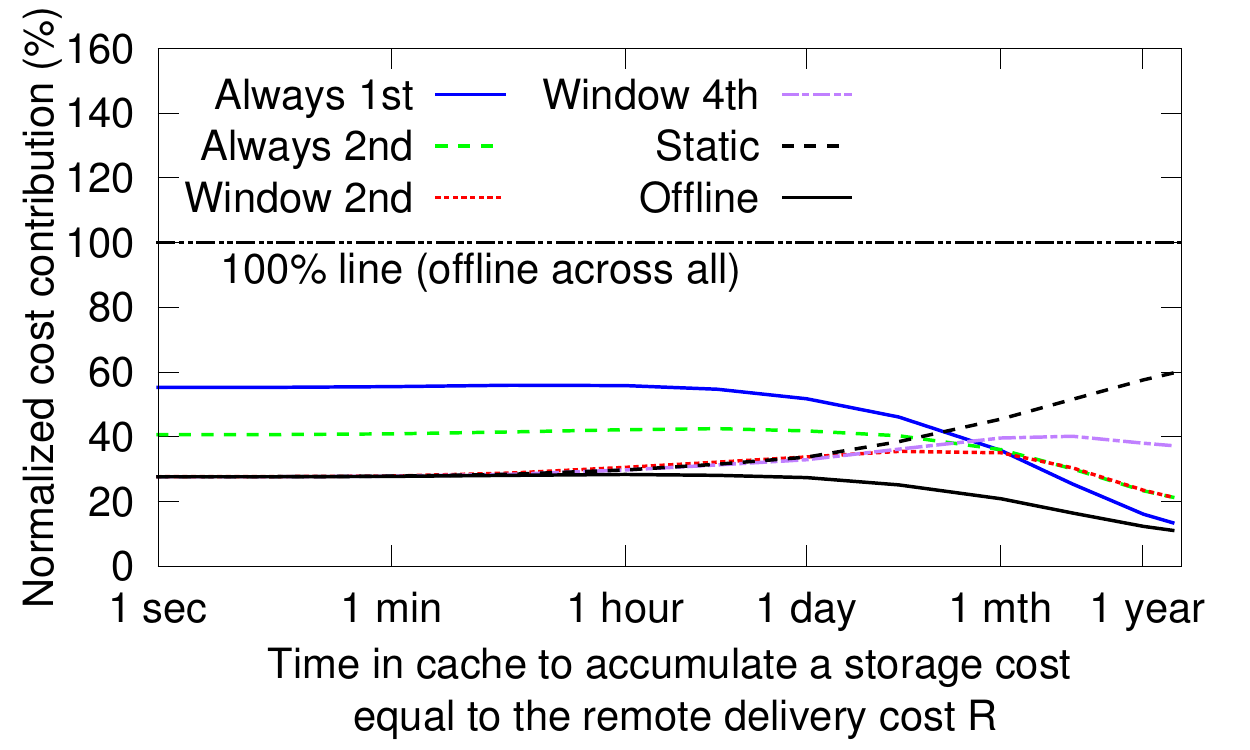}}
  \subfigure[Tail (1-3 views)]{
    \includegraphics[trim = 0mm 2mm 0mm 0mm, width=0.32\textwidth]{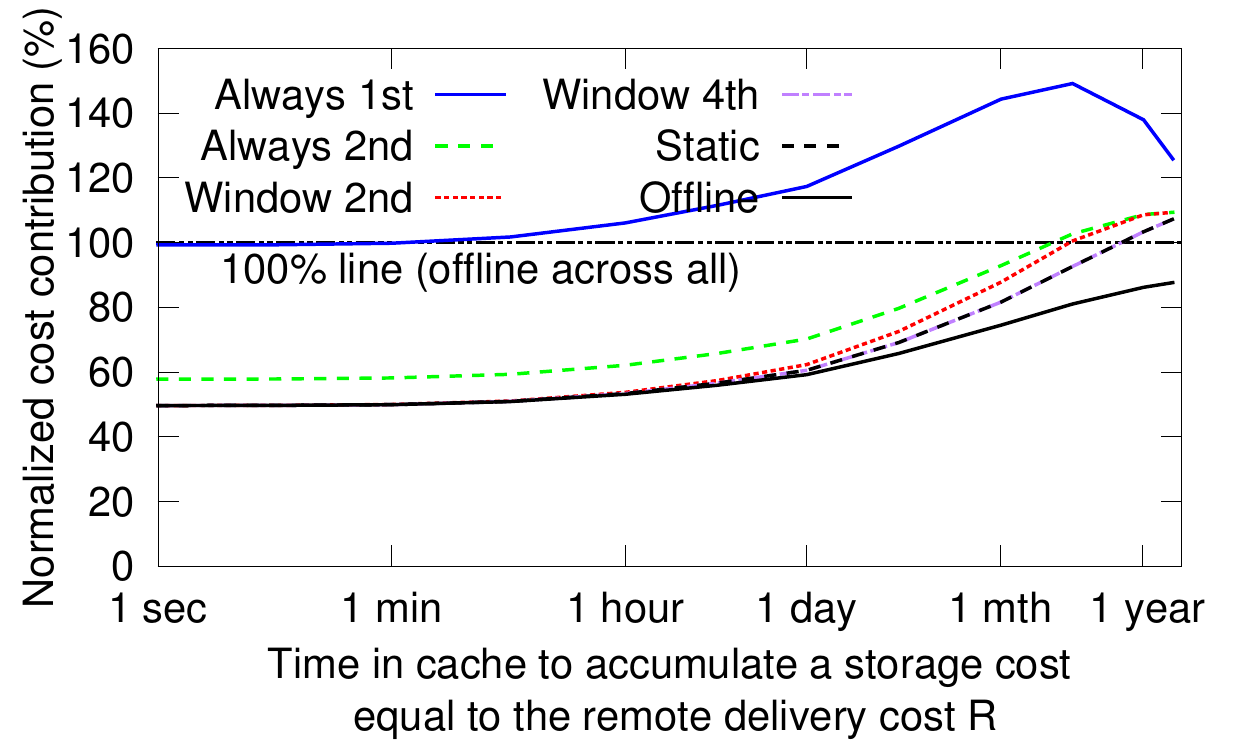}}
  \vspace{-8pt}
  \caption{Breakdown of cost contributions of the videos belonging to three different popularity categories. (University dataset.)}
  \label{fig:uni-trace-breakdown}
  \vspace{-6pt}
  \end{figure*}

These figures also show that the advantage of using window-based,
rather than purely counter-based, {\em cache on $M^{th}$ request} policies is consistent across
the three
\revtwo{classes of popularities,}{popularity classes,}
and that the fraction of the {\em offline optimal} caching cost
that the long-tail of less popular videos
\revtwo{}{contributes}
increases as the thresholds increase (and more videos are cached).
\revtwo{We note that much}{Much}
of the penalty of the {\em static baseline} policy is associated with the more popular
\revtwo{videos, as the relative cost penalty associated with these videos (curves for 4-20 and 20+ sets) sees relative increases
  for larger time $R$ values.  This increase is}{videos (comparing Figures~\ref{fig:uni-trace-breakdown}(a) and (b) to
  Figure~\ref{fig:uni-trace-breakdown}(c)), and longer thresholds,}
likely due to this policy
not capturing the ephemeral popularity of these videos.

Interestingly, even when the time in cache to accumulate
\revfour{}{a storage cost equal to the}
remote delivery cost $R$
is very small, the {\em few timers} (with 1-3 views) still contribute approximately 50\%
of the total cost for all policies,
except for {\em always on $1^{st}$} and {\em always on $2^{nd}$},
for which the contribution is even higher.
Overall, these results show the importance of
\revfour{discriminate}{selective}
caching policies such as the
window-based {\em cache on $M^{th}$} request policies analyzed in this paper.

\section{Related work}\label{sec:related}

Most existing caching works focus on replacement policies~\cite{PoBo03,BaOb00}.
However, recently it has been shown that the cache insertion policies play
a very important factor in reducing the total delivery costs~\cite{MaSi15,CaEa17}.
Motivated by these works,
this paper focuses on the delivery cost differences between
different
\revfour{discriminatory}{selective}
cache insertion policies.

Few papers (regardless of replacement policy) have modeled
\revfour{discriminatory}{selective}
cache insertion policies such as \emph{cache on $M^{th}$ request}.
This class of policies is motivated by the risk of cache pollution due to ephemeral content popularity and 
the long tail of one-timers (one-hit wonders) observed in edge networks~\cite{GALM07,ZSGK09, MaSi15,CaEa17}.
Recent works including trace-based evaluations of \emph{cache on $M^{th}$ request} policies~\cite{MaSi15,CaEa17}.
Carlsson and Eager~\cite{CaEa17} also present
simple analytic models for hit and insertion probabilities.
However, in contrast to the analysis presented here,
\revtwo{they ignore cache replacement, assuming}{they assume}
that content is not evicted until interest in the content has expired.
Garetto et al.~\cite{GaLM16, MaGL14} and Gast and Van Houdt~\cite{GaVa15,GaVa16} present
TTL-based recurrence expressions and
approximations for two variations of \emph{cache on $M^{th}$ request},
referred to as k-LRU and LRU(m) in their works.
However, none of these works present performance bounds or consider the total delivery cost.
In contrast, we derive both worst-case bounds and average-case analysis under a cost model
that captures both bandwidth and storage costs.

Finally,
it is important to note that
TTL-based
\revtwo{replacement}{eviction}
policies~\cite{JuBB03,BaMa05}
\revfour{}{(and variations thereof~\cite{CaEa18b})}
\revtwo{(considered in this paper) have}{have}
been found useful for approximating the performance
of capacity-driven replacement policies such as LRU~\cite{ChTW02, FrRR12, BDC+13, BGSC14, GaLM16}.
\revtwo{For an individual content provider,
  our results may therefore also be applicable to the case in which the provider use}{Our results may therefore also
  \revfour{be applicable to}{provide insight for}
  the case in which a content provider uses}
a fixed-sized cache.
Generalizations of the TTL-based Che-approximation~\cite{ChTW02} and TTL-based caches in general
have proven useful to analyze individual caches~\cite{ChTW02, FrRR12, BDC+13, BGSC14, GaLM16},
networks of caches~\cite{FNNT12, FDT+14, FNNT14, BGSC14, GaLM16},
and to optimize different system designs~\cite{CEGL14,DMT+16,FeRP16,MaTo15}.

\revrev{}{As we show here,
  \revfour{these type of elasticity}{elasticity}
  assumptions can also be a powerful toolbox
for deriving tight worst-case bounds and exact average-case cost ratios of different policies.
Furthermore,
\revfour{as argued in the paper,}{as discussed in Section 9.1,}
since both storage costs and bandwidth costs are proportional to the file sizes, the results
can also easily be extended to
\revfour{analyze variable}{scenarios with variable}
sized objects, at no additional computational cost.
In contrast, just finding lower and upper bounds for the cache miss rate of the
\revfour{optimal offline}{{\em optimal offline}}
policy is computationally expensive when caches are non-elastic~\cite{BeBH18} and even simple LRU
is hard to analyze under non-elastic constraints~\cite{King71,DaTo90}.}

\section{Conclusions}\label{sec:conclusion}

In this paper,
\revfour{we}{we consider the delivery costs of a content provider
that wants to minimize its delivery costs under the assumptions that the
resources it requires are elastic,
the content provider only pays for the resources that it
consumes, and costs are proportional to the resource usage.  Under these assumptions,
we}
first derived worst-case bounds for the optimal cost and competitive cost-ratios of different
classes of {\em cache on $M^{th}$ request} cache insertion policies.  Second, we derived explicit average cost
expressions and bounds under arbitrary inter-request
\revfour{}{time}
distributions, as well as for short-tailed inter-request
time distributions
\revtwo{(Erlang and deterministic), exponential inter-requests,}{(deterministic, Erlang, and exponential)}
and heavy-tailed inter-request
distributions (Pareto).  Finally, using these analytic results, we have
\revfour{present}{presented}
numerical evaluations and
cost comparisons that reveal insights into
\revtwo{the policies relative costs performance.}{the relative cost performance of the policies.}
Interestingly,
we have found that {\em single-window on $M^{th}$} with an intermediate $M$ (e.g., 2-4) and $T=R$ achieves
most of the benefits of this class of policies.  Choosing $T=R$ guarantees a worst-case competitive ratio
of $M+1$ (compared to the {\em optimal offline} policy), but typically performs much better.  For example,
we have found that
\revfour{}{this policy with}
$M=2$ closely tracks the
\revfour{{\em optimal online} when the optimal is known,}{{\em online optimal} policy for the short-tailed inter-request time distributions,}
and significantly
outperforms the standard
\revfour{non-discriminatory}{non-selective}
policy {\em always on $1^{st}$} across all inter-request distributions
considered here.  Using $M=4$ can result in further improvements for lower request rates (e.g., as associated with a
long tail of less popular file objects), but performs somewhat worse when request rates are intermediate
(where the gap between the online and offline policies is the greatest).  These results suggest that
{\em cache on $2^{nd}$} optimized to minimize worst-case costs provides good average performance,
making it an attractive choice for a wide range of practical conditions where request rates of
individual objects typically are not known and
\revfour{quickly can}{can quickly}
change.

\section*{Acknowledgements}

The edge-network trace was collected while the first author was a research associate at the University of Calgary.
We thank Carey Williamson and Martin Arlitt for providing access to this dataset.
This work was supported by funding from the Swedish Research Council (VR)
and the Natural Sciences and Engineering Research Council (NSERC) of Canada.

{\small
  	\bibliographystyle{ACM-Reference-Format}
\bibliography{references}


\begin{thebibliography}{00}


\ifx \showCODEN    \undefined \def \showCODEN     #1{\unskip}     \fi
\ifx \showDOI      \undefined \def \showDOI       #1{{\tt DOI:}\penalty0{#1}\ }
  \fi
\ifx \showISBNx    \undefined \def \showISBNx     #1{\unskip}     \fi
\ifx \showISBNxiii \undefined \def \showISBNxiii  #1{\unskip}     \fi
\ifx \showISSN     \undefined \def \showISSN      #1{\unskip}     \fi
\ifx \showLCCN     \undefined \def \showLCCN      #1{\unskip}     \fi
\ifx \shownote     \undefined \def \shownote      #1{#1}          \fi
\ifx \showarticletitle \undefined \def \showarticletitle #1{#1}   \fi
\ifx \showURL      \undefined \def \showURL       {\relax}        \fi
\providecommand\bibfield[2]{#2}
\providecommand\bibinfo[2]{#2}
\providecommand\natexlab[1]{#1}
\providecommand\showeprint[2][]{arXiv:#2}

\bibitem[\protect\citeauthoryear{Bahat and Makowski}{Bahat and
  Makowski}{2005}]%
        {BaMa05}
\bibfield{author}{\bibinfo{person}{O. Bahat} {and} \bibinfo{person}{A.~M.
  Makowski}.} \bibinfo{year}{2005}\natexlab{}.
\newblock \showarticletitle{Measuring consistency in {TTL}-based caches}.
\newblock \bibinfo{journal}{{\em Performance Evaluation\/}}
  \bibinfo{volume}{62}, \bibinfo{number}{1} (\bibinfo{year}{2005}),
  \bibinfo{pages}{439--455}.
\newblock


\bibitem[\protect\citeauthoryear{Barish and Obraczke}{Barish and
  Obraczke}{2000}]%
        {BaOb00}
\bibfield{author}{\bibinfo{person}{G. Barish} {and} \bibinfo{person}{K.
  Obraczke}.} \bibinfo{year}{2000}\natexlab{}.
\newblock \showarticletitle{World Wide Web caching: trends and techniques}.
\newblock \bibinfo{journal}{{\em IEEE Communications Magazine\/}}
  \bibinfo{volume}{38}, \bibinfo{number}{5} (\bibinfo{date}{May}
  \bibinfo{year}{2000}), \bibinfo{pages}{178--184}.
\newblock


\bibitem[\protect\citeauthoryear{Berger, Beckmann, and Harchol-Balter}{Berger
  et~al\mbox{.}}{2018}]%
        {BeBH18}
\bibfield{author}{\bibinfo{person}{D. Berger}, \bibinfo{person}{N. Beckmann},
  {and} \bibinfo{person}{M. Harchol-Balter}.} \bibinfo{year}{2018}\natexlab{}.
\newblock \showarticletitle{Practical Bounds on Optimal Caching with Variable
  Object Sizes}. In \bibinfo{booktitle}{{\em Proc. ACM SIGMETRICS}}.
\newblock


\bibitem[\protect\citeauthoryear{Berger, Gland, Singla, and Ciucu}{Berger
  et~al\mbox{.}}{2014}]%
        {BGSC14}
\bibfield{author}{\bibinfo{person}{D.~S. Berger}, \bibinfo{person}{P. Gland},
  \bibinfo{person}{S. Singla}, {and} \bibinfo{person}{F. Ciucu}.}
  \bibinfo{year}{2014}\natexlab{}.
\newblock \showarticletitle{Exact Analysis of {TTL} Cache Networks}.
\newblock \bibinfo{journal}{{\em Performance Evaluation\/}}
  \bibinfo{volume}{79} (\bibinfo{date}{Sep.} \bibinfo{year}{2014}),
  \bibinfo{pages}{2--23}.
\newblock


\bibitem[\protect\citeauthoryear{Bianchi, Detti, Caponi, and Melazzi}{Bianchi
  et~al\mbox{.}}{2013}]%
        {BDC+13}
\bibfield{author}{\bibinfo{person}{G. Bianchi}, \bibinfo{person}{A. Detti},
  \bibinfo{person}{A. Caponi}, {and} \bibinfo{person}{N.~B. Melazzi}.}
  \bibinfo{year}{2013}\natexlab{}.
\newblock \showarticletitle{Check Before Storing: What is the Performance Price
  of Content Integrity Verification in {LRU} Caching?}
\newblock \bibinfo{journal}{{\em ACM SIGCOMM Computer Communication Review\/}}
  \bibinfo{volume}{43}, \bibinfo{number}{3} (\bibinfo{date}{Jul.}
  \bibinfo{year}{2013}), \bibinfo{pages}{59--67}.
\newblock


\bibitem[\protect\citeauthoryear{Carlsson and Eager}{Carlsson and
  Eager}{2017}]%
        {CaEa17}
\bibfield{author}{\bibinfo{person}{N. Carlsson} {and} \bibinfo{person}{D.
  Eager}.} \bibinfo{year}{2017}\natexlab{}.
\newblock \showarticletitle{Ephemeral content popularity at the edge and
  implications for on-demand caching}.
\newblock \bibinfo{journal}{{\em IEEE Transactions on Parallel and Distributed
  Systems\/}} \bibinfo{volume}{28}, \bibinfo{number}{6} (\bibinfo{date}{June}
  \bibinfo{year}{2017}), \bibinfo{pages}{1621--1634}.
\newblock


\bibitem[\protect\citeauthoryear{Carlsson, Eager, Gopinathan, and Li}{Carlsson
  et~al\mbox{.}}{2014}]%
        {CEGL14}
\bibfield{author}{\bibinfo{person}{N. Carlsson}, \bibinfo{person}{D. Eager},
  \bibinfo{person}{A. Gopinathan}, {and} \bibinfo{person}{Z. Li}.}
  \bibinfo{year}{2014}\natexlab{}.
\newblock \showarticletitle{Caching and Optimized Request Routing in
  Cloud-based Content Delivery Systems}.
\newblock \bibinfo{journal}{{\em Performance Evaluation\/}}
  \bibinfo{volume}{79} (\bibinfo{date}{Sep.} \bibinfo{year}{2014}),
  \bibinfo{pages}{38--55}.
\newblock


\bibitem[\protect\citeauthoryear{Carlsson and Eager}{Carlsson and
  Eager}{2018}]%
        {CaEa18b}
\bibfield{author}{\bibinfo{person}{Niklas Carlsson} {and}
  \bibinfo{person}{Derek~L. Eager}.} \bibinfo{year}{2018}\natexlab{}.
\newblock \showarticletitle{Caching in the Clouds: Optimized Dynamic Cache
  Instantiation in Content Delivery Systems}.
\newblock \bibinfo{journal}{{\em CoRR\/}}  \bibinfo{volume}{abs/1803.03914}
  (\bibinfo{year}{2018}).
\newblock


\bibitem[\protect\citeauthoryear{Che, Tung, and Wang}{Che
  et~al\mbox{.}}{2002}]%
        {ChTW02}
\bibfield{author}{\bibinfo{person}{H. Che}, \bibinfo{person}{Y. Tung}, {and}
  \bibinfo{person}{Z. Wang}.} \bibinfo{year}{2002}\natexlab{}.
\newblock \showarticletitle{Hierarchical Web Caching Systems: Modeling, Design
  and Experimental Results}.
\newblock \bibinfo{journal}{{\em IEEE Journal on Selected Areas in
  Communications\/}} \bibinfo{volume}{20}, \bibinfo{number}{7}
  (\bibinfo{date}{Sep.} \bibinfo{year}{2002}), \bibinfo{pages}{1305--1314}.
\newblock


\bibitem[\protect\citeauthoryear{Dan and Towsley}{Dan and Towsley}{1990}]%
        {DaTo90}
\bibfield{author}{\bibinfo{person}{A. Dan} {and} \bibinfo{person}{D. Towsley}.}
  \bibinfo{year}{1990}\natexlab{}.
\newblock \showarticletitle{An Approximate Analysis of the {LRU} and {FIFO}
  Buffer Replacement Schemes}. In \bibinfo{booktitle}{{\em Proc. ACM
  SIGMETRICS}}. \bibinfo{pages}{143--152}.
\newblock


\bibitem[\protect\citeauthoryear{Dehghan, Massoulie, Towsley, Menasche, and
  Tay}{Dehghan et~al\mbox{.}}{2016}]%
        {DMT+16}
\bibfield{author}{\bibinfo{person}{M. Dehghan}, \bibinfo{person}{L. Massoulie},
  \bibinfo{person}{D. Towsley}, \bibinfo{person}{D.~S. Menasche}, {and}
  \bibinfo{person}{Y.~C. Tay}.} \bibinfo{year}{2016}\natexlab{}.
\newblock \showarticletitle{A utility optimization approach to network cache
  design}. In \bibinfo{booktitle}{{\em Proc. IEEE INFOCOM}}.
  \bibinfo{pages}{1--9}.
\newblock


\bibitem[\protect\citeauthoryear{Ferragut, Rodriguez, and Paganini}{Ferragut
  et~al\mbox{.}}{2016}]%
        {FeRP16}
\bibfield{author}{\bibinfo{person}{A. Ferragut}, \bibinfo{person}{I.
  Rodriguez}, {and} \bibinfo{person}{F. Paganini}.}
  \bibinfo{year}{2016}\natexlab{}.
\newblock \showarticletitle{Optimizing {TTL} Caches Under Heavy-Tailed
  Demands}. In \bibinfo{booktitle}{{\em Proc. ACM SIGMETRICS}}.
  \bibinfo{pages}{101--112}.
\newblock


\bibitem[\protect\citeauthoryear{Fofack, Dehghan, Towsley, Badov, and
  Goeckel}{Fofack et~al\mbox{.}}{2014}]%
        {FDT+14}
\bibfield{author}{\bibinfo{person}{N.~C. Fofack}, \bibinfo{person}{M. Dehghan},
  \bibinfo{person}{D. Towsley}, \bibinfo{person}{M. Badov}, {and}
  \bibinfo{person}{D.~L. Goeckel}.} \bibinfo{year}{2014}\natexlab{}.
\newblock \showarticletitle{On the Performance of General Cache Networks}. In
  \bibinfo{booktitle}{{\em Proc. VALUETOOLS}}. \bibinfo{pages}{106--113}.
\newblock


\bibitem[\protect\citeauthoryear{Fofack, Nain, Neglia, and Towsely}{Fofack
  et~al\mbox{.}}{2012}]%
        {FNNT12}
\bibfield{author}{\bibinfo{person}{N.~C. Fofack}, \bibinfo{person}{P. Nain},
  \bibinfo{person}{G. Neglia}, {and} \bibinfo{person}{D. Towsely}.}
  \bibinfo{year}{2012}\natexlab{}.
\newblock \showarticletitle{Analysis of {TTL}-based Cache Networks}. In
  \bibinfo{booktitle}{{\em Proc. VALUETOOLS}}. \bibinfo{pages}{1--10}.
\newblock


\bibitem[\protect\citeauthoryear{Fofack, Nain, Neglia, and Towsley}{Fofack
  et~al\mbox{.}}{2014}]%
        {FNNT14}
\bibfield{author}{\bibinfo{person}{N.~C. Fofack}, \bibinfo{person}{P. Nain},
  \bibinfo{person}{G. Neglia}, {and} \bibinfo{person}{D. Towsley}.}
  \bibinfo{year}{2014}\natexlab{}.
\newblock \showarticletitle{Performance evaluation of hierarchical {TTL}-based
  cache networks}.
\newblock \bibinfo{journal}{{\em Computer Networks\/}}  \bibinfo{volume}{65}
  (\bibinfo{year}{2014}), \bibinfo{pages}{212--231}.
\newblock


\bibitem[\protect\citeauthoryear{Fricker, Robert, and Roberts}{Fricker
  et~al\mbox{.}}{2012}]%
        {FrRR12}
\bibfield{author}{\bibinfo{person}{C. Fricker}, \bibinfo{person}{P. Robert},
  {and} \bibinfo{person}{J. Roberts}.} \bibinfo{year}{2012}\natexlab{}.
\newblock \showarticletitle{A Versatile and Accurate Approximation for {LRU}
  Cache Performance}. In \bibinfo{booktitle}{{\em Proc. ITC}}.
  \bibinfo{pages}{8:1--8:8}.
\newblock


\bibitem[\protect\citeauthoryear{Garetto, Leonardi, and Martina}{Garetto
  et~al\mbox{.}}{2016}]%
        {GaLM16}
\bibfield{author}{\bibinfo{person}{M. Garetto}, \bibinfo{person}{E. Leonardi},
  {and} \bibinfo{person}{V. Martina}.} \bibinfo{year}{2016}\natexlab{}.
\newblock \showarticletitle{A Unified Approach to the Performance Analysis of
  Caching Systems}.
\newblock \bibinfo{journal}{{\em ACM Transactions on Modeling and Performance
  Evaluation of Computer Systems\/}} \bibinfo{volume}{1}, \bibinfo{number}{7}
  (\bibinfo{date}{May} \bibinfo{year}{2016}), \bibinfo{pages}{12:1--12:28}.
\newblock


\bibitem[\protect\citeauthoryear{Gast and Houdt}{Gast and Houdt}{2016}]%
        {GaVa16}
\bibfield{author}{\bibinfo{person}{N. Gast} {and} \bibinfo{person}{B.~Van
  Houdt}.} \bibinfo{year}{2016}\natexlab{}.
\newblock \showarticletitle{Asymptotically Exact {TTL}-Approximations of the
  Cache Replacement Algorithms {LRU(m)} and {h-LRU}}. In
  \bibinfo{booktitle}{{\em Proc. ITC}}. \bibinfo{pages}{157--165}.
\newblock


\bibitem[\protect\citeauthoryear{Gast and Van~Houdt}{Gast and
  Van~Houdt}{2015}]%
        {GaVa15}
\bibfield{author}{\bibinfo{person}{N. Gast} {and} \bibinfo{person}{B.
  Van~Houdt}.} \bibinfo{year}{2015}\natexlab{}.
\newblock \showarticletitle{Transient and Steady-state Regime of a Family of
  List-based Cache Replacement Algorithms}. In \bibinfo{booktitle}{{\em Proc.
  ACM SIGMETRICS}}. \bibinfo{pages}{123--136}.
\newblock


\bibitem[\protect\citeauthoryear{Gill, Arlitt, Li, and Mahanti}{Gill
  et~al\mbox{.}}{2007}]%
        {GALM07}
\bibfield{author}{\bibinfo{person}{P. Gill}, \bibinfo{person}{M. Arlitt},
  \bibinfo{person}{Z. Li}, {and} \bibinfo{person}{A. Mahanti}.}
  \bibinfo{year}{2007}\natexlab{}.
\newblock \showarticletitle{{YouTube} traffic characterization: {A} view from
  the edge}. In \bibinfo{booktitle}{{\em Proc. IMC}}. \bibinfo{pages}{15--28}.
\newblock


\bibitem[\protect\citeauthoryear{Jung, Berger, and Balakrishnan}{Jung
  et~al\mbox{.}}{2003}]%
        {JuBB03}
\bibfield{author}{\bibinfo{person}{J. Jung}, \bibinfo{person}{A.~W. Berger},
  {and} \bibinfo{person}{H. Balakrishnan}.} \bibinfo{year}{2003}\natexlab{}.
\newblock \showarticletitle{Modeling {TTL}-based {I}nternet Caches}. In
  \bibinfo{booktitle}{{\em Proc. IEEE INFOCOM}}. \bibinfo{pages}{417--426}.
\newblock


\bibitem[\protect\citeauthoryear{{King III}}{{King III}}{1971}]%
        {King71}
\bibfield{author}{\bibinfo{person}{W.~F. {King III}}.}
  \bibinfo{year}{1971}\natexlab{}.
\newblock \showarticletitle{Analysis of Demand Paging Algorithms}. In
  \bibinfo{booktitle}{{\em Proc. IFIP Congress}}. \bibinfo{pages}{485--490}.
\newblock
\newblock
\shownote{(Also appeared as IBM Research. Report, RC 3288, Mar., 1971.).}


\bibitem[\protect\citeauthoryear{Ma and Towsley}{Ma and Towsley}{2015}]%
        {MaTo15}
\bibfield{author}{\bibinfo{person}{R.~T.~B. Ma} {and} \bibinfo{person}{D.
  Towsley}.} \bibinfo{year}{2015}\natexlab{}.
\newblock \showarticletitle{Cashing in on caching: on-demand contract design
  with linear pricing}. In \bibinfo{booktitle}{{\em Proc. ACM CoNEXT}}.
  \bibinfo{pages}{8:1--8:6}.
\newblock


\bibitem[\protect\citeauthoryear{Maggs and Sitaraman}{Maggs and
  Sitaraman}{2015}]%
        {MaSi15}
\bibfield{author}{\bibinfo{person}{B. Maggs} {and} \bibinfo{person}{K.
  Sitaraman}.} \bibinfo{year}{2015}\natexlab{}.
\newblock \showarticletitle{Algorithmic Nuggets in Content Delivery}.
\newblock \bibinfo{journal}{{\em ACM SIGCOMM Computer Communications Review\/}}
  \bibinfo{volume}{45}, \bibinfo{number}{3} (\bibinfo{date}{July}
  \bibinfo{year}{2015}), \bibinfo{pages}{52--66}.
\newblock


\bibitem[\protect\citeauthoryear{Martina, Garetto, and Leonardi}{Martina
  et~al\mbox{.}}{2014}]%
        {MaGL14}
\bibfield{author}{\bibinfo{person}{V. Martina}, \bibinfo{person}{M. Garetto},
  {and} \bibinfo{person}{E. Leonardi}.} \bibinfo{year}{2014}\natexlab{}.
\newblock \showarticletitle{A unified approach to the performance analysis of
  caching systems}. In \bibinfo{booktitle}{{\em Proc. IEEE INFOCOM}}.
  \bibinfo{pages}{2040--2048}.
\newblock


\bibitem[\protect\citeauthoryear{Podlipnig and B\"{o}sz\"{o}rmenyi}{Podlipnig
  and B\"{o}sz\"{o}rmenyi}{2003}]%
        {PoBo03}
\bibfield{author}{\bibinfo{person}{S. Podlipnig} {and} \bibinfo{person}{L.
  B\"{o}sz\"{o}rmenyi}.} \bibinfo{year}{2003}\natexlab{}.
\newblock \showarticletitle{A Survey of Web Cache Replacement Strategies}.
\newblock \bibinfo{journal}{{\it Comput. Surveys}} \bibinfo{volume}{35},
  \bibinfo{number}{4} (\bibinfo{date}{Dec.} \bibinfo{year}{2003}),
  \bibinfo{pages}{374--398}.
\newblock


\bibitem[\protect\citeauthoryear{Zink, Suh, Gu, and Kurose}{Zink
  et~al\mbox{.}}{2009}]%
        {ZSGK09}
\bibfield{author}{\bibinfo{person}{M. Zink}, \bibinfo{person}{K. Suh},
  \bibinfo{person}{Y. Gu}, {and} \bibinfo{person}{J. Kurose}.}
  \bibinfo{year}{2009}\natexlab{}.
\newblock \showarticletitle{Characteristics of {YouTube} network traffic at a
  campus network - measurements, models, and implications}.
\newblock \bibinfo{journal}{{\em Computer Networks\/}} \bibinfo{volume}{53},
  \bibinfo{number}{4} (\bibinfo{date}{Mar.} \bibinfo{year}{2009}),
  \bibinfo{pages}{501--514}.
\newblock


\end{thebibliography}
}

\appendix

\section{Additional worst-case proofs}

\subsection{Proof Theorem~\ref{thm:always-mth}: Always on $M^{th}$}

We next prove Theorem~\ref{thm:always-mth}, which specifies the worst-case properties of {\em always on $M^{th}$}.

\begin{proof}
  Case $T \le R$:
  For
  \revfour{$2 \le i \le |\mathcal{A}|$,}{$2 \le i \le N$,}
  let us define the following sets
  based on the operation of the {\em always on $M^{th}$} policy:
  $S^A_m = \{i | a_i \le R \land i$~is~the~$m^{th}$request$\}$,
  $S^C_m = \{i | R < a_i \land i$~is~the~$m^{th}$request$\}$,
  where we label request $i$ as the $m^{th}$ request when it is the $m^{th}$ request
  to the object since the object was removed from the cache most recently or
  the request sequence started.
  For the case that the previous request put the object into the cache
  or the
  \revfour{objects}{object}
  remained in the cache, we define the following sets:
  $S^A_+ = \{i | a_i \le T       \land i \notin \cup_{m=2}^M S^A_m \}$,
  $S^B_+ = \{i | T < a_i \le R \land i \notin \cup_{m=2}^M S^A_m \}$, and
  $S^C_+ = \{i | R < a_i     \land i \notin \cup_{m=2}^M S^C_m \}$.
  Note that the set $S^A_+$ corresponds to cache hits using the {\em always on $M^{th}$} policy,
  and that sets $S^B_+$ and $S^C_+$ corresponds to cases where the counter is reset after the object
  has been removed from the cache (and the cache
  \revfour{endured}{incurred}
  an extra storage cost $T$ after most recent prior request).

  Now, for an arbitrary
  \revrev{arrival pattern}{request sequence}
  $\mathcal{A}$, we can bound the cost of the {\em always on $M^{th}$} policy
  as follows:
  $C^{always}_{M,T} \le R + \sum_{m=2}^M |S^A_m| R + \sum_{m=2}^M |S^C_m| R + \sum_{i \in S^A_+} a_i + (|S^B_+|+|S^C_+|)(R+T) + T$, where the final $T$ only
  is needed if the last request in the sequence is from set $S^A_+$.
  (In all other cases the bound becomes loose.)
  Now, noting that (i) $|S^A_M|+|S^C_M| \le |S^A_{M-1}|+|S^C_{M-1}| \le ... \le |S^A_{2}|+|S^C_{2}|$,
  (ii) $\sum_{m=2}^M (|S^A_m|+S^C_m|) \le (M_1) (|S^A_{2}|+|S^C_{2}|)$, and
  (iii) $|S^B_{+}| + |S^C_{+}| \le |S^A_{2}|+|S^C_{2}|$,
  we can write $C^{always}_{M,T} \le R + M (|S^A_{2}|+|S^C_{2}|) R + (|S^A_{2}|+|S^C_{2}|) T + \sum_{i \in S^A_+} a_i + T$.

  For the {\em optimal offline} policy, we note that all requests in the sets
  $S^A_m$, $S^A_+$, $S^B_+$
  \revfour{corresponds}{correspond}
  to cache hits (associated with an extra storage cost $a_i$),
  whereas the remaining requests are cache misses (associated with a remote access cost $R$).
  Therefore, for the same
  \revrev{arrival pattern}{request sequence}
  $\mathcal{A}$, the cost of the optimal (offline) policy can be bounded as follows:
  $C_{opt}^{offline} = R + \sum_{i \in {\cup_{m=2}^M S^A_m}} a_i + \sum_{m=2}^M |S^C_m| R + \sum_{i \in S^A_+} a_i + \sum_{i \in S^B_+} a_i + |S^C_+|R$
  $\ge R + \sum_{m=2}^M R |S^C_m| + T |S^B_+| + |S^C_+| R + \sum_{i \in S^A_+} a_i$
  $\ge R + T (|S^A_2| + |S^C_2|) + \sum_{i \in S^A_+} a_i$.
  Here, we have used that
  (i) $ \sum_{i \in {\cup_{m=2}^M S^A_m}} a_i \ge 0$,
  (ii) $\sum_{i \in S^B_+} a_i \ge T |S^B_+|$,
  (iii) $\sum_{i \in S^C_+} a_i \ge R |S^C_+|$, and
  (vi) $T |S^B_+| + |S^C_+| R \ge T (|S^B_+| + |S^C_+|) = T (|S^A_2| + |S^A_2|$).
  Taking the ratio
  \begin{align}
    \frac{C^{always}_{M,T}}{C_{opt}^{offline}} & \le \frac{R + M (|S^A_{2}|+|S^C_{2}|) R + (|S^A_{2}|+|S^C_{2}|) T + \sum_{i \in S^A_+} a_i + T}{R + T (|S^A_2| + |S^C_2|) + \sum_{i \in S^A_+} a_i} \nonumber\\
    & \le \frac{R + M (|S^A_{2}|+|S^C_{2}|) R + (|S^A_{2}|+|S^C_{2}|) T + T}{R + T (|S^A_2| + |S^C_2|)},
  \end{align}
  it is easy to show that the worst case scenario happens with $|S^A_{2}|+|S^C_2| \rightarrow \infty$
  and that the worst-case bound is minimized by setting $T=R$.
  (To see this, note that $\frac{d}{dx}(\frac{R+MRx+Tx+T}{R+Tx}) = \frac{MR^2-T^2}{(R+Tx)^2} \ge 0$.)
  In this case the worst-case ratio reduces to $(M+1)$.

  Finally, we show that this ratio is achievable by a request pattern in which
  requests occurs in batches of $M$ requests,
  \revfour{where each such batch is}{with consecutive batches}
  spaced by more than $R$ time units.
  In this case,
  we have $a_i=0$ for all $i \in S^A_m$, $|S^A_+|=|S^B_+|=|S^C_+|=|S^B_m|=0$ for all $m$,
  and $|S^A_m|=|S^C_+|$ for all $m$.  In each batch cycle,
  the {\em always on $M^{th}$} policy downloads the object $M$ times
  from the server and keeps it in
  \revfour{stored}{the cache}
  for $R$ time units (at a total cost of $(M+1)R$ per batch).
  In contrast, the {\em optimal offline} policy downloads a single copy (at cost $R$),
  serves all $M$ requests using this copy, and then instantaneously deletes the copy (to avoid storage costs).

  Case $R \le T$:
  Let us define the following sets for
  \revfour{$2 \le i \le |\mathcal{A}|$:}{$2 \le i \le N$:}
  $G^A_m = \{i | a_i \le R \land i$~is~the~$m^{th}$request$\}$,
  $G^C_m = \{i | R < a_i \land i$~is~the~$m^{th}$request$\}$,
  where $2 \le m \le M$, and
  $G^A_+ = \{i | a_i \le R       \land i \notin \cup_{m=2}^M G^A_m \}$,
  $G^B_+ = \{i | R < a_i \le T \land i \notin \cup_{m=2}^M G^C_m \}$, and
  $G^C_+ = \{i | T < a_i     \land i \notin \cup_{m=2}^M G^C_m \}$.
  With these sets,
  \revfour{only}{only the requests in}
  sets $G^A_+$ and $G^B_+$ correspond to cache hits (with associated cost $a_i$)
  with the {\em always on $M^{th}$} policy.  Furthermore, with this policy,
  the requests in set $G^C_+$ corresponds to cases where the counter is reset after the object has been removed from the cache.
  These cache misses are therefore associated with an extra storage cost $T$
  (corresponding to the time the object was in the cache
  without being requested again after the most recent earlier request).
  Now, for
  an arbitrary
  \revrev{arrival pattern}{request sequence}
  $\mathcal{A}$, we can bound the cost of this policy as follows:
  $C^{always}_{M,T} \le R + \sum_{m=2}^M |G^A_m| R + \sum_{m=2}^M |G^C_m| R + \sum_{i \in G^A_+} a_i + \sum_{i \in G^B_+} a_i + |G^C_+|(R+T) + T$.
  Now, noting that (i) $\sum_{m=2}^M (|G^A_m| + |G^C_m|) \le (M-1) (|G^A_{2}| + |G^C_{2}|)$,
  (ii) $|G^C_{+}| = |G^A_{2}| + |G^C_{2}|$,
  (iii) $\sum_{i \in G^B_{+}} a_i \le |G^B_{+}|T$,
  we can write $C^{always}_{M,T} \le R + M |G^C_{+}| R + (|G^C_{+}|) T + |G^B_{+}|T + \sum_{i \in G^A_+} a_i +T$.
  Similarly, for the same
  \revrev{arrival pattern}{request sequence}
  $\mathcal{A}$, the cost of the {\em optimal offline} policy can be bounded as follows:
  $C_{opt}^{offline} = R + \sum_{m=2}^M \sum_{i \in G^A_m} a_i + \sum_{m=2}^M |G^C_m| R + \sum_{i \in G^A_+} a_i + |G^B_+| R + |G^C_+| R$
  $ \ge R + |G^B_+| R + |G^C_+| R + \sum_{i \in G^A_+} a_i$,
  where we have used that
  (i) $\sum_{i \in G^A_m} a_i \ge 0$, and
  (ii) $|G^C_m| \ge 0$.
  Taking the ratio
  \begin{align}
    \frac{C^{always}_{M,T}}{C_{opt}^{offline}} & \le \frac{  R + M |G^C_{+}| R + |G^C_{+}| T + |G^B_{+}|T + \sum_{i \in G^A_+} a_i +T}{R + |G^B_+| R + |G^C_+| R + \sum_{i \in G^A_+} a_i}\nonumber\\
    & \le \frac{R + M |G^C_{+}| R + |G^C_{+}| T + |G^B_{+}|T + T}{R + |G^B_+| R + |G^C_+| R}
  \end{align}
  \revfour{As earlier,}{it can be seen that, as earlier,}
  this ratio is minimized when $T=R$,
  for which it is bounded by $(M+1)$ when $|G^B_{+}|=0$ and $(|G^C_{+}|) \rightarrow \infty$.
  (To see this, note that $\frac{d}{dx}(\frac{R+MRx+Tx+BT+T}{R+BR+Rx}) = \frac{BM+M-1}{(B+x+1)^2} \ge 0$.)
  It is trivial to see that the same request pattern (but with batches separated by
  \revfour{at least}{more than}
  $T$ rather than $R$)
  results in the worst case being achieved.  This shows that the bound is tight.
\end{proof}

\subsection{Proof Theorem~\ref{thm:cache-2nd-two}: Single-window on $M^{th}$}

We next prove Theorem~\ref{thm:cache-2nd-two}, which specifies the worst-case properties of {\em single-window on $M^{th}$}.

\begin{proof}
  Case $T \le R$:
  For
  \revfour{$2 \le i \le |\mathcal{A}|$,}{$2 \le i \le N$,}
  let us define the following sets
  based on the operation of the {\em single-window on $M^{th}$} policy:
  $S^A_m = \{i | a_i \le T \land i$~is~an~$m^{th}$~candidate$\}$,
  $S^B_m = \{i | T < a_i \le R \land i$~is~an~$m^{th}$~candidate$\}$, and
  $S^C_m = \{i | R < a_i \land i$~is~an~$m^{th}$~candidate$\}$,
  where we say that a request is an $m^{th}$ candidate
  whenever the previous request in the request sequence to the object
  set the counter to $(m-1)$.
  Note that the first overall request and the first request after the object
  has been removed from the cache
  always sets the counter to one (and
  \revfour{the object}{the next request to the object}
  hence becomes a $2^{nd}$ candidate).
  For the case that the previous request put the object into the cache
  or the object remained in the cache, we define the following sets:
  $S^A_+ = \{i | a_i \le T       \land i \notin \cup_{m=2}^M S^A_m \}$,
  $S^B_+ = \{i | T < a_i \le R \land i \notin \cup_{m=2}^M S^B_m \}$, and
  $S^C_+ = \{i | R < a_i     \land i \notin \cup_{m=2}^M S^C_m \}$.
  Note that
  \revfour{the}{the requests in the}
  set $S^A_+$ corresponds to cache hits using the {\em single-window on $M^{th}$} policy,
  and that sets $S^B_+$ and $S^C_+$ correspond to cases where the counter is reset after the object has been removed from the cache
  (and the cache
  \revfour{endured}{incurred}
  an extra storage cost $T$ after the most recent prior request).

  Now, for
  \revrev{any}{an}
  arbitrary
  \revrev{arrival pattern}{request sequence}
  $\mathcal{A}$, we can bound cost of the {\em single-window on $M^{th}$} policy
  as follows:
  $C^{window}_{M,T} \le R + \sum_{m=2}^M |S^A_m| R + \sum_{m=2}^M |S^B_m| R + \sum_{m=2}^M |S^C_m| R + \sum_{i \in S^A_+} a_i + (|S^B_+|+|S^C_+|)(R+T) + T$,
  where the final $T$ only is needed if the last request in the sequence is from set $S^A_+$.
  (In all other cases the bound becomes loose.)
  Now, noting that (i) $|S^A_M| \le |S^A_{M-1}| \le ... \le |S^A_{2}|$,
  (ii) $\sum_{m=2}^M |S^A_m| \le (M_1) |S^A_{2}|$,
  (iii) $\sum_{m=2} |S^B_{m}| + \sum_{m=2} |S^C_{m}| + |S^B_{+}| + |S^C_{+}| \le |S^A_{2}|$, and
  (iv) $|S^B_{+}| + |S^C_{+}| \le |S^A_{2}|$,
  we can write $C^{window}_{M,T} \le R + M |S^A_{2}| R + |S^A_{2}| T + \sum_{i \in S^A_+} a_i +T$.

  For the {\em optimal offline} policy, we note that all requests in the sets
  $S^A_m$, $S^B_m$, $S^A_+$, $S^B_+$ correspond to cache hits (associated with an extra storage cost $a_i$),
  whereas the remaining requests are cache misses (associated with a remote access cost $R$).
  Therefore, for the same
  \revrev{arrival pattern}{request sequence}
  $\mathcal{A}$, the cost of the optimal (offline) policy can be bounded as follows:
  $C_{opt}^{offline} = R + \sum_{i \in {\cup_{m=2}^M S^A_m}} a_i + \sum_{i \in {\cup_{m=2}^M S^B_m}} a_i + \sum_{m=2}^M |S^C_m| R + \sum_{i \in S^A_+} a_i + \sum_{i \in S^B_+} + |S^C_+|R \ge R + T (\sum_{m=2}^M |S^B_m| + |S^B_+|) +  R (\sum_{m=2}^M |S^C_m| + |S^C_+|) + \sum_{i \in S^A_+} a_i \ge R + |S^A_2| T + \sum_{i \in S^A_+} a_i$.
  Here, we have used that
  (i) $ \sum_{i \in {\cup_{m=2}^M S^A_m}} a_i \ge 0$,
  (ii) $\sum_{i \in S^B_m} a_i \ge T |S^B_m|$,
  (iii) $\sum_{i \in S^C_m} a_i \ge T |S^C_m|$,
  (iv) $\sum_{i \in S^B_+} a_i \ge T |S^B_+|$,
  (v) $\sum_{i \in S^C_+} a_i \ge T |S^C_+|$,
  (vi) $T\sum_{m=2}^M |S^B_m| + R\sum_{m=2}^M |S^C_m| + T|S^B_+| + R|S^C_+| \ge T (\sum_{m=2}^M |S^B_m| + \sum_{m=2}^M |S^C_m| + |S^B_+| + |S^C_+|) = T |S^A_{2}|$.
  Taking the ratio
  \begin{align}
    \frac{C^{window}_{M,T}}{C_{opt}^{offline}} & \le \frac{R+M|S^A_{2}|R + |S^A_2|T + \sum_{i \in S^A_+} a_i + T}{R+T |S^A_{2}|+\sum_{i \in S^A_+} a_i} \nonumber\\
    & \le \frac{R+M|S^A_{2}|R + |S^A_2|T + T}{R+T |S^A_{2}|}
  \end{align}
  it is easy to show that the worst case scenario happens with $|S^A_{2}| \rightarrow \infty$
  and that the worst-case bound is minimized by setting $T=R$.
  In this case the worst-case ratio reduces to $(M+1)$.

  Finally, we show that this ratio is achievable by a request pattern in which
  requests occurs in batches of $M$ requests,
  \revfour{where each such batch is}{with consecutive batches}
  spaced by more than $R$ time units.
  In this case,
  we have $a_i=0$ for all $i \in S^A_m$, $|S^A_+|=|S^B_+|=|S^C_+|=|S^B_m|=0$ for all $m$,
  and $|S^A_m|=|S^C_+|$ for all $m$.  In each batch cycle,
  the {\em single-window on $M^{th}$} policy downloads the object $M$ times
  from the server and keeps it in
  \revfour{stored}{the cache}
  for $R$ time units (at a total cost of $(M+1)R$ per batch).
  In contrast, the {\em optimal offline} policy downloads a single copy (at cost $R$),
  serves all $M$ requests using this copy, and then instantaneously deletes the copy (to avoid storage costs).

  Case $R \le T$:
  Let us define the following sets for
  \revfour{$2 \le i \le |\mathcal{A}|$:}{$2 \le i \le N$:}
  $G^A_m = \{i | a_i \le R \land i$~is~an~$m^{th}$~candidate$\}$,
  $G^B_m = \{i | R < a_i \le T \land i$~is~an~$m^{th}$~candidate$\}$,
  $G^C_m = \{i | T < a_i \land i$~is~an~$m^{th}$~candidate$\}$,
  where $2 \le m \le M$, and
  $G^A_+ = \{i | a_i \le R       \land i \notin \cup_{m=2}^M G^A_m \}$,
  $G^B_+ = \{i | R < a_i \le T \land i \notin \cup_{m=2}^M G^B_m \}$, and
  $G^C_+ = \{i | T < a_i     \land i \notin \cup_{m=2}^M G^C_m \}$.
  With these sets,
  \revfour{only}{only the requests in the}
  sets $G^A_+$ and $G^B_+$ correspond to cache hits (with associated cost $a_i$)
  with the {\em single-window on $M^{th}$} policy.  Furthermore, with this policy,
  the requests in set $G^C_+$ correspond to cases where the counter is reset after the object has been removed from the cache.
  These cache misses are therefore associated with an extra storage cost $T$ (corresponding to the time the object was in the cache
  without being requested again after the most recent earlier request).
  Now, for
  \revrev{any}{an}
  arbitrary
  \revrev{arrival pattern}{request sequence}
  $\mathcal{A}$, we can bound the cost of this policy as follows:
  $C^{window}_{M,T} \le R + \sum_{m=2}^M |G^A_m| R + \sum_{m=2}^M |G^B_m| R + \sum_{m=2}^M |G^C_m| R + \sum_{i \in G^A_+} a_i + \sum_{i \in G^B_+} a_i + |G^C_+|(R+T) + T$.
  Now, noting that (i) $\sum_{m=2}^M (|G^A_m| + |G^B_m|) \le (M_1) (|G^A_{2}| + |G^B_{2}|)$,
  (ii) $\sum_{m=2}^M |G^C_{m}| + |G^C_{+}| = |G^A_{2}| + |G^B_{2}|$,
  (iii) $|G^C_{+}| = |G^A_{2}| + |G^B_{2}| - \sum_{m=2}^M |G^C_{m}| \le |G^A_{2}| + |G^B_{2}|$, and
  (iv) $\sum_{i \in G^B_{+}} a_i \le |G^B_{+}|T$,
  we can write $C^{window}_{M,T} \le R + M (|G^A_{2}|+|G^B_{2}|) R + (|G^A_{2}|+|G^B_{2}|) T + |G^B_{+}|T + \sum_{i \in S^A_+} a_i +T$.
  Similarly, for the same
  \revtwo{arrival}{request}
  pattern $\mathcal{A}$, the cost of the {\em optimal offline} policy can be bounded as follows:
  $C_{opt}^{offline} = R + \sum_{m=2}^M \sum_{i \in G^A_m} a_i + \sum_{m=2}^M |G^B_m| R + \sum_{m=2}^M |G^C_m| R + \sum_{i \in G^A_+} a_i + |G^B_+| R + |G^C_+| R \le R + (|G^A_{2}|+|G^B_{2}|) R + |G^B_{+}|T + \sum_{i \in S^A_+} a_i$,
  where we have used that
  (i) $\sum_{i \in G^A_m} a_i \ge 0$,
  (ii) $|G^B_m| \le 0$, and
  (iii) $\sum_{m=2}^M |G^C_{m}| + |G^C_{+}| = |G^A_{2}| + |G^B_{2}|$.
  Taking the ratio
  \begin{align}
    \frac{C^{window}_{M,T}}{C_{opt}^{offline}} & \le {\textstyle{ \frac{R + M (|G^A_{2}|+|G^B_{2}|) R + (|G^A_{2}|+|G^B_{2}|) T + |G^B_{+}|T + \sum_{i \in S^A_+} a_i +T}{R + (|G^A_{2}|+|G^B_{2}|) R + |G^B_{+}|T + \sum_{i \in S^A_+} a_i}}} \nonumber\\
    & \le \frac{R + M (|G^A_{2}|+|G^B_{2}|) R + (|G^A_{2}|+|G^B_{2}|) T + |G^B_{+}|T + T}{R + (|G^A_{2}|+|G^B_{2}|) R + |G^B_{+}|T}.
  \end{align}
  \revfour{As earlier,}{it is easy to show that, as earlier,}
  this ratio is minimized when $T=R$,
  for which it is bounded by $(M+1)$ when $|G^B_{+}|=0$ and $(|G^A_{2}|+|G^B_{2}|) \rightarrow \infty$.
  It is trivial to see that the same request pattern (but with batches separated by
  \revfour{at least}{more than}
  $T$ rather than $R$)
  results in the worst case being achieved.  This shows that the bound is tight.

\end{proof}

\subsection{Proof Theorem~\ref{thm:cache-Mth}: Dual-window on $2^{nd}$}

We next prove Theorem~\ref{thm:cache-Mth}, which specifies the worst-case properties of {\em dual-window on $2^{nd}$}.

\begin{proof}
  Case $W \le T \le R$:
  Let us define the following sets for
  \revfour{$2 \le i \le |\mathcal{A}|$:}{$2 \le i \le N$:}
  $S^A_2 = \{i | a_i < W \land i~\textrm{is~a~2}^{nd}~\textrm{candidate} \}$,
  $S^B_2 = \{i | W \le a_i < T \land i~\textrm{is~a~2}^{nd}~\textrm{candidate} \}$,
  $S^C_2 = \{i | T \le a_i < R\land i~\textrm{is~a~2}^{nd}~\textrm{candidate} \}$,
  $S^D_2 = \{i | R \le a_i \land i~\textrm{is~a~2}^{nd}~\textrm{candidate} \}$,
  $S^A_+ = \{i | a_i < W \land i \notin S^A_2 \}$,
  $B^B_+ = \{i | W \le a_i < T \land i \notin S^B_2 \}$,
  $S^C_+ = \{i | T \le a_i < R \land i \notin S^C_2 \}$, and
  $S^D_+ = \{i | R \le a_i \land i \notin S^D_2 \}$.
  We can now write $C^{M=2}_{W,T} = R + (|S^A_2|+|S^B_2|+|S^C_2|+|S^D_2|) R + \sum_{i \in S^A_+} a_i + \sum_{i \in S^B_+} a_i + (|S^C_+|+|S^D_+|)(R+T)$,
  and $C_{opt} = R + \sum_{i \in S^A_2} a_i +  \sum_{i \in S^B_2} a_i + \sum_{i \in S^C_2} a_i + |S^D_2| R +  \sum_{i \in S^A_+} a_i +  \sum_{i \in S^B_+} a_i + \sum_{i \in S^C_+} a_i + |S^D_+| R$.
  Now, noting that $|S^C_+|+|S^D_+|=|S^A_2|$, and making similar simplifications
  as in prior proofs, it is easy to show that:
  \begin{align}
    \frac{C^{window}_{M=2,W,T}}{C_{opt}^{offline}} \le \frac{R + 2|S^A_2|R + |S^A_2|T + |S^B_2|R + |S^C_2|R}{R + |S^A_2|T + |S^B_2|W + |S^C_2|T}.
  \end{align}
  This expression is minimized when $W \rightarrow T$ and $T \rightarrow R$.
  With these choices,
  the worst-case bound of 3 is achieved when $|S^B_2|=|S^C_2|=0$ and $|S^A_2| \rightarrow \infty$
  (and the same worst-case
  \revrev{arrival pattern}{request sequence}
  as used for the single parameter version).

  Case $W \le R \le T$:
  Let us define the following sets for
  \revfour{$2 \le i \le |\mathcal{A}|$:}{$2 \le i \le N$:}
  $H^A_2 = \{i | a_i < W \land i~\textrm{is~a~2}^{nd}~\textrm{candidate} \}$,
  $H^B_2 = \{i | W \le a_i < R \land i~\textrm{is~a~2}^{nd}~\textrm{candidate} \}$,
  $H^C_2 = \{i | R \le a_i < T\land i~\textrm{is~a~2}^{nd}~\textrm{candidate} \}$,
  $H^D_2 = \{i | T \le a_i \land i~\textrm{is~a~2}^{nd}~\textrm{candidate} \}$,
  $H^A_+ = \{i | a_i < W \land i \notin H^A_2 \}$,
  $H^B_+ = \{i | W \le a_i < R \land i \notin H^B_2 \}$,
  $H^C_+ = \{i | R \le a_i < T \land i \notin H^C_2 \}$, and
  $H^D_+ = \{i | T \le a_i \land i \notin H^D_2 \}$.
  We can now write $C^{M=2}_{W,T} = R + (|H^A_2|+|H^B_2|+|H^C_2|+|H^D_2|) R + \sum_{i \in H^A_+} a_i + \sum_{i \in H^B_+} a_i + \sum_{i \in H^C_+} a_i + |H^D_+|(R+T)$,
  and $C_{opt} = R + \sum_{i \in H^A_2} a_i +  \sum_{i \in H^B_2} a_i + (|H^C_2|+|H^D_2|) R +  \sum_{i \in H^A_+} a_i +  \sum_{i \in H^B_+}\ a_i + (|H^C_+|+|H^D_+|) R$.
  Now, noting
  that $|H^D_+|=|H^A_+|$, and making similar simplifications
  as in prior proofs, it is easy to show that:
  \begin{align}
    \frac{C^{window}_{M=2,W,T}}{C_{opt}} \le \frac{R + 2|H^A_2|R + |H^A_2|T + |H^B_2|R}{R + |H^A_2|R + |H^B_2|W}.
  \end{align}
  This expression is minimized when $W \rightarrow T$ and $T \rightarrow R$.
  With these choices,
  the worst-case bound of 3 is achieved when $|H^B_2|=0$ and $|H^A_2| \rightarrow \infty$.

  Case $R \le W \le T$:
  Let us define the following sets for
  \revfour{$2 \le i \le |\mathcal{A}|$:}{$2 \le i \le N$:}
  $G^A_2 = \{i | a_i < R \land i~\textrm{is~a~2}^{nd}~\textrm{candidate} \}$,
  $G^B_2 = \{i | R \le a_i < W \land i~\textrm{is~a~2}^{nd}~\textrm{candidate} \}$,
  $G^C_2 = \{i | W \le a_i < T\land i~\textrm{is~a~2}^{nd}~\textrm{candidate} \}$,
  $G^D_2 = \{i | T \le a_i \land i~\textrm{is~a~2}^{nd}~\textrm{candidate} \}$,
  $G^B_+ = \{i | a_i < R \land i \notin G^A_2 \}$,
  $G^A_+ = \{i | R \le a_i < W \land i \notin G^B_2 \}$,
  $G^C_+ = \{i | W \le a_i < T \land i \notin G^C_2 \}$, and
  $G^D_+ = \{i | T \le a_i \land i \notin G^D_2 \}$.
  We can now write $C^{window}_{M=2W,T} = R + (|G^A_2|+|G^B_2|+|G^C_2|+|G^D_2|) R + \sum_{i \in G^A_+} a_i + \sum_{i \in G^B_+} a_i + \sum_{i \in G^C_+} a_i + |G^D_+|(R+T)$,
  and $C_{opt}^{offline} = R + \sum_{i \in G^A_2} a_i + (|G^B_2|+|G^C_2|+|G^D_2|) R + \sum_{i \in G^A_+} a_i +  (|G^B_+| + |G^C_+| + |G^D_+|) R$.
  Now, noting that $|G^D_+|=|G^A_2|+|G^B_2|$, and making similar simplifications
  as in prior proofs, it is easy to show that:
  \begin{align}
    \frac{C^{window}_{M=2,W,T}}{C_{opt}^{offline}} \le \frac{R + 2|G^D_+|R + (|G^B_+|+|G^C_+|+|G^D_+|)T}{R + (|G^B_+|+|G^C_+|+|G^D_+|)R}.
  \end{align}
  This expression is minimized when $T \rightarrow R$ (and $W=T=R$).
  With these choices,
  the worst-case bound of 3 is achieved when $|G^B_+|=|G^C_+|=0$ and $|G^D_+| \rightarrow \infty$.

\end{proof}

\end{document}